\documentclass[acmsmall,screen,dvipsnames,x11names
,nonacm
]{acmart}

\usepackage{mathtools}
\usepackage{amsthm}

\usepackage{cleveref}

\usepackage{graphicx} 
\usepackage[disable]{todonotes}

\def\extended{0}
\usepackage{olmacros}

\usepackage{mathpartir}
\usepackage{enumerate}
\usepackage{enumitem}

\usepackage[T1]{fontenc}
\usepackage{graphicx}
\usepackage[font=small,skip=0pt]{caption}
\setlength{\belowcaptionskip}{-.75\baselineskip}
\usepackage{xspace}
\newcommand\TOL{\textsf{TOL}\xspace}
\usepackage{xr}
\usepackage{thmtools} 

\citestyle{acmauthoryear}
\makeatletter
\def\@acmplainindent{0pt}
\def\@acmdefinitionindent{0pt}
\def\@proofindent{\noindent}
\makeatother

\ifx\extended\undefined
\externaldocument{extended}
\fi

\usetikzlibrary{patterns,decorations.pathreplacing,arrows,arrows.meta,decorations.pathmorphing,positioning,fit,trees,shapes,shadows,automata,calc,calligraphy,tikzmark}

\usetikzlibrary{cd}
\tikzcdset{3d cd/.style={/tikz/every odd row/.append style={xshift={#1}}}}

\tikzset{
	contrapos/.style={
		draw=mygreen,
		thick,
		dotted,
		line width=.75pt
	},
	implication/.style={
		-implies,double equal sign distance, thick,shorten <=1pt,shorten >=1pt
	},
}

\newtheorem{remark}{Remark}

\title{Total Outcome Logic: Unified Reasoning for a Taxonomy of Program Logics}

\author{James Li}
\email{jl3594@cornell.edu}
\affiliation{%
  \institution{Cornell University}
  \country{USA}
}

\author{Noam Zilberstein}
\email{noamz@cs.cornell.edu}
\orcid{0000-0001-6388-063X}
\affiliation{%
  \institution{Cornell University}
  \country{USA}
}

\author{Alexandra Silva}
\email{alexandra.silva@cornell.edu}
\orcid{0000-0001-5014-9784}
\affiliation{%
  \institution{Cornell University}
  \country{USA}
}


\definecolor{mygreen}{HTML}{3C8031}
\definecolor{myorange}{HTML}{F58137}
\definecolor{rot}{RGB}{200,34,84}

\newcommand{\mynode}[2]{
\scriptsize
\def\arraystretch{.75}
\arraycolsep=0pt
\begin{array}{cc}
  #1
  \\
  {\color{NavyBlue}\tiny\textsf{#2}}
\end{array}
}
\newcommand{\taxnode}[3]{
\scriptsize
\def\arraystretch{.9}
\arraycolsep=0pt
\begin{array}{cc}
  #1
  \\
  \scalebox{.75}{$ #2 $}
  \\
  {\color{NavyBlue}\tiny\textsf{#3}}
\end{array}
}

\begin{document}

\begin{abstract}
  While there is a long tradition of reasoning about (non)termination in program analysis, specialized logics are typically needed to give different termination criteria. This includes partial correctness, where termination is not guaranteed, and total correctness, where it is guaranteed. We present \emph{Total Outcome Logic} (\TOL), a single logic which can express the full spectrum of termination conditions and program properties offered by the aforementioned logics.
\TOL extends (non)termination and (in)correctness reasoning across different kinds of branching effects, so that a single metatheory powers this reasoning in different kinds of programs, including  nondeterministic and probabilistic. 
We also show that \TOL subsumes several recently created taxonomies of (in)correctness logics, so that many different kinds of properties can be proven with a single unified theory.
\end{abstract}

\maketitle

\section{Introduction}
\label{sec:intro}

The study of program logics is heavily intertwined with the \emph{computational effects} that can occur in the underlying programs. Prime examples of effects include nondeterminism and nontermination.
In terms of (non)termination, traditional program logics such as Hoare Logic initially revolved around \emph{partial correctness}---properties that hold \emph{if} the program terminates \cite{hoarelogic,FLOYD67}. Partial correctness was attractive, as it offered a simple way to describe loop behaviors via invariants while still providing \emph{safety}---the property that the program never displays undesirable behaviors \cite{lamport1977proving}.
Soon after, the notion of \emph{total correctness} was introduced, to additionally guarantee termination \cite{manna1974axiomatic}. More recently, other program logics for proving the possibility of nontermination have been explored too \cite{raad2024nontermproving}.

Orthogonally, different logics have been proposed to deal with nondeterminism in various ways. Traditional \emph{correctness} logics use a demonic interpretation of nondeterminism where the postcondition applies to all reachable states \cite{hoarelogic}. However, interest in \emph{possible correctness} \cite{wpp} and \emph{incorrectness} \cite{il,moller2021algebra,outcome} has emerged, which demand angelic nondeterminism (the postcondition applies to \emph{some} reachable state).

To make sense of these different logics, various recent efforts have been made to build taxonomies highlighting the similarities and differences between them \cite{verscht2025taxonomy,cousot2024calculational,ascari2023sufficient,zhang:quantitative-sp}.
In particular, \citet{verscht2025taxonomy} identified the following three dimensions for program logics:
\begin{enumerate}
\item \emph{Correctness} (being able to reach) vs. \emph{incorrectness} (reachability)
\item \emph{Totality} (termination) vs. \emph{partiality} (possible nontermination)
\item \emph{Angelic} vs. \emph{demonic} nondeterminism
\end{enumerate}
Though extensive, this characterization is incomplete. As \citet{verscht2025taxonomy} already acknowledge: after presenting a taxonomy of 16 logics, they identify two more \emph{in-between} ones! But even more dimensions are missing. For one, nondeterminism need not be treated in a strictly angelic or demonic fashion, but rather in a gradient that can be captured by considering multiple \emph{reachable outcomes}. As demonstrated by Outcome Logic \cite{outcome,zilberstein:completeOL,zilberstein2024outcome,zilberstein2024unified}, this more expressive treatment of nondeterminism is needed in order to build logics and static analyses that unify correctness \emph{and} incorrectness reasoning.
In addition, branching may not arise from nondeterminism at all, but rather from other computational effects such as probabilistic choice. So the natural question arises: \emph{is there a taxonomy that captures all these additional dimensions?}

 In this paper we take a different approach than \citet{verscht2025taxonomy}, \citet{cousot2024calculational}, and \citet{ascari2023sufficient}. We present a single expressive logic, which subsumes the aforementioned taxonomies, while also providing a higher degree of expressivity to deal with \emph{in-between} cases. As such, this paper is not only about understanding the similarities and differences between different logics, but also about providing a common framework for reasoning about those properties.

Compositionality is the key to developing these unified reasoning principles. We start with a compositional denotational semantics where sequential composition is defined \emph{monadically} and loops are defined as the least fixed point of a Scott continuous operation. However, our desire to reason about nontermination in conjunction with nondeterminism brings about well-known barriers to compositionality: unbounded nondeterminism makes loop semantics non-compositional \cite{plotkin:countable-nondeterminism}.
As an illustration, consider the command $x \coloneqq\bigstar$, which nondeterministically assigns a natural number to the variable $x$.
We now ask, does the following program terminate? 
\[
  x \coloneqq\bigstar \fatsemi \whl{x > 0}{x \coloneqq x-1}
\]
Based on naive intuitions, the answer appears to be yes; for any natural number $n$ assigned to $x$, the loop will terminate after precisely $n$ iterations. The problem is that we cannot make this argument \emph{compositionally}. Indeed, a nonterminating trace exists in every finite approximation of the loop. This means that standard techniques for proving termination such as loop variants and ranking functions will not suffice.


Curiously, the issues with unbounded choice do not arise in conjunction with other computational effects such as probabilistic choice. Termination of probabilistic programs has been studied extensively \cite{mciver2005abstraction,kaminski,mciver2018new,feng2023lower,majumdar2025sound}, and there are many examples of probabilistic programs with infinitely many outcomes, which \emph{almost surely terminate}---they terminate with probability 1. For example, one can write a probabilistic  program that computes a geometric distribution, where the probability that $x = n$ is $\frac1{2^n}$ for all $n\ge 1$. The probability of termination after $n$ iterations is $1 - \frac1{2^n}$, which clearly converges to 1 using a simple compositional proof.

In this paper, we identify the key difference between these types of choice. Probabilistic programs are \emph{conservative}---probability mass is treated like a resource, with the total amount of mass (1) being split between branches of computation and nontermination, so the weight of infinite traces can converge to 0 in the limit. Nondeterminism is \emph{indicative}---there is an infinite amount of total \emph{weight}, and spawning new branches does not consume anything.

Our formalism follows the weighted programming paradigm, where the possible program end states are assigned weights \cite{batz2022weighted,zilberstein:completeOL}. Varying the representations of the weights gives interpretations of different effects. For example, Boolean weights indicate whether or not a state is a possible outcome of a nondeterministic program whereas real-valued weights are used to quantify the probabilities of outcomes in a probabilistic program.
Weights are elements of a \emph{semiring}, meaning that they can be added and multiplied to provide semantics to branching and sequential composition, respectively. In conservative weightings---like probabilistic programs---it is possible to have compositional inference rules for termination with unbounded choice, whereas in indicative weightings it is not possible.
Our full list of contributions is as follows:
\begin{itemize}[leftmargin=*]
  \item Building on Outcome Logic \cite{zilberstein:completeOL}, we define a generic program semantics, which can also quantify the weights of infinite traces (\Cref{sec:semantics}). We show how the ability of each model to make unbounded choices corresponds to particular properties of the underlying semiring.
  \item We extend Outcome Logic with the ability to reason about nontermination to obtain Total Outcome Logic (\Cref{sec:logic}). Unlike prior work, this allows us to prove that programs always terminate, or sometimes do not terminate. 
  \item We show that \cites{verscht2025taxonomy} entire taxonomy of \emph{Hoare-like} logics can be expressed in Total Outcome Logic (\Cref{sec:vk-taxonomy}).
  \item We show that a second taxonomy---\emph{the \citet{cousot2024calculational} cube}---can also be expressed in Total Outcome Logic and compare it to the aforementioned taxonomy of \citeauthor{verscht2025taxonomy} (\Cref{sec:cousot}).
  \item We present a new taxonomy of correctness and incorrectness logics, showing that Total Outcome Logic gives more flexibility to express reusable specifications, making it a good foundation for static analysis (\Cref{sec:new-taxonomy}).
  \item We derive proof rules to reason more easily about special cases (\Cref{sec:derived}).
  \item We present a variety of case studies in \Cref{sec:examples} to demonstrate how to reason about termination and nontermination with Total Outcome Logic.
\end{itemize}
Omitted proofs are included in the appendix.
\section{Weighted Program Semantics}
\label{sec:semantics}

We consider a simple {\em weighted} programming language in the style of \citet{batz2022weighted} and \citet{zilberstein:completeOL}. We will present the syntax and semantics in this section, while highlighting our new addition: the ability to quantify weights of infinite traces in a denotational model.

\begin{figure}
\begin{align*}
  \textsf{Commands} \ni  C &\Coloneqq \:\: \skp 
    \:\:|\:\: \assume e
    \:\:|\:\: a \in \mathsf{Act}
    \:\:|\:\: C_1 \seq C_2
    \:\:|\:\: C_1 + C_2 
    \:\:|\:\: \iter{C}{e_1}{e_2}
    \\
  \textsf{Guards} \ni      e \ &\Coloneqq \:\: b \mid u\in U
    \\
  \textsf{BExp} \ni      b \ &\Coloneqq \:\: \tru \mid \fls \mid b_1\lor b_2 \mid b_1\land b_2 \mid \lnot b \mid t\in\mathsf{Test}
\end{align*}
\caption{Syntax of a simple weighted imperative language, parametric on a set of basic program actions $\textsf{Act}$ and a set of weights $U$.}\label{fig:lang}
\end{figure}

\subsection{Syntax}\label{sec:syntax}
The syntax is presented in~\Cref{fig:lang}. The language includes the usual syntax for the program that does nothing $\skp$, assertions $\assume e$, basic (uninterpreted) program actions $a\in \textsf{Act}$, and sequential composition $\seq$. We also include in the language a nondeterministic choice $C_1+C_2$, which is a program that nondeterministically chooses one of the branches to execute. One difference worth highlighting is that the $\assume e$ command takes a guard $e$ that is not just a usual Boolean expression. In this language guards can be Boolean---the expressions in $\textsf{BExp}$ are built from a basic set of tests $\mathsf{Test}$ using the usual Boolean operations---but they can also be {\em weights} $u\in U$. The intuition behind $\assume u$ is that the computation trace in which this is executed is weighed by $u$. This will become clearer below when we discuss the formal semantics.

Using Boolean tests, we can define syntactic sugar for if statements, shown below.
The syntax for a loop $\iter{C}{e_1}{e_2}$, due to \citet{zilberstein:completeOL}, is slightly unusual as it has two guards: $e_1$ is the guard for the body of the loop $C$ to continue executing, whereas $e_2$ is the guard for the exit of the loop. In a typical syntax, loops usually only have Boolean guard $b$ and  the exit guard is simply the negation of $b$. This sort of guarded iteration can be encoded in our language as follows:
\[
  \iftf b{C_1}{C_2}
  \triangleq
  (\assume b\fatsemi C_1) + (\assume{\lnot b}\fatsemi C_2)
  \qquad
  \whl bC \triangleq \iter Cb{\lnot b}
\]

\subsection{Semantics as Weighting Functions}\label{sec:sem-weight}
In order to provide the semantics of our language we first need to introduce a few algebraic definitions. To make sense of the weights in the context of computation traces, we will need to have structure on the set $U$ that enables us to combine different weights along a computation trace and across computation traces. In other words, how do we give semantics to expressions like: $(\assume 3) \seq a \seq (\assume 5)$ or $(\assume 3) \seq a_1 + (\assume 5) \seq a_2$?  
We will require the set of weights $U$ to have the structure of a {\em semiring}, which is an algebraic structure that combines two {\em monoids}. 
\begin{definition}[Monoid]
    A \textit{monoid} $\langle U, +, \zero \rangle$ consists of a carrier set $U$, an associative binary operation $+: U \times U \rightarrow U$, and an identity element $\zero$ ($u + \zero = \zero + u = u$). If $+: U \rightharpoonup U$ is partial, then $\langle U, +, \zero \rangle$ is a \textit{partial} monoid. If $+$ is commutative ($u + v = v + u$), then the monoid is \textit{commutative}. 
\end{definition}

\begin{definition}[Semiring]
    A (partial) \textit{semiring} $\langle U, +, \cdot, \0, \1 \rangle$ is an algebraic structure such that:
    \begin{enumerate}
        \item $\langle U, +, \0 \rangle$ is a (partial) commutative monoid and $\langle U, \cdot, \1 \rangle$ is a monoid.
        \item Distributivity: $u \cdot (v + w) = uv + uw$ and $(u + v) \cdot w = uw + vw$
        \item Annihilation: $\0 \cdot x = x \cdot \0 = \0$
    \end{enumerate}
\end{definition}

\begin{example}[Semirings]
The following semirings encode different kinds of computation.
\begin{enumerate}[leftmargin=*]

\item The Boolean semiring $\langle \mathbb{B}, \lor, \land, 0, 1\rangle$ uses Boolean $\mathbb{B} = \{0, 1\}$ weights, which indicate whether states are possible outcomes. Addition is disjunction and multiplication is conjunction.

\item The probabilistic semiring $\langle [0,1], +, \cdot, 0, 1\rangle$ uses real-valued probabilities as weights to quantify the likelihoods of outcomes. Addition is standard arithmetic addition, but it is partial since it is undefined if the sum is above 1. Multiplication is standard arithmetic multiplication.

\item The natural number semiring $\langle \mathbb{N}^\infty, +, \cdot, 0, 1\rangle$ uses natural numbers (plus $\infty$) as weights, with addition and multiplication given by standard arithmetic operations. This encodes nondeterminism, where we count the traces leading to each outcome, \ie as multisets of outcomes. 

\end{enumerate}
For more examples, refer to \citet{batz2022weighted} and \citet{zilberstein:completeOL}.
\end{example}

In order to define the semantics we will assume a set of program states $\Sigma$ and then assign to each command a map from program states to weighted, possibly nonterminating, traces. More precisely, we will build a denotational object in two steps. First, to encode information about nontermination, we expose divergence as a possible outcome of running a program. We will represent this with the element $\nonterm$, where for any set $S$, we write $S_{\nonterm} \triangleq S \cup \{\nonterm\}$. Second, we define weighting functions, which assign a weight to all states $\sigma\in\Sigma$, and to nontermination.
\begin{definition}[Weighting Function]
    Given a set of program states $\Sigma$ and a (partial) semiring $\mathcal{A} = \langle U, +, \cdot, \zero, \one \rangle$, the set of weighting functions is 
    \[
        \WA^\nonterm(\Sigma) \triangleq \bigg\{m: \Sigma_{\nonterm} \rightarrow U ~\bigg|~ |m| \text{ is defined and $\supp(m)$ is countable} \bigg\}
    \]
    The support $\supp(m) \triangleq \{ \sigma \mid m(\sigma) \neq \zero \}$ is the set of states to which $m$ assigns nonzero weight and the $\textit{mass}$ of $m \in \WA^\nonterm(\Sigma)$ written as $|m| \triangleq \sum_{\sigma \in \supp(m)} m(\sigma)$ is the cumulative weight in $m$. 
\end{definition}
We take a weighting function to represent a configuration in a computation, which may have branched into many different outcomes, including the nontermination outcome $\nonterm$. Captured here are two distinct computational effects: weighted branching and nontermination. We can now assign to each program command $C$ a denotation $\de{C}: \Sigma \rightarrow \WA^\infty(\Sigma)$, as shown in \autoref{fig:semantics}\footnote{The distinction between $\WA^\nonterm(\Sigma)$ and $\WA^\infty(\Sigma)$ will be explained when we discuss the semantics of loops.}.
\begin{figure}[t]
\begin{align*}
    \de{\skp}(\sigma) &\triangleq \eta(\sigma) &  \de{a}(\sigma) &\triangleq \de{a}_{\Act}(\sigma) \\
    \de{C_1 \seq C_2}(\sigma) &\triangleq \dem{C_2}(\de{C_1}(\sigma)) &
    \de{C_1 + C_2}(\sigma) &\triangleq \de{C_1}(\sigma) + \de{C_2}(\sigma) \\
    \de{\assume e}(\sigma) &\triangleq \de{e}(\sigma) \cdot \eta(\sigma) &
    \de{\iter{C}{e_1}{e_2}}(\sigma) &\triangleq \lfp f.\:\left(\Phi_{\iter{C}{e_1}{e_2}}\right)(f)(\sigma)
\end{align*}
\[
   \text{where}\quad \Phi_{\iter{C}{e_1}{e_2}}(f)(\sigma) \triangleq \de{e_1}(\sigma) \cdot f^{\dagger}(\de{C}(\sigma)) + \de{e_2}(\sigma)\cdot \eta(\sigma)
\]
\caption{Denotational semantics $\de{C}: \Sigma \rightarrow \WA^\infty(\Sigma_{\nonterm})$ of programs, parametric on a set of program states $\Sigma$; an interpretation $\de{a}_{\Act}\colon \Sigma \to \WA^\infty(\Sigma)$ for atomic actions $a\in \mathsf{Act}$; and $\Test \subseteq \bb{2}^{\Sigma}$, the set of primitive tests. }
\label{fig:semantics}
\end{figure}

We left several operations underspecified in \autoref{fig:semantics}, which we will now explain in detail, from simpler to more complex: 1. semantics of guards and branching; 2. monadic structure $\eta$ and $(-)^\dagger$ that provides semantics of $\skp$ and sequential composition; 3. fixpoint semantics of loops. 

\subsubsection*{Guards and Branching.} Recall that guards $e$ can be Boolean expressions $b$ or weights $u \in U$. We define the semantics of guard expressions $e$ as $\de{e} \colon \Sigma \rightarrow U$, where $\de{b}(\sigma) \triangleq \detest{b}(\sigma)$ and $\de{u}(\sigma) \triangleq u$.
The semantics of Boolean guards $\detest{b} \colon \Sigma \rightarrow \{\0, \1\}$ is a simple extension of the set-element predicate (where $\0, \1$ are the semiring identities), since primitive tests are sets of program states $\Test \subseteq \bb{2}^{\Sigma}$. The full definition is available in \Aref{app:defs}.

The semantics of $\assume e$ uses a product operation $\cdot$ on weighting functions whereas a   $+$ operation is used in the definition of $\de{C_1+C_2}$. These are operations on weighting functions that  are lifted from the semiring $+$ and $\cdot$ pointwise: $(m_1 + m_2)(\sigma) \triangleq m_1(\sigma) + m_2(\sigma)$ and $(u\cdot m)(\sigma) \triangleq u\cdot m(\sigma)$.

\subsubsection*{Monadic structure of weighting functions.} We next turn to the semantics of sequential composition and $\skp$, which is achieved using a monad \cite{manes1976algebraic,moggi91}.

\begin{definition}[Monads]
    A \textit{Kleisli triple} $\langle T, \eta, (-)^{\dagger} \rangle$ in the category \textbf{Set} consists of a functor $T: \textbf{Set} \rightarrow \textbf{Set}$, a natural transformation $\eta: \Id \Rightarrow T$, and, for any sets $X$ and $Y$, a map $(-)^{\dagger}: (X \rightarrow T(Y)) \rightarrow T(X) \rightarrow T(Y)$ such that $\eta_X^{\dagger} = \id_X$, $f^{\dagger} \circ \eta = f$, and $f^{\dagger} \circ g^{\dagger} = (f^{\dagger} \circ g)^{\dagger}$. If such a triple exists, then the functor $T$ is a monad.
\end{definition}

Let the \textit{characteristic} weighting function for an element $x \in X_\nonterm$ be $I_x: X_\nonterm \rightarrow U$ such that $I_x(y) = \one$ if $x = y$ and $\zero$ otherwise. 
\begin{definition}
    For any semiring $\mathcal{A}$, $\langle \WA^\nonterm, \eta, (-)^{\dagger} \rangle$ is a Kleisli triple, where 
    \begin{align*}
        \eta_X(x)(y) &\triangleq I_x(y) = 
        \begin{cases}
          \one &\text{if } x = y \\
          \zero &\text{if } x \neq y
        \end{cases}
        &
        f^{\dagger}(m)(y) &\triangleq \smashoperator{\sum_{x \in \supp(m) \cap \Sigma}} m(x) \cdot f(x)(y) + m(\nonterm) \cdot I_\nonterm(y) 
    \end{align*}
\end{definition}

Note how the Kleisli extension $f^{\dagger}(m)$ discriminates between weight allocated to program states (in $\Sigma$) and weight allocated to $\nonterm$ by $m$. This anticipates how we model sequencing programs $C_1$ and $C_2$ (see \cref{fig:semantics}); intuitively, the terminal outcomes of $C_1$ are fed as input to $C_2$ whereas information about nontermination of $C_1$ must be ``carried through'' the second command. Although we define the monad operations directly here, $\WA^\nonterm$ can also be defined via a distributive law \cite{beck1969distributive} between the weighting function monad of Outcome Logic \cite{zilberstein:completeOL} and the $+1$ or \emph{error} monad, which is known to compose with all other monads \cite{composingmonads}.

\subsubsection*{Loops.}  Finally, we discuss the semantics of iterated computations, which requires $\WA$ to be a directed complete partial order (dcpo).
We use a \emph{fusion order}: $m_1 \sqsubseteq m_2$ if and only if the weight of each program state increases and the weight of nontermination decreases. This is similar to the fixpoint maximal trace semantics of \citet{cousot2002constructive} in which finite traces are compared covariantly and infinite traces are compared contravariantly.
This order can also be viewed as an \emph{approximation order} \cite{abramsky1995domain}---$m_1 \sqsubseteq m_2$ means that $m_2$ contains more information about the program's behavior than $m_1$. As such, we initially consider the behavior to be nontermination (bottom), and the approximation gets larger as more and more terminating executions are discovered.

\begin{definition}[Fusion Order] \label{def:fusion-order}
    We define the \textit{fusion order} $\sqsubseteq$ on $\WA^\nonterm(\Sigma)$ as:
        \[
        m_1 \sqsubseteq m_2 \quad\text{iff}\quad
            m_1(\sigma) \leq m_2(\sigma)\ \text{for all $\sigma \in \Sigma$, and}\
            m_1(\nonterm) \geq m_2(\nonterm)
    \]
\end{definition}
Above, $\leq$ is the \emph{natural order} on $\mathcal{A}$: $u \leq v$ iff $u + w = v$ for some $w$. If $\langle U, \leq \rangle$ is a partial order, then the semiring is \textit{naturally-ordered}. The semiring is \textit{bounded} if there exists a top element $\top$ such that $u \leq \top$ for all $u \in U$. This top element is sometimes required to be an infinity, as defined below.
   
\begin{definition}[Infinity \cite{Golan1999SemiringsAT}]
    An element $w$ in semiring $\langle U, +, \cdot, \0, \1 \rangle$ is an \textit{infinity} if it is additively absorbing: $u + w = w + u = w$ for all $u\in U$. In addition, $w$ is a \textit{strong infinity} if it is infinite as well as multiplicatively absorbing: $x \cdot w = w \cdot x = w$ for all $x \neq \zero$.
\end{definition}

We also need a notion of {\em bounded weighting}, which generalizes {bounded nondeterminism} and is crucial for ensuring the standard continuity results for our semantics.
\begin{definition}[Bounded Weighting]
    We define the following classes of weighting functions:
    \begin{itemize}
        \item $\WA^+(\Sigma) \triangleq \{m \in \WA^\nonterm(\Sigma) \mid \supp(m) \text{ is finite}\}$
        \item $\WA^\omega(\Sigma) \triangleq \{m \in \WA^\nonterm(\Sigma) \mid m(\nonterm) = \sup_{m' \in E(m)} m'(\nonterm)\}$. 
        \item $\WA^\infty(\Sigma) \triangleq \WA^+(\Sigma) \cup \WA^\omega(\Sigma)$ 
    \end{itemize}
    For any particular $m \in \WA^\nonterm(\Sigma)$, let $E(m)$ be the set of weighting functions that assign equal weight to all program states in $\Sigma$ (\ie that can only differ in the weight on nontermination $\nonterm$):
    \[
        E(m) \triangleq \{m' \in \WA^\nonterm(\Sigma) \mid \forall \sigma \in \Sigma.\: m'(\sigma) = m(\sigma) \}
    \]
    A mapping $\Sigma \rightarrow \WA^\infty(\Sigma)$ is said to exhibit \textbf{bounded weighting}.
\end{definition}

Our notion approximates the operational behavior captured by bounded nondeterminism, which stipulates that a nondeterministic program can only select between finitely many branches at each step of execution. Drawing from the definition of \citet{back-unbounded} for a powerdomain semantics, this means that running a nondeterministic program must produce a set of outcomes that is either (i) finite or (ii) exhibits nontermination (such that infinite sets of outcomes must be generated by some nonterminating trace). 
The sets $\WA^+(\Sigma)$ and $\WA^\omega(\Sigma)$ are the analogue of Back's two cases, respectively; weightings produced by programs must either:
\begin{enumerate}
    \item Describe a finite set of outcomes (finite support) (that is, be in $\WA^+(\Sigma)$); or
    \item Maximize the weight on nontermination $\nonterm$  (that is, be in $\WA^\omega(\Sigma)$).
\end{enumerate}
This can also be seen as a generalization of the Plotkin powerdomain, which includes all finite sets of states and infinite states that contain $\nonterm$ \cite{plotkin1976powerdomain,s_ondergaard1992non}.
Here, the choice to \textit{maximize} nontermination lies in our approach to tracking the presence of nontermination in weighting functions. In particular, we support two semantic paradigms and correspondingly require the semiring of weights $\mathcal{A}$ to belong to one of two classes.
\begin{enumerate}[leftmargin=*]
    \item \textbf{\textit{Conservative Weighting}}: $\mathcal{A}$ is partial and bounded with \textit{non-idempotent} top element $\top$ such that, for all $x \in X\setminus\{\zero\}$, $x + \top$ and $\top + x$ are not defined. So, only $\top + \0 = \0 + \top = \top$. In this case, we choose our weighting functions to be drawn from $\mathcal{D} \triangleq \{m \in \WA^\infty(\Sigma) \mid |m| = \top \}$. Thus, programs under a \textit{conservative weighting} scheme always produce a weighting function of fixed mass $\top$, which is conserved across sequences of programs. Examples include probabilistic and deterministic programs, for which the mass of weighting functions is fixed at 1. 

    \item \textbf{\textit{Indicative Weighting}}: $\mathcal{A}$ is total and bounded with a \textit{strongly infinite} top element $\top$.
    Since the semiring is total, we can rewrite 
    \[
        \WA^\omega(\Sigma) \triangleq \{m \in \WA^\nonterm(\Sigma) \mid m(\nonterm) = \top\}
    \]
    and limit weighting functions to $\mathcal{D} \triangleq \WA^\infty(\Sigma)$. We refer to this as an \textit{indicative weighting} scheme in that the weight on $\nonterm$ serves as an indicator for nontermination; we assign a weight of $\0$ if the computation does not produce a diverging trace, and assign infinite weight otherwise. 
\end{enumerate}
At first, the introduction of these restrictions may seem to fetter the generalizability of our approach. However, we note that any semiring of weights $\mathcal{A} = \langle U, +, \cdot, \0, \1 \rangle$ that does not belong to either of the above classes can be extended with a strongly infinite element \cite[p. 166]{Golan1999SemiringsAT}. In particular, we adjoin to $\mathcal{A}$ an element $\infty \not\in U$ to accommodate an \textit{indicative} weighting scheme. Addition and multiplication can be defined such that 
 $\infty + u = u + \infty = \infty$ for all $u \in U$;
 $\infty \cdot u = u \cdot \infty = \infty$ for all $u \in U\setminus\{\0\}$; and
 $\infty \cdot \0 = \0 \cdot \infty = \0$.
Doing so limits our (indicative) semantics to a coarser representation of program behavior, foregoing attempts at tracking finer weights with which a program can diverge, and instead only maintaining whether nontermination is possible or not. Nevertheless, this is sufficient for most purposes, and in the case for some semirings, necessary to preserve continuity, which is defined below.

\begin{definition}[Scott-Continuity]
    A semiring $\langle U, +, \cdot, \0, \1 \rangle$ with order $\le$ is \textit{Scott-continuous} if for any directed set $D \subseteq U$ (where all pairs of elements in $D$ have a supremum):
    \[
        \sup_{u \in D} (u + v) = (\sup D) + v \quad 
        \sup_{u \in D} (u \cdot v) = (\sup D) \cdot v \quad 
        \sup_{u \in D} (v \cdot u) = v \cdot (\sup D)
    \]
    The semiring is \textit{lower Scott-continuous} if the same operations preserve infima.
\end{definition}
For Scott-continuous semirings, we can define an infinite sum operation over an index set $I$ as the supremum of sums over all finite subsets of $I$.
    \[
        \sum_{i \in I} u_i = \sup\:\left\{\sum_{i \in J} u_i \mid J \subseteq I \text{ finite}\right\}
    \]
This construction yields a \emph{complete} semiring, meaning that the sum operator obeys the following desirable laws \cite{kuich2011algebraic}:
    \begin{enumerate}
        \item For $I = \{i_1, \ldots, i_n\}$ finite, $\sum_{i \in I} u_i = u_{i_1} + \ldots + u_{i_n}$.
        \item If $I = \bigcup\limits_{k \in K} J_k$ is a disjoint union of nonempty sets, then $\sum\limits_{k \in K} \sum\limits_{j \in J_k} u_j = \sum\limits_{i \in I} u_i$. 
        \item If $\sum\limits_{i \in I} u_i$ is defined, then $v \cdot \sum\limits_{i \in I} u_i = \sum\limits_{i \in I} v \cdot u_i$ and $\bigg(\sum\limits_{i \in I} u_i\bigg) \cdot v = \sum\limits_{i \in I} u_i \cdot v$. 
    \end{enumerate}
The above definitions are what we need to fully define the denotational semantics for our language, and in particular the semantics of loops. In particular, we require $\mathcal{A} = \langle U, +, \cdot, \0, \1 \rangle$ to be a naturally ordered, finitary, Scott-continuous, partial semiring bounded by top element $\top$.

\section{Total Outcome Logic}
\label{sec:logic}

We now leverage our semantics to define \textit{Total} Outcome Logic (\TOL), building atop the proof system given by \citet{zilberstein:completeOL} to also reason about the possible divergence and certain termination of programs.
We will later see that \TOL is strictly more expressive than Outcome Logic; by exposing nontermination as a program outcome, \TOL is capable of expressing additional properties with termination and nontermination guarantees, including the classical notion of \textit{total correctness.}

\subsection{Outcome Assertions}

First, we define \textit{outcome assertions}, which serve as pre- and postconditions in \TOL. In many logics, pre- and postconditions are represented extensionally as \textit{sets of program states}. Outcome assertions, on the other hand, must operate on a finer level, specifying not only a collection of states but also their weights at the respective point in the computation. We give extensional definitions of outcome assertions $\varphi, \psi \in \bb{2}^{\WA^\infty(\Sigma)}$ as sets of weighting functions. For any $m \in \WA^\infty(\Sigma)$, we write $m \vDash \varphi$ (read as \textit{$m$ satisfies $\varphi$}) to mean $m \in \varphi$. 
Below, we introduce  notation for  assertions used later in constructing the logic. We start with basic propositional constructs, which are defined as usual:
\begin{align*}
    \top &\triangleq \WA^\infty(\Sigma) &
    \bot &\triangleq \emptyset &
    \neg \varphi &\triangleq \WA^\infty(\Sigma) \setminus \varphi \\
    \varphi \lor \psi &\triangleq \varphi \cup \psi &
    \varphi \land \psi &\triangleq \varphi \cap \psi &
    \varphi \Rightarrow \psi &\triangleq \lnot \varphi \vee \psi
\end{align*}
Given a predicate on program states $P \subseteq \Sigma$, we define liftings to indicate that $P$ over-approximates, under-approximates, or exactly characterizes the support of a weighting function with weight $u$. Similarly, $\Uparrow^{(u)}$ denotes that the nontermination outcome $\nonterm$ occurs with weight $u\cdot \top$. 
%
\begin{align*}
    \ov{P}^{(u)} &\triangleq \{m \mid |m| = u, \supp(m) \subseteq P\}
    &
    \un{P}^{(u)} &\triangleq \{m  \mid |m| = u, \supp(m) \supseteq P\}
    \\
    \ex{P}^{(u)} &\triangleq \{m  \mid |m| = u, \supp(m) = P\}
    &
    \Uparrow^{(u)} &\triangleq \{ u \cdot \top \cdot \eta(\nonterm) \}
\end{align*}
When the superscript is omitted, then the weight is exactly $\one$, \ie $\ov P = {\ov P}^{(\one)}$.
Operations $u \odot \varphi$ and $\varphi \odot u$ scale the outcome assertion $\varphi$ with weight $u$ on the left or right.
Existential quantification with respect to predicate $\phi: T \rightarrow \bb{2}^{\WA^\infty(\Sigma)}$ is defined as the union of all $\phi(t)$ for inputs $t \in T$. Then, $m \vDash \exists x : T.\: \phi(x)$ precisely when $m$ satisfies $\phi(t)$ for \textit{some} input $t$.
\begin{align*}
    u \odot \varphi &\triangleq \{u \cdot m \mid m \in \varphi\}  &
    \varphi \odot u &\triangleq \{m \cdot u \mid m \in \varphi\}  &
    \exists x : T.\: \phi(x) &\triangleq \textstyle\bigcup_{t \in T} \phi(t) 
\end{align*}
We write the \textit{outcome conjunction} $\varphi \oplus \psi$ to consist of weighting functions $m$ that can be split into components $m = m_1 + m_2$, with $m_1$ satisfying $\varphi$ and $m_2$ satisfying $\psi$. We define this more generally for a potentially infinite index set $T$:
\begin{align*}
    \bigoplus_{x \in T} \phi(x) &\triangleq \left\{\sum_{t \in T} m_t \mid \forall t \in T.\: m_t \in \phi(t)\right\}
\end{align*}
Finally, we assert \textit{identity} to $m$ with $\bone(m)$, which consists of the singleton $\{m\}$. 

\subsection{Total Outcome Logic}
The judgments of Total Outcome Logic are \textit{outcome triples}, which take the form $\triple{\varphi}{C}{\psi}$ for a precondition $\varphi$, command $C$, and postcondition $\psi$. 
\begin{definition}[Outcome Triples] The validity of outcome triples is defined as follows:
\[
    \vDash \triple{\varphi}{C}{\psi} \tiff \forall m \in \WA^\infty(\Sigma).\quad m\vDash \varphi \implies \dem{C}(m) \vDash \psi
\]
\end{definition}
The main reasoning apparatus in \TOL consists of inference rules for deriving triples, shown in \Cref{fig:inference-structural,fig:inference-command}. Some of these rules carry over from Outcome Logic \cite{zilberstein:completeOL}, while others must be adapted to account for nontermination. We introduce the rules as belonging to three main categories: structural rules, standard command rules, and iteration.

\begin{figure}[t]
\begin{mathpar}
    \inferrule*[right=False]{\:}{\triple{\bot}{C}{\varphi}}

    \inferrule*[right=True]{\:}{\triple{\varphi}{C}{\top}}

    \inferrule*[right=Div]{\:}{\triple{\Uparrow^{(u)}}{C}{\Uparrow^{(u)}}}

    \inferrule*[right=Scale]
        {\triple{\varphi}{C}{\psi}}
        {\triple{u \odot \varphi}{C}{u \odot \psi}}

    \inferrule*[right=Disj]
        {\triple{\varphi_1}{C}{\psi_1} \\ \triple{\varphi_2}{C}{\psi_2}}
        {\triple{\varphi_1 \lor \varphi_2}{C}{\psi_1 \lor \psi_2}}

    \inferrule*[right=Conj]
        {\triple{\varphi_1}{C}{\psi_1} \\ \triple{\varphi_2}{C}{\psi_2}}
        {\triple{\varphi_1 \land \varphi_2}{C}{\psi_1 \land \psi_2}}

    \inferrule*[right=Choice]
        {\forall t \in T.\: \triple{\phi(t)}{C}{\phi'(t)}}
        {\textstyle\triple{\bigoplus_{x \in T} \phi(x)}{C}{\bigoplus_{x \in T} \phi'(x)}}

    \inferrule*[right=Exists]
        {\forall t \in T.\: \triple{\phi(t)}{C}{\phi'(t)}}
        {\triple{\exists x : T.\: \phi(x)}{C}{\exists x : T.\: \phi'(x)}}

    \inferrule*[right=Consequence]
        {\varphi' \implies \varphi \quad \triple{\varphi}{C}{\psi} \quad \psi \implies \psi'}
        {\triple{\varphi'}{C}{\psi'}}
\end{mathpar}
\caption{Structural rules.}
\label{fig:inference-structural}
\end{figure}
\noindent\textit{Structural Rules} are provided in \autoref{fig:inference-structural} and hold for arbitrary program commands $C$, as they capture logical rules that apply to pre- and post-conditions. These include the trivial rules, \sruleref{False} and \sruleref{True}, which feature the vacuous precondition $\bot$ and the weakest postcondition $\top$, respectively. \sruleref{Div} handles nontermination, stating that assertions $\diverge{u}$ are preserved across programs. 

A given triple can be left-scaled by any weight $u \in U$ using \sruleref{Scale}. The rules \sruleref{Disj}, \sruleref{Conj}, and \sruleref{Choice} allow multiple triples to be conjoined via the connectives $\land$, $\lor$, and $\oplus$, respectively. Introduction of existential quantification is accomplished through \sruleref{Exists}. Finally, the standard \sruleref{Consequence} rule allows for strengthening preconditions and weakening postconditions. 
   
\noindent\textit{Standard Command Rules} are provided in \autoref{fig:inference-command} to specify how non-iterative program commands transform outcome assertions. \cruleref{Skip} and \cruleref{Seq} are standard for Hoare-style logics. The \cruleref{Assume} rule has the premise $\varphi \vDash e = u$, which is defined below:
    
\begin{definition}[Assertion Entailment] \label{def:assertion-entailment}
    Given an assertion $\varphi$, expression $e$, and weight $u \in U$, we write $\varphi \vDash e = u$ ($\varphi$ \textit{entails} $e = u$) iff for all $m\vDash\varphi$, $\nonterm \notin \supp(m)$ and $\de{e}(\sigma) = u$ for all $\sigma\in\supp(m)$.
\end{definition}

That is, for every weighting function $m\vDash\varphi$ and any $\sigma\in\supp(m)$, $e$ evaluates to $u$ in state $\sigma$.
For tests $b \in \Test$, we introduce the shorthand $\varphi \vDash b$ to mean $\varphi \vDash b = \1$. 

In \cruleref{Plus}, the side condition $\varphi \vDash \tru$ is equivalent to a sure-termination guarantee: for $m \in \varphi$, $\nonterm \:\not\in \supp(m)$. If this holds, the triples proved for both branches are combined with outcome conjunction. Sure-termination must be enforced in these cases as the additional effect of nontermination is sequenced differently than for regular weighted states. Similar issues arise in instances of OL with exceptions \cite{outcome,zilberstein2024outcome} and probabilistic nondeterminism \cite{zilberstein2025demonic}.

\subsubsection*{Iteration.} Our nontermination semantics culminates in the \textsc{Iter} rule for iteration commands $\iter{C}{e}{e'}$ (\autoref{fig:inference-command}), which also serves to introduce any assertions pertaining to nontermination. The form of this rule corresponds roughly to an unrolling of the loop and describes sequences of assertions that hold between iterations. These assertions come in three sorts:
\begin{itemize}[leftmargin=*]
    \item $(\varphi_n)_{n \in \N}$ describe ongoing lines of computation, \ie those that have not yet terminated. These include program configurations that are passed to the next iteration of the loop and subsequently transformed.  Note that we have as premise that $\varphi_n \vDash \tru$ for every $n$, so every $m \in \varphi_n$ describes only weights on program states $\Sigma$, but not the weight on $\nonterm$---we call computations restricted to the states $\Sigma$-computations.
    
    \item $(\psi_n)_{n \in \N}$ describe the terminating outcomes that are accrued in iteration $n$. In particular, observe that $\psi_n$ is obtained by filtering $\varphi_n$ through command $\assume e'$: $\triple{\varphi_n}{\assume e'}{\psi_n}$. 
    
    \item $(\zeta_n)_{n \in \N}$ describe nontermination encountered in the $n^\text{th}$ iteration (\ie via nested loops). We write $\zeta_n \vDash \div$ as a premise to enforce this characterization.
\end{itemize}

\begin{definition}[Nonterminating Assertions]
    For outcome assertion $\varphi$, we write $\varphi \vDash \div$ (read as: $\varphi$ is nonterminating) iff for all $m \in \varphi$, $\supp(m) \subseteq \{\nonterm\}$. 
\end{definition}

The conclusion of the rule has as postcondition two \textit{limiting} assertions $\psi_\infty$ and $\zeta_\infty$. Intuitively, the former describes the aggregate of all terminating outcomes accrued by each $\psi_n$. We represent this formally by writing $(\psi_n)_{n \in \N} \rightsquigarrow \psi_\infty$, which is defined below:

\begin{figure}[t]
\begin{mathpar}
    \inferrule*[right=Skip]{\:}{\triple{\varphi}{\skp}{\varphi}}

    \inferrule*[right=Seq]
        {\triple{\varphi}{C_1}{\zeta} \\ \triple{\zeta}{C_2}{\psi}}
        {\triple{\varphi}{C_1 \seq C_2}{\psi}}

    \inferrule*[right=Plus]
        {\varphi \vDash \tru \\ \triple{\varphi}{C_1}{\psi_1} \\ \triple{\varphi}{C_2}{\psi_2}}
        {\triple{\varphi}{C_1 + C_2}{\psi_1 \oplus \psi_2}}

    \inferrule*[right=Assume]
        {\varphi \vDash e = u}
        {\triple{\varphi}{\assume e}{\varphi \odot u}}
        
    \inferrule*[right=Iter]
        {(\psi_n)_{n \in \N} \rightsquigarrow \psi_{\infty} \\
        (\varphi_n \oplus \zeta_n)_{n \in \N} \Uparrow \zeta_\infty \\
        \forall n \in \N.\\
        \varphi_n \vDash \tru \\ 
        \zeta_n \vDash \div \\
        \triple{\varphi_n \oplus \zeta_n}{\assume e \seq C}{\varphi_{n+1} \oplus \zeta_{n+1}} \\
        \triple{\varphi_n}{\assume e'}{\psi_n}
        }
        {\triple{\varphi_0 \oplus \zeta_0}{\iter{C}{e}{e'}}{\psi_{\infty} \oplus \zeta_\infty}
        }
        
%

\end{mathpar}
\caption{Inference rules for program commands.}
\label{fig:inference-command}
\end{figure}

\begin{definition}[Converging Assertions]
    A family $(\psi_n)_{n \in \N}$ of assertions converges to $\psi_{\infty}$ (written $(\psi_n)_{n \in \N} \rightsquigarrow \psi_{\infty}$) iff for any $(m_n)_{n \in \N}$, if $m_n \vDash \psi_n$ for each $n \in \N$, then $\sum_{n \in \N} m_n \vDash \psi_{\infty}$. 
\end{definition}

On the other hand, $\zeta_\infty$ describes the degree of nontermination of the loop in the limit. This notion is captured by the semantic assertion $(\varphi_n \oplus \zeta_n)_{n \in \N} \Uparrow \zeta_\infty$, which we take to mean: 
\begin{definition}[Diverging Assertions]
    A family $(\psi_n)_{n \in \N}$ of outcome assertions diverges to $\psi_\infty$ (written $(\psi_n)_{n \in \N} \Uparrow \psi_\infty)$ iff for all $(m_n)_{n \in \mathbb{N}}$, if $m_n \vDash \psi_n$ for all $n \in \N$, then $(\inf_{n \in \N} |m_n| \cdot \top)\cdot\eta(\nonterm) \vDash\psi_\infty$.
\end{definition}
In other words, the total mass in $\varphi_n \oplus \zeta_n$ in the limit is transfered to the nontermination outcome $\nonterm$. Intuitively, nontermination in the loop has two possible sources: nonterminating weight from nested loops, which is accounted for by $\zeta_n$'s, and weight leftover from $\Sigma$-computations that never end, accounted for by the $\varphi_n$'s.
This rule allows us to explain the paradox that we introduced in \Cref{sec:intro}, in which terminating nondeterministic programs cannot make unboundedly many choices, but almost surely terminating probabilistic programs can. Consider the two programs below, both of which increment $x$ in a loop, but the program on the left is interpreted in the Boolean semiring, whereas the one on the right is interpreted in the probabilistic semiring.
\[
    x \coloneqq 0 \fatsemi \iter{(x \coloneqq x+1)}11
\qquad\qquad
    x \coloneqq 0 \fatsemi \iter{(x \coloneqq x+1)}p{1-p}
\]
For the nondeterministic program, we set $\varphi_n \triangleq \ov{x=n}$, $\psi_n \triangleq \ov{x=n}$, and $\zeta_n \triangleq \wg{\ov\tru}0$ for all $n\in\N$. Letting $\psi_\infty\triangleq \bigoplus_{k\in \N} \ov{x=k}$, clearly $\ov{x=n}_{n\in \N} \rightsquigarrow \bigoplus_{k\in\N} \ov{x=k}$, indicating that there is a terminating outcome where $x=k$ for any natural number $k$. Letting $\zeta_\infty \triangleq\mathord{\Uparrow}$, we also have $\ov{x=n}_{n\in\N} \Uparrow \zeta_\infty$; the weight of continuing to iterate remains to be 1 for all $n\in\N$, so there is a nonterminating trace of weight 1, giving us the final specification.
\[
    \triple{\ov\tru}{x \coloneqq 0 \fatsemi \iter{(x \coloneqq x+1)}11}{(\bigoplus_{k\in\N} \ov{x=k})\oplus \mathord\Uparrow}
\]
By contrast, in the probabilistic program we instead set $\varphi_n \triangleq \wg{\ov{x=n}}{p^n}$ and $\psi_n\triangleq \wg{\ov{x=n}}{p^n \cdot (1-p)}$. The terminating outcomes then form a geometric distribution $\bigoplus_{k\in\N} \wg{\ov{x=k}}{p^k\cdot(1-p)}$. Since the weight captured by each $\varphi_n$ decreases towards 0, the infimum is 0, giving us $\zeta_\infty \triangleq \wg{\ov\tru}0$, so the infinite execution occurs with zero probability, and it is therefore omitted from the specification.
\[
    \triple{\ov\tru}{x \coloneqq 0 \fatsemi \iter{(x \coloneqq x+1)}p{1-p}}{\bigoplus_{k\in\N} \wg{\ov{x=k}}{p^k \cdot (1-p)}}
\]

\begin{remark}[Accommodating Nontermination] In general, applying rules with some assertion entailment $\varphi \vDash e = u$ as a premise involves decomposing the precondition into components that describe terminating vs. nonterminating outcomes -- the latter of which are handled by \sruleref{Div} or some derivative of it.
As an example, take $\varphi = \varphi_t \oplus \varphi_{\nonterm}$ where $\varphi_t \vDash \tru$ and $\varphi_{\nonterm} \vDash \div$. We show a derivation involving \cruleref{Plus} across program $C_1 + C_2$, in which we are given premises $\triple{\varphi_t}{C_1}{\psi_{t_1}}$ and $\triple{\varphi_t}{C_2}{\psi_{t_2}}$:
\begin{mathpar}
    \inferrule*[Right=Choice]
        {\inferrule*[Right=Plus]
            {\triple{\varphi_t}{C_1}{\psi_{t_1}} \\
            \triple{\varphi_t}{C_2}{\psi_{t_2}}
            }
            {\triple{\varphi_t}{C_1 + C_2}{\psi_{t_1} \oplus \psi_{t_2}}} \\
        \qquad
        \inferrule*[Right=\textsc{Div*}]
            {\varphi_{\nonterm} \vDash \div}
            {\triple{\varphi_{\nonterm}}{C_1 + C_2}{\varphi_{\nonterm}}}
        }
        {\triple{\varphi_t \oplus \varphi_{\nonterm}}{C_1 + C_2}{\psi_{t_1} \oplus \psi_{t_2} \oplus \varphi_{\nonterm}}}
\end{mathpar}

\end{remark}

\subsubsection*{Soundness and Relative Completeness}

The proof system that we defined for Total Outcome Logic is sound and complete. Soundness means that every triple that is derivable via the inference rules is semantically valid. We state this theorem below, and prove it in the appendix.

\begin{theorem}[Soundness]\quad
$
     \vdash \triple{\varphi}{C}{\psi} \quad\implies\quad \vDash \triple{\varphi}{C}{\psi}.
$
\end{theorem}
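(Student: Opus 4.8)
The plan is to proceed by induction on the structure of the derivation $\vdash \triple{\varphi}{C}{\psi}$. For each inference rule I would assume the semantic validity of its premises (the induction hypothesis) and establish validity of the conclusion, unfolding $\vDash \triple{\varphi}{C}{\psi}$ into the claim that $\dem{C}(m) \vDash \psi$ for every $m \in \varphi$, where $\dem{C}$ denotes the Kleisli extension of $\de{C}$ to weighting functions.

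Most rules reduce to algebraic bookkeeping about how $\dem{C}$ interacts with the outcome-assertion connectives, and I would dispatch them quickly. \textsc{False} and \textsc{True} are immediate from $\bot = \emptyset$ and $\top = \WA(\Sigma_\nonterm)$; \textsc{Disj}, \textsc{Conj}, \textsc{Exists}, and \textsc{Consequence} follow directly from the set-theoretic readings of $\lor$, $\land$, $\exists$, and $\Rightarrow$. \textsc{Skip} uses the monad law $\eta^\dagger = \id$, and \textsc{Seq} uses the associativity law to obtain compositionality $\dem{C_1 \seq C_2} = \dem{C_2} \circ \dem{C_1}$. The essential algebraic ingredient for \textsc{Scale} and \textsc{Choice} is that $\dem{C}$ is \emph{linear}, i.e. commutes with (possibly infinite) weighted sums of weighting functions; this I would derive from the complete-semiring laws guaranteed by finitary Scott-continuity. \textsc{Div} I would settle by direct computation: a weighting function supported only on $\nonterm$ is a fixed point of every $\dem{C}$, since the Kleisli extension carries nontermination weight through unchanged.

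The rules \textsc{Plus} and \textsc{Assume} carry the sure-termination side conditions $\varphi \vDash \tru$ and $\varphi \vDash e = u$. The key observation, which I would isolate as a lemma, is that when $m(\nonterm) = \zero$ the nontermination-carrying summand of the Kleisli extension vanishes, so $\dem{C_1 + C_2}(m)$ splits cleanly as $\dem{C_1}(m) + \dem{C_2}(m)$ and $\dem{\assume e}(m)$ collapses to the scaling $m \cdot u$; without sure termination the $\nonterm$-weight would be duplicated across the two branches and the decomposition into $\psi_1 \oplus \psi_2$ would fail.

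The main obstacle is \textsc{Iter}. Here I would first invoke the continuity results to write $\de{\iter{C}{e}{e'}}$ as the supremum, in the fusion order, of the finite approximants $\Phi^k(\bot)$, where $\bot$ places all mass on $\nonterm$. The core of the argument is an invariant, proved by induction on $k$, that relates $\Phi^k(\bot)$ applied to a decomposition $m = m_0 + n_0$ (with $m_0 \vDash \varphi_0$ and $n_0 \vDash \zeta_0$) to the indexed families: after $k$ unrollings the accrued exit outcomes are witnessed by $\psi_0, \dots, \psi_{k-1}$ through the premises $\triple{\varphi_n}{\assume e'}{\psi_n}$, the ongoing-plus-nested-divergence residue is witnessed by $\varphi_k \oplus \zeta_k$ through the step premise $\triple{\varphi_n \oplus \zeta_n}{\assume e \seq C}{\varphi_{n+1} \oplus \zeta_{n+1}}$, and the as-yet-unresolved mass sits on $\nonterm$. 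Passing to the supremum, I would use $(\psi_n)_n \rightsquigarrow \psi_\infty$ to collect the terminating outcomes into $\psi_\infty$ (this is exactly the convergence requirement that $\sum_n m_n \vDash \psi_\infty$ whenever each $m_n \vDash \psi_n$), and use $(\varphi_n \oplus \zeta_n)_n \Uparrow \zeta_\infty$ to transfer the limiting residual mass $(\inf_n |m_n| \cdot \top) \cdot \eta(\nonterm)$ onto nontermination, concluding $\dem{\iter{C}{e}{e'}}(m) \vDash \psi_\infty \oplus \zeta_\infty$. The delicate points I anticipate are: (i) verifying that the fusion-order supremum computes the terminating weights covariantly while taking the \emph{infimum} of residual mass contravariantly, so that the $\rightsquigarrow$ and $\Uparrow$ definitions align with the fixpoint; (ii) keeping the nested nontermination tracked by $\zeta_n$ separate from the genuinely-ongoing weight of $\varphi_n$ so that neither is lost nor double-counted in the limit; and (iii) justifying the exchange of the fixpoint supremum with the infinite outcome-conjunction sums, which again rests on Scott-continuity of the semiring.
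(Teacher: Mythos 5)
Your proposal is correct and follows essentially the same route as the paper: rule-by-rule induction on the derivation, with the auxiliary linearity/commutation facts about the Kleisli extension (the paper's Lemma~\ref{lemma:bind-effects}) handling \textsc{Scale}, \textsc{Choice}, \textsc{Plus}, and \textsc{Assume} exactly as you describe, including the role of the sure-termination side conditions. For \textsc{Iter}, your invariant on the Kleene approximants $\Phi^k(\bot)$ is precisely what the paper factors into its two unrolling lemmas (\Cref{lemma:n-unrolling,lemma:unrolling})---identifying $\Phi^{k}(\bot)$ with the syntactic unrollings $(\assume e \seq C)^n \seq \assume e'$, summing terminating weight covariantly and taking the infimum of residual $\nonterm$-mass contravariantly---followed by the same appeal to $\rightsquigarrow$ and $\Uparrow$ in the limit.
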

Relative completeness means that any valid triple can be derived in our proof system, provided an oracle that decides semantic properties such as the implications used in the rule of \sruleref{Consequence} and the assertion entailments used in the \cruleref{Assume} rule.

\begin{theorem}[Relative Completeness]
\quad
$
    \vDash \triple{\varphi}{C}{\psi} \implies \vdash \triple{\varphi}{C}{\psi}
$
\end{theorem}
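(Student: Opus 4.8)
The plan is to follow the standard route to relative completeness for Hoare-style logics, adapted to the outcome setting: exhibit a \emph{principal} (strongest-postcondition) triple for every command and reduce an arbitrary valid triple to it through structural rules. Given $\vDash\triple{\varphi}{C}{\psi}$, validity unfolds to $\dem{C}(m)\in\psi$ for every $m\in\varphi$. Writing the precondition as an existential over its own elements, $\varphi=\exists m:\varphi.\,\bone_m$, the key lemma below supplies $\vdash\triple{\bone_m}{C}{\bone_{\dem{C}(m)}}$ for each $m$, and the \textsc{Exists} rule assembles these into $\vdash\triple{\varphi}{C}{\exists m:\varphi.\,\bone_{\dem{C}(m)}}$. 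Since the derived postcondition is exactly $\{\dem{C}(m)\mid m\in\varphi\}\subseteq\psi$, one use of \textsc{Consequence} (discharging the entailment with the oracle) yields $\vdash\triple{\varphi}{C}{\psi}$.

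Everything therefore reduces to the Principal Triple Lemma, $\vdash\triple{\bone_m}{C}{\bone_{\dem{C}(m)}}$ for all $m\in\WA^\infty(\Sigma)$ and all $C$, which I would prove by structural induction on $C$. The engine is that the Kleisli extension $\dem{C}=\de{C}^\dagger$ is additive and commutes with scaling, carrying the $\nonterm$-weight through unchanged; concretely $\dem{C}\big(\sum_\sigma m(\sigma)\cdot\unit(\sigma)+m(\nonterm)\cdot\unit(\nonterm)\big)=\sum_\sigma m(\sigma)\cdot\dem{C}(\unit(\sigma))+m(\nonterm)\cdot\unit(\nonterm)$. This semantic decomposition is mirrored syntactically by \textsc{Choice}, \textsc{Scale}, and \textsc{Div}: writing $\bone_m$ as the outcome conjunction of the point masses $\bone_{\unit(\sigma)}$ (scaled by $m(\sigma)$) together with the nontermination component routed through \textsc{Div}, it suffices to establish the Dirac case $\vdash\triple{\bone_{\unit(\sigma)}}{C}{\bone_{\de{C}(\sigma)}}$. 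The non-loop cases are then routine: \textsc{Skip} and the atomic-action axiom are immediate; \textsc{Assume} applies with $u=\de{e}(\sigma)$, since a point mass surely terminates and fixes the value of $e$; sequencing uses the induction hypothesis on $C_1$ and $C_2$ with \textsc{Seq} (no decomposition needed); and for $C_1+C_2$ the point-mass restriction discharges the side condition $\varphi\vDash\tru$ of \textsc{Plus}, while any genuine nontermination weight has already been split off and handled by \textsc{Div}, exactly as in the Remark on accommodating nontermination.

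The loop case is the main obstacle, and is where the \textsc{Iter} rule must be shown to capture the least fixpoint $\lfp(\Phi)$ defining the loop. I would instantiate \textsc{Iter} along the Kleene chain of $\Phi$: take $\varphi_0\triangleq\bone_{\unit(\sigma)}$ with vanishing $\zeta_0$, and let $\varphi_{n+1}\oplus\zeta_{n+1}$ and $\psi_n$ be the principal postconditions of $\assume e\seq C$ and of $\assume{e'}$ applied to $\varphi_n$, which exist and are derivable by the induction hypothesis for the strictly smaller command $C$. The content of the case is the agreement of the rule's two limit constructs with the fusion-order supremum computing $\lfp(\Phi)$: that $(\psi_n)_n\rightsquigarrow\psi_\infty$ reconstructs the aggregate of terminating outcomes $\sum_n\psi_n$ as the $\Sigma$-component of the fixpoint, and that $(\varphi_n\oplus\zeta_n)_n\Uparrow\zeta_\infty$ reconstructs its $\nonterm$-component as $\big(\inf_n|\varphi_n|\cdot\top\big)\cdot\unit(\nonterm)$.

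Establishing this agreement is the technical crux. It rests on a Kleene fixpoint theorem in the fusion-ordered dcpo $\WA^\infty(\Sigma)$, so that $\lfp(\Phi)=\sup_n\Phi^n(\bot)$ with $\bot$ the all-nontermination weighting, and on an inductive identification of the $n$-th approximant $\Phi^n(\bot)$ with the $n$-step unrolling, whose terminating mass is $\sum_{k<n}\psi_k$ and whose residual mass is the weight still iterating in $\varphi_n$. Taking the supremum in the fusion order then sends the terminating masses covariantly to their sum and the residual mass contravariantly to its infimum, matching $\psi_\infty$ and $\zeta_\infty$. This is precisely where the semiring hypotheses are indispensable: Scott-continuity makes $\Phi$ continuous so that the Kleene supremum exists and is least, while the bounded-weighting restriction to $\WA^\infty$ keeps every approximant and the limit inside the domain, so that the conservative/indicative dichotomy correctly decides whether the residual mass vanishes (almost-sure termination) or persists as genuine divergence.
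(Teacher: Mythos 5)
Your proposal is correct and follows essentially the same route as the paper: the paper proves $\vdash\triple{\varphi}{C}{\mathsf{post}(C,\varphi)}$ by structural induction on $C$, splitting each weighting function into its terminating projection and its $\nonterm$-component (handled by \textsc{Div}), reassembling with \textsc{Choice}/\textsc{Exists}, and discharging the loop case by instantiating \textsc{Iter} with the $n$-step unrollings, whose agreement with the Kleene fixpoint in the fusion order is exactly your ``technical crux'' (the paper's unrolling lemmas). Your only deviation---decomposing all the way to Dirac point masses $\bone_{\unit(\sigma)}$ rather than keeping the terminating projection $\bone_{\projsig{m}}$ whole---is cosmetic and sound, since the \textsc{Choice} rule admits infinite index sets.
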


\section{The Verscht and Kaminski Taxonomy}
\label{sec:vk-taxonomy}

Starting with Hoare Logic, there is an expansive body of work in developing new program logics to reason across different computational domains and effects. More recently, an opposite impulse to unify these theories has taken shape in work on taxonomies of logic, which seek to abstract from these specific frameworks and uncover underlying formalisms that relate them.

One such example is \cites{verscht2025taxonomy} taxonomy of \emph{Hoare-like} logics, a collection of 16 logics for expressing different properties about nondeterministic, looping programs. These logics are roughly organized along three axes: \emph{correctness vs incorrectness}, \emph{partiality vs totality}, and \emph{angelic vs demonic} nondeterminism.
The taxonomy shows how 16 logics relate to each other in terms of weakening, contraposition, and Galois connections, while also proposing as-of-yet unexplored logics, including certain \emph{in-between} ones. We have adapted the taxonomy into a double cube structure, shown in \Cref{fig:vk-taxonomy}. Our taxonomy contains only 14 logics; two of \citeauthor{verscht2025taxonomy}'s logics are redundant due to Galois connections, as we explain in \Cref{sec:vk-sp}. The in-between logics, which appear in the interior of the upper cube, are added in \Cref{sec:vk-in-between}.

The taxonomy formulates logics in terms of predicate transformers---variants of \emph{weakest preconditions} and \emph{strongest postconditions}. Some of these transformers are well-known, while others are entirely new, neatly completing the picture.
In this section, we show that \TOL subsumes every logic in this taxonomy.
Interestingly, \TOL being lifted to reason directly on the sets of outcomes eliminates the need for rigid distinctions between logics, and instead allows us to specify many different kinds of properties just by altering the postcondition.
This is why, as we see in \Cref{sec:vk-in-between}, \TOL is readily suited for the in-between cases, and provides robust proof theory for a shared analysis across different modes of reasoning (\Cref{sec:new-taxonomy}). Throughout the remaining sections, we will work in a restricted domain that excludes the empty weighting function, meaning that running a program must always result in some outcome, whether it be a reachable state or nontermination.

\subsection{Weakest Precondition Logics: The Upper Cube} \label{sec:modality}

The upper cube in \Cref{fig:vk-taxonomy} is comprised of logics defined by different flavors of \emph{weakest precondition} transformers. Originally due to \citet{Dijkstra76}, these transformers initially had two variants: the weakest precondition of a program $C$ relative to postcondition $Q$ is the set of states that always terminate in $Q$ after $C$ is run, whereas the weakest \emph{liberal} precondition is the set of start states that end in $Q$ \emph{if} the program terminates. Both of these transformers treat nondeterminism demonically, so \citet{verscht2025taxonomy} refer to them as the demonic weakest precondition ($\dwp$) and demonic weakest liberal precondition ($\dwlp$), respectively.
\begin{align*}
    & \dwp(C,Q) \triangleq \{\sigma \mid \supp(\de{C}(\sigma)) \subseteq Q\} 
    & \dwlp(C,Q) \triangleq \{\sigma \mid \supp(\de{C}(\sigma)) \subseteq Q_{\nonterm}\} 
\end{align*}
It is well known that total and partial correctness Hoare Logic \cite{hoarelogic,manna1974axiomatic} can be framed in terms of these transformers. That is, the total correctness Hoare Triple $\hoare PCQ$ is equivalent to $P \subseteq \dwp(C, Q)$ and the partial correctness triple $\hoare PCQ$ is equivalent to $P\subseteq \dwlp(C, Q)$.
Of course, the above transformers can also be formulated angelically.
\begin{align*}
    & \awp(C,Q) \triangleq \{\sigma \mid \supp(\de{C}(\sigma)) \cap Q \neq \emptyset\} 
    & \awlp(C,Q) \triangleq \{\sigma \mid \supp(\de{C}(\sigma)) \cap Q_{\nonterm} \neq \emptyset\}
\end{align*}
The angelic weakest precondition ($\awp$), originally due to \citet{wpp} under the name \emph{weakest possible precondition}, is the set of states that reach $Q$ via \emph{some} trace in $C$. The resulting logic $P\subseteq \awp(C, Q)$---known as Lisbon Logic \cite{outcome}, Backwards Under-Approximation \cite{moller2021algebra}, and Sufficient Incorrectness Logic \cite{ascari2023sufficient}---is useful for proving incorrectness, \ie that certain bugs are reachable. The angelic weakest liberal precondition is less well studied. It is the set of states that either reach the postcondition or diverge, and was introduced by \citet{verscht2025taxonomy} in order to complete the front face of the upper cube in \Cref{fig:vk-taxonomy}.
It is related to the logic UNTer, a logic for proving possible nontermination \cite{raad2024nontermproving}. Indeed, $P \subseteq \awlp(C, \fls)$ means that for each state in $P$, there exists a diverging trace.
We encode these logics in \TOL using the following modalities; recall that here we assume that $\supp(m) \neq \emptyset$.
\begin{align*}
    \always P &\triangleq \exists (u, v) : U^2\setminus\{(\zero,\zero)\}.\: \ov{P}^{(u)} \oplus \Uparrow^{(v)}
    &&= \{m \mid \supp(m) \subseteq P_{\nonterm}\}
    \\
    \sometimes P &\triangleq \exists u : U\setminus\{\zero\}.\: \ov{P}^{(u)} \oplus \top
    &&= \{m \mid \supp(m) \cap P \neq \emptyset\}
    \\
    \alwaystot P &\triangleq \exists u : U\setminus\{\zero\}.\: \ov{P}^{(u)}
    && = \{m \mid \supp(m) \subseteq P\}
    \\
    \sometimespart P &\triangleq \exists (u, v) : U^2\setminus\{(\zero, \zero)\}.\: \ov{P}^{(u)} \oplus \Uparrow^{(v)} \oplus \top
    &&= \{m \mid \supp(m) \cap P_{\nonterm} \neq \emptyset \}
\end{align*}
The $\always$ and $\sometimes$ modalities are inspired by Dynamic Logic \cite{pratt1976semantical,harel2001dynamic} and correspond to $\dwlp$ and $\awp$, respectively. That is, $\always Q$ states that $Q$ covers all the terminating states in some weighting function $m$, and $\sometimes Q$ states that $m$ contains some state in $Q$. As was shown by \citet{zilberstein:completeOL}, $\always$ and $\sometimes$ can be used to encode partial correctness Hoare Logic and Lisbon Logic:
\begin{figure}[t]
\begin{tikzcd}[
  3d cd=7em, arrows=dash, row sep=.5em, column sep=-3em,
  dedo/.style={dash pattern=on \pgflinewidth off 2\pgflinewidth},
  cells={nodes={
      inner sep=+.2em, 
      align=center, text width=8em, text depth=+6pt, text height=1.3em,
      t/.style={text={####1}}}}]
\taxnode{\triple{\ov{\lnot P}}C{\sometimes{\lnot Q}}}{P\supseteq \dwlp(C, Q)}{} \ar[dd, leftarrow] \ar[r, rightarrow]
&
\taxnode{\triple{\ov{\lnot P}}C{\sometimespart{\lnot Q}}}{P\supseteq \dwp(C, Q)}{} \ar[dd, leftarrow] 
\\
\taxnode{\triple{\ov P}C{\sometimes Q}}{P \subseteq \awp(C, Q)}{Lisbon Logic} \ar[r, rightarrow, crossing over] \ar[u, dotted, leftrightarrow, mygreen]
&
\taxnode{\triple{\ov P}C{\sometimespart Q}}{P \subseteq \awlp(C, Q)}{Angelic Partial Correctness} \ar[u, dotted, leftrightarrow, mygreen]
\\
\taxnode{\triple{\ov{\lnot P}}C{\alwaystot{\lnot Q}}}{P \supseteq \awlp(C, Q)}{Total NC} \ar[r, rightarrow] 
&
\taxnode{\triple{\ov{\lnot P}}C{\always{\lnot Q}}}{P \supseteq \awp(C, Q) \Leftrightarrow \dslp(C, P) \supseteq Q}{Necessary Conditions} \ar[dd, rightarrow]
&
\taxnode{(1)}{\dsp(C, P) \supseteq Q}{Demonic Incorrectness} \ar[l, rightarrow] \ar[dd, rightarrow]
\\
\taxnode{\triple{\ov P}C{\alwaystot Q}}{P \subseteq \dwp(P, Q)}{Total Hoare Logic} \ar[uu, rightarrow] \ar[r, rightarrow] \ar[u, dotted, leftrightarrow, mygreen]
&
\taxnode{\triple{\ov P}C{\always Q}}{P\subseteq \dwlp(C, Q) \Leftrightarrow \asp(C, P) \subseteq Q}{Partial Hoare Logic}
  \ar[uu, rightarrow, crossing over] \ar[u, dotted, leftrightarrow, mygreen]
  \ar[dd, rightarrow]
&
\taxnode{(3)}{\aslp(C, P) \subseteq Q}{} \ar[l, rightarrow, crossing over] \ar[u, dotted, leftrightarrow, mygreen] 
\\ &
\taxnode{(2)}{\aslp(C, P) \supseteq Q}{Angelic Partial Incorrectness}
&
\taxnode{\triple{\un P}C{\un Q}}{\asp(C,P) \supseteq Q}{Incorrectness Logic} \ar[l, rightarrow]
\\ &
\taxnode{(4)}{\dsp(C,P)\subseteq Q}{}  \ar[u, dotted, leftrightarrow, mygreen]
&
\taxnode{\triple{\un{\lnot P}}C{\un{\lnot Q}}}{\dslp(C,P) \subseteq Q}{Necessary Incorrectness} \ar[l, rightarrow] \ar[uu, leftarrow, crossing over] \ar[u, dotted, leftrightarrow, mygreen]
\end{tikzcd}

%
%
\caption{Adaptation of the \citet{verscht2025taxonomy} Taxonomy.
Standard arrows represent implications and green dotted arrows represent contrapositives. The numbered logics are represented as two triples: (1) $\triple{\ov{\lnot P}}C{\always{\lnot Q}} \land \triple{\un P}C{\un Q}$, (2) $\triple{\ov{\lnot P}}C{\always{\lnot Q}} \lor \triple{\un P}C{\un Q}$, (3) $\triple{\ov P}C{\always Q} \land \triple{\un{\lnot P}}C{\un{\lnot Q}}$, and (4) $\triple{\ov P}C{\always Q} \lor \triple{\un{\lnot P}}C{\un{\lnot Q}}$.}
\label{fig:vk-taxonomy}
\end{figure}

\begin{theorem}[Subsumption of Hoare and Lisbon Logic] \label{thm:subsume-dwlp-awp}
\[
    P \subseteq \dwlp(C, Q) \iff \vDash \triple{\ov{P}}{C}{\always Q}
    \qquad\text{and}\qquad
    P \subseteq \awp(C, Q) \iff \vDash \triple{\ov{P}}{C}{\sometimes Q}
\]
\end{theorem}
The $\alwaystot$ and $\sometimespart$ modalities witness a particular (zero or nonzero) amount of termination mass, and were therefore not expressible in standard Outcome Logic \cite{outcome,zilberstein:completeOL,zilberstein2024outcome}. More precisely $\alwaystot Q$ states not only that $Q$ covers all the reachable states, but also that $\nonterm$ has weight zero. Finally, $\sometimespart Q$ states that either $Q$ or $\nonterm$ occur with nonzero weight. These new modalities can be used to express the final logics:
\begin{theorem}[Subsumption of Total Hoare Logic and Angelic Partial Correctness] \label{thm:subsume-dwp-awlp}
\[
    P \subseteq \dwp(C, Q) \iff \vDash \triple{\ov{P}}{C}{\alwaystot Q}
    \qquad\text{and}\qquad    
    P \subseteq \awlp(C, Q) \iff \triple{\ov{P}}{C}{\sometimespart Q}
\]
\end{theorem}
Just as the logics in \Cref{fig:vk-taxonomy} form a commuting square, so do the modalities:
\[
\begin{tikzcd}[
  3d cd=0em, row sep=.75em, column sep=3em]
\mynode{P \subseteq \awp(C, Q)}{Lisbon Logic} \ar[r]
&
\mynode{P \subseteq \awlp(C, Q)}{Angelic Partial Correctness}
\\
\mynode{P \subseteq \dwp(C, Q)}{Total Hoare Logic} \ar[r] \ar[u]
&
\mynode{P\subseteq \dwlp(C, Q)}{Partial Hoare Logic} \ar[u]
\end{tikzcd}
\qquad
\begin{tikzcd}[
  3d cd=0em, row sep=.75em, column sep=1em]
\sometimes Q \ar[r, Rightarrow] 
&
\sometimespart Q
\\
\alwaystot Q \ar[r, Rightarrow]\ar[u, Rightarrow]
&
\always Q \ar[u, Rightarrow]
\end{tikzcd}
\]
The back face of the upper cube is obtained by taking contrapositives of the four aforementioned logics, \ie by negating the pre- and postconditions. Because of the de Morgan style dualities between the modalities and predicate transformers (\eg $\always P$ iff $\lnot \sometimes \lnot P$ and $\alwaystot P$ iff $\lnot\sometimespart \lnot P$), these logics can also be expressed by flipping the $\subseteq$ to be a $\supseteq$.

These contrapositive logics have not been studied, with the exception of $P \supseteq \awp(C, Q)$, which was named \emph{Necessary Conditions} (NC) by \citet{cousot2013automatic}. Indeed, in an NC triple $\hoare PCQ$, $P$ is a necessary condition to reach $Q$, which means that all states outside of $P$ must not reach $Q$. It was observed by \citet{zhang:quantitative-sp} and \citet{ascari2023sufficient} that NC can be defined in terms of the angelic weakest precondition.

\subsection{Strongest Postcondition Logics: The Lower Cube}
\label{sec:vk-sp}

The lower cube in \Cref{fig:vk-taxonomy} is framed in terms of strongest postconditions \cite{dijkstra1990strongest}, which give the set of states reachable from the precondition. \citeauthor{verscht2025taxonomy} refer to the classical strongest postcondition as the \emph{angelic} strongest postcondition, as the states must only be reachable from \emph{some} start state, not \emph{all} start states.
There is a well known Galois connection between $\dwlp$ and $\asp$, meaning that partial Hoare Logic can be framed in terms of both:
\[
  P \subseteq \dwlp(C, Q)
  \iff
  \asp(C, P) \subseteq Q
  \qquad
    \asp(C,P) \triangleq \{\tau \mid \exists \sigma \in P.\ \tau \in \supp(\de{C}(\sigma)) \}
\]
Cousot and \citet{zhang:quantitative-sp} observed another Galois connection between $\awp$ and the demonic strongest \emph{liberal} postcondition ($\dslp$), meaning that NC also has two characterizations. Here, liberality corresponds to unreachability of the postcondition rather than nontermination from the precondition, so $\dslp(C, P)$ contains states that are unreachable from outside of $P$.
\[
  P \supseteq \awp(C, Q) \iff \dslp(C, P) \supseteq Q
  \qquad
    \dslp(C,P) \triangleq \{\tau \mid \forall \sigma\notin P.\ \tau \notin \supp(\de{C}(\sigma))\}
\]
For completeness, we include the remaining two transformers, although they do not have obvious intuitions or use cases, and as \citet[\S3.2.4]{verscht2025taxonomy} point out, they cannot be defined inductively. As such, we do not define them directly, but rather represent them equivalently as combinations of other transformers that we have seen.
\begin{align*}
    & \aslp(C,P) = \asp(C,P) \cup \dslp(C,P) 
    & \dsp(C,P) = \asp(C,P) \cap \dslp(C,P)
\end{align*}
We can now describe the logics in the lower cube of \Cref{fig:vk-taxonomy}. Due to the aforementioned Galois connections, the top left edge overlaps with the upper cube, and describes partial Hoare Logic and NC. The only other logic in the cube that has been studied is Incorrectness Logic (IL) (originally introduced under the name Reverse Hoare Logic \cite{reversehoare}), in the back right, which is obtained by swapping the $\subseteq$ for a $\supseteq$ in the $\asp$ definition of Hoare Logic, \ie an IL postcondition under-approximates the reachable states whereas a Hoare Logic postcondition over-approximates \cite{il}. As was shown by \citet[Proposition 6]{hyperhoare}, IL can be encoded using under-approximate assertions $\un P$, \ie $m \vDash \un P$ iff $\supp(m) \supseteq P$.

The remaining logics in the lower cube are not particularly well-understood. Indeed, \citet{verscht2025taxonomy} showed that these logics have no inductive calculi, and accordingly we encode them in \TOL using two triples.
For example, Demonic Incorrectness states both that $P$ is a necessary condition to reach $Q$ ($\triple{\ov{\lnot P}}C{\always\lnot Q}$), and also that every state in $Q$ is reachable from $P$ ($\triple{\un P}C{\un Q}$), \ie it is similar to standard incorrectness logic, but with the additional stipulation that $P$ covers \emph{all} the states that could reach the bug.
No concrete applications have been found for these logics; as of now, they exist only to neatly complete the cube.

\subsection{In-Between Logics}
\label{sec:vk-in-between}

Although \Cref{fig:vk-taxonomy} appears comprehensive, more logics can be formed by taking unions and intersections of the aforementioned predicate transformers. Focusing on the front face of the upper cube, \citet{verscht2025taxonomy} form new logics by taking the conjunction and disjunction of Lisbon and Partial Hoare Logic. These \emph{in-between} logics can be encoded in \TOL as follows:
\[
\begin{tikzcd}[
  3d cd=0em, row sep=0em, column sep=2em]
\mynode{\triple{\ov{P}}C{\sometimes{Q}}}{Lisbon Logic} \ar[rrr, rightarrow] \ar[drr, rightarrow]
&&&
\mynode{\triple{\ov{P}}C{\sometimespart{Q}}}{Angelic Partial Correctness}
\\
&&
\scalebox{.7}{$\triple{\ov{P}}C{\sometimes Q \vee \always \fls}$} \ar[ur, rightarrow]
\\
&
\scalebox{.7}{$\triple{\ov{P}}C{\ov{Q} \oplus \always\fls}$} \ar[ur, rightarrow] \ar[uul, rightarrow] \ar[drr, rightarrow]
\\
\mynode{\triple{\ov{P}}C{\alwaystot Q}}{Total Hoare Logic} \ar[rrr, rightarrow]\ar[uuu, rightarrow] \ar[ur, rightarrow]
&&&
\mynode{\triple{\ov{P}}C{\always Q}}{Partial Hoare Logic} \ar[uuu, rightarrow] \ar[uul, rightarrow]
\end{tikzcd}
\]
The first new logic offers triples of the form $\triple{\ov P}C{\ov{Q}\oplus \always\fls}$---$Q$ is reachable and is the only terminating outcome, but nontermination is possible too. This is the precise meaning of $\triple{\ov P}C{\ov Q}$ in standard Outcome Logic \cite{zilberstein:completeOL}, as its semantic model could not identify the existence of nonterminating traces.
This logic has a clear application in that it unifies partial correctness and incorrectness reasoning within a single logic \cite{outcome}.

The second new logic has the form $\triple{\ov P}C{\sometimes Q \vee \always\fls}$, meaning that either $Q$ is reachable  or $C$ never terminates. The applications for this second logic are less clear, being weaker than both Hoare Logic and Lisbon Logic. Still, expressing it in \TOL at least gives us compositional rules to derive specifications, whereas \citeauthor{verscht2025taxonomy} must derive it using two separate calculi.
More broadly, the existence of these in-between logics show that \Cref{fig:vk-taxonomy} does not tell the complete story; \emph{how many other logics are in-between?} In \Cref{sec:new-taxonomy}, we explore this question further and see how \TOL is more expressive than any one of the predicate transformers that we have seen thus far.

\section{The Cousot Cube}
\label{sec:cousot}

A second effort to build taxonomies of program logics was made by \citet{cousot2024calculational}. In this framework, 12 logics are arranged along a cube, as shown in \Cref{fig:cousot-cube}. The goals of constructing this cube were orthogonal to the taxonomy of \citet{verscht2025taxonomy}. Whereas the latter demonstrates weakening and contrapositive relationships, Cousot's cube is intended to demonstrate how different logics can be constructed by composing abstractions, and how proof systems can be calculationally derived using fixpoint induction. Nevertheless, it is interesting to compare the two taxonomies and see how Cousot's cube can be expressed in Total Outcome Logic.

\newcommand{\cornernode}[2]{\scalebox{.5}{$#1 \stackrel{?}\leftrightarrow #2$}}

\begin{figure}
\begin{tikzcd}[
  3d cd=0em, row sep=.6em, column sep=-1.5em,
  arrows={dash,thick}]
&&
\mynode{\triple{\ov{\lnot P}}C{\sometimes{\lnot Q}}}{} \ar[dr]
&& &&
\mynode{\triple{\un{\lnot P}}C{\un{\lnot Q}}}{Necessary Incorrectness}
\\
&&&
\cornernode{P}{\widetilde{\mathsf{pre}}(C, Q)}
  \ar[dr, "\subseteq"] \ar[ul, "\supseteq"{above}]
   \ar[rrrr] \ar[ddll] \ar[ddddd]
&&&&
\cornernode{\widetilde{\mathsf{post}}(C, P)}{Q}
  \ar[dr, "\supseteq"] \ar[ul, "\subseteq"{above}]
   \ar[ddll] \ar[ddddd]
\\
\mynode{\triple{\ov{\lnot P}}C{\always{\lnot Q}}}{Necessary Conditions}
    \arrow[rrrrrrrr, dashed, blue, bend right=10, thin]
&&&&
\mynode{\triple{\ov P}C{\always Q}}{Partial Hoare Logic}
&&&&
\mynode{\triple{\ov{\lnot P}}C{\always{\lnot Q}}}{Necessary Conditions}
\\
&
\cornernode{P}{{\mathsf{pre}}(C, Q)}
  \ar[dr, "\subseteq"] \ar[ul, "\supseteq"{above}] \ar[rrrr, crossing over] \ar[ddddd]
&&&&
\cornernode{{\mathsf{post}}(C, P)}{Q}
  \ar[dr, "\supseteq"] \ar[ul, "\subseteq"{above}]
\\
&&
\mynode{\triple{\ov P}C{\sometimes Q}}{Lisbon Logic}
&&&&
\mynode{\triple{\un P}C{\un Q}}{Incorrectness Logic}
\\
&&
\mynode{\triple{\ov{\lnot P}}C{\sometimespart{\lnot Q}}}{}
&& &&
\mynode{\triple{\un{\lnot P}}C{\un{\lnot Q} \oplus \Uparrow}}{}
\\
&&&
\cornernode{P}{\widetilde{\mathsf{pre}}_\bot(C, Q)} \ar[rrrr] \ar[ddll]
  \ar[dr, "\subseteq"] \ar[ul, "\supseteq"{above}] 
&&&&
\cornernode{\widetilde{\mathsf{post}}_\bot(C, P)}{Q} \ar[ddll]
  \ar[dr, "\supseteq"] \ar[ul, "\subseteq"{above}] 
\\
\mynode{\triple{\ov{\lnot P}}C{\alwaystot{\lnot Q}}}{}
    \arrow[rrrrrrrr, dashed, blue, bend right=10, thin]
&&&&
\mynode{\triple{\ov P}C{\alwaystot Q}}{Total Hoare Logic}
&&&&
\mynode{\triple{\ov{\lnot P}}C{\alwaystot{\lnot Q}}}{Total NC}
\\
&
\cornernode{P}{{\mathsf{pre}}_\bot(C, Q)}
  \ar[dr, "\subseteq"] \ar[ul, "\supseteq"{above}] \ar[rrrr]
&&&&
\cornernode{{\mathsf{post}}_\bot(C, P)}{Q}
  \ar[dr, "\supseteq"] \ar[ul, "\subseteq"{above}]  \ar[uuuuu, crossing over]
\\
&&
\mynode{\triple{\ov P}C{\sometimespart\fls}}{UNTer (BUA)}
&&&&
\mynode{\triple{\un P}C{\sometimespart\fls}}{UNTer (FUA)}
\end{tikzcd}
\caption{The Cousot Cube, with logics expressed as Total Outcome Logic triples. The blue dashed lines represent Galois Connections.}
\label{fig:cousot-cube}
\end{figure}

The corners of the cube in \Cref{fig:cousot-cube} contain predicate transformers, similar to those that we saw in \Cref{sec:vk-taxonomy}. We write, \eg $P \stackrel{?}\leftrightarrow \wpre(C, Q)$, to indicate that $P$ is either a subset or superset of $\wpre(C, Q)$. Two nodes protrude from each corner, denoting the $\subseteq$ and $\supseteq$ case. This makes for 16 logics, although 4 pairs of them collapse due to Galois connections.
The top face of the cube is comprised of partial correctness logics, where termination and nontermination cannot be precisely quantified. Cousot's predicate transformers are a subset of those of \citet{verscht2025taxonomy}, and accordingly, all of the logics on the top face also appear in \Cref{fig:vk-taxonomy}.
\begin{align*}
\spost(C, P) &= \{ \tau \mid \exists \sigma \in P.\ \tau \in \supp(\de{C}(\sigma)) \}  &&= \asp(C, P)
\\
\widetilde{\spost}(C, P) &= \{ \tau \mid \forall \sigma \notin P.\ \tau \notin \supp(\de{C}(\sigma)) \} &&= \dslp(C, P)
\\
\wpre(C, Q) &= \{ \sigma \mid \exists \tau \in \supp(\de{C}(\sigma)).\ \tau \in Q \}
  &&= \awp(C, Q)
\\
\widetilde{\wpre}(C, Q) &= \{ \sigma \mid \forall \tau \in \supp(\de{C}(\sigma)).\ \tau \in Q_\nonterm \}
  &&= \dwlp(C, Q)
\end{align*}
The bottom face of the cube is where the two taxonomies diverge. The $\bot$ subscripted variants of the transformers are defined similarly to above, except that postconditions can range over $\Sigma_\nonterm$, meaning that the postcondition can express properties about (non)termination. As such, the Total Outcome Logic triples on the bottom face are only examples of what can be expressed in these logics, and each logic on the bottom face is strictly more expressive than the corresponding logic on top. For example, Total Hoare Logic appears in the middle of the square, but we could also express partial correctness there as $P \subseteq \widetilde{\wpre}_\bot(C, Q_\nonterm)$ instead of $P \subseteq \widetilde{\wpre}_\bot(C, Q)$.

Similarly, at the front of the bottom face we have the \emph{backward} and \emph{forward} under-approximate (BUA and FUA) variants of the nontermination logic UNTer \cite{raad2024nontermproving}, stating that every state in $P$ has a nonterminating trace and that there exists a nonterminating trace starting in $P$, respectively. However, those nodes also subsume Lisbon Logic and Incorrectness Logic, respectively. By contrast, the FUA variant of UNTer does not appear in \Cref{fig:vk-taxonomy}, since there is no strongest postcondition transformer that includes nontermination information.

Cousot includes one more logic that does not fit into the hierarchy---\emph{Hoare Incorrectness Logic}---which is the negation of a Hoare Triple $\lnot\triple{\ov P}C{\always Q}$. In \TOL, we can express Hoare Incorrectness Logic as
$\triple{\un P}C{\sometimes \lnot Q}$, which is also equivalent to $\exists \sigma\in P.\ \exists \tau \in \supp(\de{C}(\sigma)).\ \tau \notin Q$. This logic identifies a single trace---\ie a counterexample---invalidating the correctness specification.

\section{A New Taxonomy for Correctness and Incorrectness}
\label{sec:new-taxonomy}

We have shown that \TOL has the expressive power to subsume all the logics in both the \citet{verscht2025taxonomy} taxonomy of Hoare-like logics and the \citet{cousot2024calculational} cube. However, although the prior two sections have explored more than 16 different program logics, only a few of them have found practical use. The front face of the upper cube in \Cref{fig:vk-taxonomy}---containing (partial and total) Hoare Logic and Lisbon Logic---is well understood, and serves as the foundation of decades of program analysis research. Incorrectness Logic has served as a semantic justification for real-world static analysis systems \cite{realbugs,isl}, although it was more recently shown that Lisbon Logic also models those systems \cite{zilberstein2024outcome,raad2024nontermproving} and has some advantages in its ability to identify the cause of a bug \cite{outcome,ascari2023sufficient}.

The remaining logics are not readily applicable to real static analysis problems. Their underlying concepts---demonic strongest postconditions (being reachable from \emph{all} start states) and liberal postconditions (being unreachable from any start state)---neatly form Galois connections and complete the cubes, but defy intuitions and do not map to any of the properties about programs that have arisen in the past several decades of static analysis research. Even worse, some of these properties cannot be described using inductive rules \cite{verscht2025taxonomy}.

By contrast, Total Outcome Logic offers more expressiveness to reason about properties in that upper square.
This expressivity stems from lifting reasoning to the level of sets of outcomes, where we can encode reachability and termination requirements as outcome assertions rather than building them into the interpretation of our triples. This lends \TOL far greater versatility than simply instantiating the 16+ different logics, as we are no longer confined to the corners of the cube, but can navigate freely in the volume within.

Importantly, this enables shared program analyses in \TOL for different properties. 
While \Cref{fig:vk-taxonomy} suggests that Total Hoare Logic is the \emph{strongest} logic in that upper front square, this gives only a limited, and perhaps misleading, picture as to the compatibility of the logics in that square.

Total Hoare triples give over-approximate correctness specifications, while Lisbon triples target the existence of a more select set of outcomes---potential bugs, for instance. More precisely, suppose that $Q_\ok$ represents some collection of good states and $Q_\er$ represents bugs states. It may be the case that the bug described by $Q_\er$ \emph{sometimes} occurs and so the Lisbon triple $\triple{\ov P}C{\sometimes Q_\er}$ is valid. But since the bug does not \emph{always} occur, total Hoare Logic can only use a weaker postcondition $\triple{\ov P}C{\alwaystot(Q_\ok \vee Q_\er)}$. These two specification are incompatible; there is no weakening relation between them. Similar issues arise between the other logics when postconditions differ. Some logics are wholly incompatible---\ie one type of triple cannot be used in a subderivation of the other whatsoever, such as between partial correctness $\triple{\ov{P}}{C}{\always Q}$ and Lisbon Logic $\triple{\ov{P}}{C}{\sometimes Q}$---rendering them isolated and disconnected for the purpose of analysis. 

In \TOL, we can instead perform the bulk of our program analysis on finer-grained intermediate specifications, where we track \emph{reachable} outcomes of interest with $\oplus$ in the form of assertions: $\ov{Q_1} \oplus \cdots \oplus \ov{Q_n}$. Then, letting $Q = Q_1 \vee \cdots \vee Q_n$ and $1 \le k \le n$, we can weaken these to our desired property with an application of \sruleref{Consequence}, as shown below.
\begin{equation}\label{eq:taxonomy}
\begin{tikzcd}[
    3d cd=0em, row sep=.75em, column sep=3em]
  &
  \mynode{\triple{\ov{P}}C{\sometimes{Q_k}}}{Lisbon Logic} \ar[d, rightarrow] \ar[r, rightarrow]
  &
  \mynode{\triple{\ov{P}}C{\sometimespart{Q_k}}}{Angelic Partial Correctness} \ar[d, rightarrow]
  \\
  \triple{\ov{P}}C{\ov{Q_1} \oplus \cdots \oplus \ov{Q_n}} \ar[ur, rightarrow,bend left=15] \ar[dr, rightarrow,bend right=15]
  &
  \mynode{\triple{\ov{P}}C{\sometimes{Q}}}{Lisbon Logic} \ar[r, rightarrow]
  &
  \mynode{\triple{\ov{P}}C{\sometimespart{Q}}}{Angelic Partial Correctness}
  \\
  &
  \mynode{\triple{\ov{P}}C{\alwaystot Q}}{Total Hoare Logic} \ar[r, rightarrow]\ar[u, rightarrow]
  &
  \mynode{\triple{\ov{P}}C{\always Q}}{Partial Hoare Logic} \ar[u, rightarrow]
\end{tikzcd}
\end{equation}
The commuting square from the front of the upper cube in \Cref{fig:vk-taxonomy} still appears, but above it there is also a Lisbon triple with a stronger postcondition $Q_k \Rightarrow Q$, which is incomparable to the total Hoare triple below it. Yet, both specifications stem from a common \TOL triple, shown on the left of the figure. This is significant because helper functions in a codebase can be given specifications in \TOL and then be repurposed for total correctness, partial correctness, and/or incorrectness.

With this in mind, we present our full taxonomy in \Cref{fig:new-taxonomy}, which is a superset of the one above. The full taxonomy includes more of the recently introduced logics for (in)correctness, and (non)termination, including Incorrectness Logic \cite{il}. All of these logics are arranged on the front face of the diagram, and common \TOL specifications are placed behind. 

\newcommand{\smoplss}{\oplus\cdotp\!\cdotp\!\cdotp\!\oplus}
\begin{figure}
\begin{tikzcd}[
  3d cd=-4.5em, row sep=.5em, column sep=-2em]
\mynode{\triple{\un P}C{\sometimes Q_k}}{Hoare Incorrectness Logic}
\\
\phantom{XXX}
&
\scalebox{.75}{$\triple{\ex{P}}C{\ex{Q}}$} \ar[d]
&
\scalebox{.6}{$\triple{\ov{P}}C{\ov{Q_1} \smoplss\ov{Q_n}}$} \ar[d] \ar[rrd]
&
\phantom{XXX}
\\
\mynode{\triple{\un{P}}C{\un{Q_k}}}{Incorrectness Logic} \ar[uu]
&
\mynode{\triple{\ex{P}}C{\ex{Q}}}{Total Exact Logic} \ar[r]
&
\mynode{\triple{\ov{P}}C{\alwaystot{Q}}}{Total Hoare Logic}  \ar[r]
&
\mynode{\triple{\ov{P}}C{\sometimes{Q}}}{Lisbon Logic}
&
\mynode{\triple{\ov{P}}C{\sometimes{Q_k}}}{Lisbon Logic} \ar[dd] \ar[l] \ar[lllluu, bend right=12]
\\
&
\scalebox{.75}{$\triple{\ex{P}}C{\ex{Q}\oplus\mathord\Uparrow}$} \ar[d] \ar[dddl, bend right=29]
&
\scalebox{.6}{$\triple{\ov{P}}C{\ov{Q_1} \smoplss\ov{Q_n}\oplus\mathord\Uparrow}$} \ar[d] \ar[rrddd, bend left=27] \ar[urr, bend right=10]
\\
\mynode{\triple{\un{P}}C{\un{Q}}}{Incorrectness Logic} \ar[uu]
&
\mynode{\triple{\ex{P}}C{\ex{Q} \oplus \always\fls}}{(Partial) Exact Logic} \ar[l, crossing over] \ar[r] \ar[uu, crossing over, leftarrow]
&
\mynode{\triple{\ov{P}}C{\always Q}}{Partial Hoare Logic} \ar[r] \ar[uu, crossing over, leftarrow]
&
\mynode{\triple{\ov{P}}C{\sometimespart{Q}}}{Angelic Part. Cor.} \ar[uu, crossing over, leftarrow]
&
\mynode{\triple{\ov{P}}C{\sometimespart{Q_k}}}{Angelic Part. Cor.} \ar[l, crossing over]
\\
&
\scalebox{.75}{$\triple{\ex{P}}C{\Uparrow}$} \ar[d]
&
\scalebox{.75}{$\triple{\ov{P}}C{\Uparrow}$} \ar[d] \ar[l] 
\\
\mynode{\triple{\un{P}}C{\sometimespart\fls}}{UNTer (FUA)}
&
\mynode{\triple{\ex{P}}C{\always\fls}}{Exact Logic} \ar[l] \ar[r]
&
\mynode{\triple{\ov{P}}C{\always\fls}}{Partial Hoare Logic} \ar[rr] \ar[uu, crossing over]
&&
\mynode{\triple{\ov{P}}C{\sometimespart{\fls}}}{UNTer (BUA)} \ar[uu] \ar[llll, bend left=8]
\end{tikzcd}
\caption{A new taxonomy of correctness and incorrectness logics, where $Q = Q_1 \vee \cdots \vee Q_n$ and $1 \le k \le n$.}
\label{fig:new-taxonomy}
\end{figure}

Recall that neither \Cref{fig:vk-taxonomy} nor \Cref{fig:cousot-cube} provide ways to compare Hoare Logic and Incorrectness Logic. The key to doing so is Exact Logic, which expresses both kinds of triples together \cite{maksimovi_c2023exact}. Exact Logic is encoded in \TOL using exact predicates $\ex P$, stating that the reachable states are exactly $P$. Whereas, the original Exact Logic was based on partial correctness, we also include a new Total Exact Logic, which additionally guarantees termination.

But despite our ability to encode Incorrectness and Exact Logics in \TOL, there are advantages to forgoing them and instead using variations of Lisbon Logic for bug-finding \cite{outcome,ascari2023sufficient,raad2024nontermproving}. Whereas Incorrectness Logic witnesses \emph{some} erroneous trace starting from the precondition, Lisbon Logic ensures that the bug can be witnessed from \emph{all} start states. This is an important property in program analysis, making it easier to find \emph{manifest errors}---bugs that can occur regardless of context \cite{realbugs}. Exact Logic was developed for symbolic execution \cite{loow2024compositional}, but is not an ideal foundation for other kinds static analysis as it has no weakening rule, and therefore cannot perform reasoning in abstract domains to make use of, \eg loop invariants \cite{zilberstein2024outcome,ascari2022limits}. Case in point: there is no weakening of the exact triple $\triple{\ex P}C{\ex Q}$ to use the more precise postcondition $\ex{Q_k}$.

By contrast, \TOL offers more versatility to navigate between correctness and incorrectness specifications.
With this in mind, we turn our attention to the portion of the taxonomy containing Partial and Total Hoare Logic, along with everything to the right. This is an expanded version of (\ref{eq:taxonomy}) above, where the unifying power of \TOL shines. In the back there are three forms of \TOL triples. Programs in the first row always terminate, programs in the second row sometimes terminate, and programs in the third row never terminate. A single \TOL triple can subsume partial correctness, total correctness, Lisbon Logic, and nontermination, all of which have been successfully deployed in industrial static analysis tools.
\TOL unifies all these kinds of correctness and incorrectness, making it an ideal logical foundation. In next sections, we will develop reasoning principles for \TOL and illustrate their versatility through a few simple case studies.

\section{Derived Rules}
\label{sec:derived}

Throughout \Cref{sec:vk-taxonomy,sec:cousot,sec:new-taxonomy}, we have seen how \TOL is able to express the specifications of many types of logics. However, it is also important that \TOL provides convenient reasoning principles for those types of specifications. In this section, we derive new rules for common patterns arising from the logics that we have previously seen.

\subsubsection*{If Statements, While Loops, and Divergence.} Recall from \Cref{sec:syntax} that we gave syntactic sugar to define if statements and while loops in terms of the branching, iteration, and $\assume b$ constructs of our language. Whereas we previously only gave rules for standard branching and iteration, we now give rules for if and while. The if rule is the same as the one from standard Outcome Logic, and requires the precondition to be split into two parts to satisfy the guard and its negation. The two branches can then be analyzed compositionally.
\begin{mathpar} \label{rule:if}
    \inferrule*[Right=If]
        {\varphi_1 \vDash b \\ \triple{\varphi_1}{C_1}{\psi_1} \\
        \varphi_2 \vDash \neg b \\ \triple{\varphi_2}{C_2}{\psi_2}
        }
        {\triple{\varphi_1 \oplus \varphi_2}{\iftf{b}{C_1}{C_2}}{\psi_1 \oplus \psi_2}}
\end{mathpar}
The rule for while loops differs from standard Outcome Logic, as it includes nontermination information. We use three families of assertions: $\varphi_n$ represents the outcomes where $b$ remains true, $\psi_n$ represents the outcomes where $b$ is false, and $\zeta_n$ represents nontermination in nested loops.
Compared to the \textsc{Iter} rule that we saw previously, the \textsc{While} rule only has a single premise, as the entailments for $\varphi_n$ and $\psi_n$ remove the need to derive the behaviors of the \textsc{assume} commands.
\begin{mathpar} \label{rule:while}
    \inferrule*[Right=\textsc{While}]
        {(\psi_n)_{n \in \N} \rightsquigarrow \psi_\infty \\
        (\varphi_n \oplus \psi_n \oplus \zeta_n)_{n \in \N} \Uparrow \zeta_\infty \\
        \forall n \in \N. \\
            \varphi_n \vDash b \\ \psi_n \vDash \neg b \\ \zeta_n \vDash \div \\ 
            \triple{\varphi_n \oplus \zeta_n}{C}{\varphi_{n+1} \oplus \psi_{n+1} \oplus \zeta_{n+1}}
        }
        {\triple{\varphi_0 \oplus \psi_0 \oplus \zeta_0}{\whl{b}{C}}{\psi_\infty \oplus \zeta_\infty}}
\end{mathpar}
Standard Outcome Logic can only reason about terminating outcomes, and is thus only able to detect nontermination when no terminating executions are witnessed, \ie via triples of the form $\tripleol{\varphi}{C}{\0 \odot \top}$. Whereas this restricts proofs to \textit{possible termination} and \textit{guaranteed nontermination}, \TOL now allows us to demonstrate \textit{sure-termination} or \textit{possible nontermination}. 

In terms of triples, the former is often bundled with a safety property as a total correctness specification. We can verify total correctness using \textit{variants}, which track the number of iterations a loop makes until reaching a zero-variant -- a point of sure-termination where the loop guard fails. In addition, a program cannot interfere with the existing weight on nontermination $\nonterm$. This means that any outcome assertion that only describes nontermination should be preserved across triples. It is useful to expand the \sruleref{Div} rule to reflect this.
\begin{mathpar} \label{rule:variant}
    \inferrule*[Right=Variant]
        {\forall n < N. \\
            \varphi_{n+1} \vDash b \\ \varphi_0 \vDash \neg b \\
            \triple{\varphi_{n+1}}{C}{\varphi_n}    
        }
        {\triple{\exists N : \N.\: \varphi_N}{\whl{b}{C}}{\varphi_0}}
 
     \inferrule*[Right=Div*]
        {\zeta \vDash \div}
        {\triple{\zeta}{C}{\zeta}}
\end{mathpar}

\subsubsection*{Hoare Logic and Lisbon Logic.} We will now derive the rules of Lisbon Logic and partial and total correctness Hoare Logic. We will present most of the rules as total correctness triples (of the form $\triple{\ov P}C{\alwaystot Q}$), since $\alwaystot Q \Rightarrow \always Q$, and therefore a total correctness triple can be weakened to be a partial correctness one. We first give rules for sequencing total specifications, since the modalities used in pre- and postconditions do not exactly match.
\begin{mathpar}
    \inferrule*[right={Seq-Total-Hoare}]
        {\triple{\ov P}{C_1}{\alwaystot Q} \quad
        \triple{\ov Q}{C_2}{\alwaystot R}
        }
        {\triple{\ov P}{C_1 \seq C_2}{\alwaystot R}} 
    \qquad
    \label{rule:seq}
    \inferrule*[right={Seq-Lisbon}]
        {\triple{\ov P}{C_1}{\sometimes Q} \quad
        \triple{\ov Q}{C_2}{\sometimes R}
        }
        {\triple{\ov P}{C_1 \seq C_2}{\sometimes R}}
\end{mathpar}
Similarly, we give a rule for if statements in Hoare and Lisbon Logic, which do not require entailments since the premises of the rules directly specify whether the guard is true or not.
\begin{mathpar} 
    \label{rule:if-hoare}
    \inferrule*[right={If-Hoare}]{
      \triple{\ov{P \land b}}{C_1}{\alwaystot Q}
      \\
      \triple{\ov{P \land \neg b}}{C_2}{\alwaystot Q}
    }{
      \triple{\ov P}{\iftf{b}{C_1}{C_2}}{\alwaystot Q}
    }

    \label{rule:if-lisbon}
    \inferrule*[right={If-Lisbon}]
        {\triple{\ov{P \land b}}{C_1}{\sometimes Q} \\ \triple{\ov{P \land \neg b}}{C_2}{\sometimes Q}}
        {\triple{\ov P}{\iftf{b}{C_1}{C_2}}{\sometimes Q}}
\end{mathpar}
The inference rules for different variants of these logics differs the most when it comes to loops. It is well known that the invariant rule is complete for reasoning about loops in partial correctness Hoare Logic \cite{cook1978soundness}. We derive the invariant rule in our embedded logic.
\begin{mathpar} \label{rule:invariant}
    \inferrule*[Right=\textsc{Invariant}]
        {\triple{\ov{P \land b}}{C}{\always P}}
        {\triple{\ov P}{\whl{b}{C}}{\always(P \land \neg b)}}
\end{mathpar}
We can also derive the standard variant rule employed in total Hoare Logic. The rule uses an integer-valued \emph{ranking} expression $R$, which represents how close the program is to termination. In the premise, we are given that each iteration of the loop strictly decreases $R$, and the loop guard fails once $R$ reaches 0. This allows us to conclude that the loop surely terminates:
\begin{mathpar} \label{rule:hoare-variant}
\hspace{-1.7cm}   \inferrule*[Right=Hoare-Variant]
        {P \land b \implies R > 0 \\
        \forall n \in \N.\:\: \triple{\ov{P \land b \land R = n}}{C}{\alwaystot (P \land R < n)}}
        {\triple{\ov{P \land R \leq N}}{\whl{b}{C}}{\alwaystot (P \land \neg b)}}
\end{mathpar}

\subsubsection*{Nontermination.}
Bug-finding is often a more practical goal than correctness verification \cite{il}. We can thus shift our sight from proving total correctness to proving the presence of nontermination by establishing a \textit{quasi-invariant} \cite{larraz2014maxsmt}. This is a property preserved by the loop body which ensures that the loop guard holds; the quasi-invariant forces indefinite reiteration, and any computation that satisfies the quasi-invariant once is trapped within the loop. Taken semantically, this aligns closely with the notion of a \textit{recurrent set} \cite{gupta2008nonterm} of program states, visited infinitely often by the loop. 
In a nondeterministic setting, we can define this either angelically or demonically to show that the program \textit{sometimes} or \textit{always} diverges, respectively. Using the modalities from \Cref{sec:vk-taxonomy}, we get:
\[    \sometimespart\fls = \exists u : U\setminus\{\0\}.\: \diverge{u} \oplus \top 
        \qquad\qquad
     \always\fls = \exists u : U.\: \diverge{u}
\]
Then, the following two rules can be derived to prove nontermination.
\begin{mathpar} 
  \hspace*{-2.1cm}   \label{rule:quasi-invariant}
      \inferrule*[Right=QInv-Angel]
        {\ov P \vDash b \\ \triple{\ov P}{C}{\sometimes P}}
        {\triple{\ov P}{\whl{b}{C}}{\sometimespart \fls}}
        
  \hspace*{1cm}
    \inferrule*[Right=QInv-Demon]
        {\ov P \vDash b \\ \triple{\ov P}{C}{\always P}}
        {\triple{\ov P}{\whl{b}{C}}{\always \fls}}
\end{mathpar}
\section{Case Studies}
\label{sec:examples}

We want to emphasize that the power afforded by Total Outcome Logic lies in its subsumption of different types of reasoning, especially pertaining to termination and nontermination. To demonstrate the breadth of program properties that can be proven in \TOL, we deploy it on a few examples. In these examples, we use variable assignments $x \coloneqq E$ as basic actions. These actions are defined in the usual way, as in \cite{zilberstein:completeOL}.

\subsection{Total Correctness: Quicksort Partition}

In the classical formulation of Hoare logic for total correctness, \cite{manna1974axiomatic} derive the correctness of the Quicksort partition algorithm, where $A$ is an integer array of size $n+1$ and $p$ is the pre-computed pivot value:
\begin{align*}
    \textsc{Partition} \triangleq \begin{cases}
        & i := 0 \seq j := n \seq \\
        & \code{while $i \leq j$ do } \\
        & \qquad \code{if $A[i] \leq p$ then $i := i + 1 \seq$} \\
        & \qquad \code{else if $A[j] \geq p$ then $j := j - 1 \seq$} \\
        & \qquad \code{else } 
        \code{swap}(A, i, j) \seq 
        i := i + 1 \seq 
        j := j - 1 
    \end{cases}
\end{align*}
Total correctness of \textsc{Partition} can similarly be established in \TOL. Again, we can encode a total correctness triple $\total{P}{C}{Q}$ as outcome triple $\triple{\ov P}{C}{\alwaystot Q}$. As the same inference rules available in total Hoare logic can be derived in \TOL, our desired proof follows essentially the same steps. First, for a given pivot value $p$, define predicates $\alpha$ and $\beta$ over integers:
\[
    \alpha(i) \triangleq \bigwedge_{0 \leq k < i} A[k] \leq p
    \qquad 
    \beta(j) \triangleq \bigwedge_{j < k \leq n} A[k] \geq p
\]
We want to derive $\triple{\ov\tru}{\textsc{Partition}}{\alwaystot (\alpha(i) \land \beta(j) \land i > j)}$, which states that any run of the program must terminate having divided the array into two sections: elements $\leq p$ and those $\geq p$. We decorate \textsc{Partition} in \Aref{fig:partition} in \Aref{app:partition} to sketch the full proof.

The last step of the proof uses \textsc{Hoare-Variant}. Our \textit{invariant} $\alpha(i) \land \beta(j)$ enforces two boundaries in the array; every element behind index $i$ must be $\leq p$, while every element after $j$ must be $\geq p$. A natural choice for the \textit{variant} is then the distance between the boundaries $j - i$, which we show must approach one another after each iteration. That is, we need to derive the following across the loop body $C$: 
\[
    \triple{\ov{\alpha(i) \land \beta(i) \land i \leq j \land j - i = m}}{C}{\alwaystot(\alpha(i) \land \beta(i) \land j - i < m)}
\]
Clearly, the loop condition $i \leq j$ implies $j - i > 0$. This provides us with all the premises needed to conclude $\triple{\ov{\alpha(i) \land \beta(j)}}{\textsc{Partition}}{\alwaystot(\alpha(i) \land \beta(j) \land i > j)}$.

\subsection{Probabilistic Nontermination}

Whereas much of the work on probabilistic verification focuses on \emph{almost sure termination} \cite{kaminski,mciver2005abstraction,mciver2018new}---programs that terminate with probability 1---there are many programs simulating physical phenomena, which do not almost surely terminate, but are still important to reason about \cite{icard2017almost-sure}. An example of such programs is a scenario where a tortoise marches forward at a constant speed, and an erratic hare attempts to catch up.

The program below models such a scenario, where $t$ represents the position of the tortoise, $h$ represents the position of the hare, and $k$ is the number of steps that have been taken. The tortoise starts one step ahead of the hare. Each step, the hare either takes a step forward, or leaps forward. However, the hare gets tired over time, so it can only leap a distance of $1+\frac1{2^k}$. The program terminates if the hare catches up to the tortoise. Note that $C_1 +_p C_2$ is syntactic sugar for $(\assume p\fatsemi C_1) + (\assume{1-p}\fatsemi C_2)$.
\[
\begin{array}{ll}
\qquad\qquad\;\;  t \coloneqq 1 \fatsemi h \coloneqq 0 \fatsemi k \coloneqq 0 \fatsemi\phantom{x} \\
\qquad\qquad\;\;  \whl{h < t}{} \\
  C_\mathsf{body} \triangleq \left\{ \begin{array}{l}
  \quad t \coloneqq t+1 \fatsemi \phantom{x} \\
  \quad (h \coloneqq h + 1) +_\frac12 (h \coloneqq h+ 1 +\frac{1}{2^k}) \fatsemi \phantom{x} \\
  \quad k \coloneqq k+1
  \end{array}\right.
\end{array}
\]
The hare catches the tortoise with probability $\frac12$, when its initial move is a leap forward. If the hare's first move is not a leap, then it catches the tortoise with probability 0. We can see this informally, since in the best case, the hare catches up each round by $\frac12+\frac14+\frac18+\cdots$, which converges to 1, but that occurs only if the hare \emph{always} leaps forward, an event with probability $\frac12\cdot\frac12\cdot\frac12\cdot\cdots=0$.
We will make this formal using the \textsc{rule:while}{While} rule. For this, we need to define the three families of assertions $\varphi_n$, $\psi_n$ and $\zeta_n$. We start with $\varphi_n$, which describes the outcomes where execution continues.
If $n=0$, then $\varphi_0$ simply describes the start state of the program. If $n=1$, then $\varphi_n$ describes the outcome where the hare's first move was a regular step forward. For $n\ge 2$, we only describe two outcomes. With probability $\frac1{2^n}$, the hare leaps forward on every step after the first one, and in that case, the difference in position is exactly $\frac1{2^n}$. If not, then the hare did not leap on at least one step, in which case the difference in position is strictly greater than $\frac1{2^n}$.
\begin{align*}
  \varphi_0 &\triangleq \ov{t = 1 \land h = 0 \land k = 0}
  \qquad\qquad
  \varphi_1 \triangleq \wg{\ov{t = 2 \land h = 1 \land k = 1}}{\frac12}
  \\
  \varphi_n &\triangleq \wg{\ov{t - h = \frac1{2^{n-1}} \land k = n}}{\frac1{2^n}} \oplus \wg{\ov{t - h > \frac1{2^{n-1}} \land k = n}}{\frac{2^{n-1}-1}{2^n}}
  \quad\text{if}~ n\ge 2
\end{align*}
\begin{figure}
\[\small
  \begin{array}{lc|cl}
    \ob{\varphi_n} \Leftrightarrow
    \ob{\wg{\ov{t - h = \frac1{2^{n-1}} \land k = n}}{\frac1{2^n}} \oplus \wg{\ov{t - h > \frac1{2^{n-1}} \land k = n}}{\frac{2^{n-1}-1}{2^n}}}
    \\
    \;\; t\coloneqq t+1 \fatsemi\phantom{x}
    \\
    \lrob{\wg{\ov{t - h = 1+\frac1{2^{n-1}} \land k = n}}{\frac1{2^n}} \oplus \wg{\ov{t - h > 1+\frac1{2^{n-1}} \land k = n}}{\frac{2^{n-1}-1}{2^n}}}
    \\
    \;\; (h \coloneqq h + 1) +_\frac12 (h \coloneqq h+ 1 +\frac{1}{2^k}) \fatsemi \phantom{x}
    \\
    \lrob{\begin{array}{l}
        \wg{\ov{t - h = \frac1{2^{n-1}} \land k = n}}{\frac1{2^{n+1}}} \oplus
        \wg{\ov{t - h = \frac1{2^{n-1}} - \frac1{2^n} \land k = n}}{\frac1{2^{n+1}}} \oplus\\
        \wg{\ov{t - h > \frac1{2^{n-1}} \land k = n}}{\frac{2^{n-1}-1}{2^{n+1}}} \oplus
        \wg{\ov{t - h > \frac1{2^{n-1}} - \frac{1}{2^n} \land k = n}}{\frac{2^{n-1}-1}{2^{n+1}}}
      \end{array}}
    \\
    \Rightarrow
    \lrob{\begin{array}{l}
        \wg{\ov{t - h > \frac1{2^{n}} \land k = n}}{\frac1{2^{n+1}}} \oplus
        \wg{\ov{t - h = \frac1{2^n} \land k = n}}{\frac1{2^{n+1}}} \oplus\\
        \wg{\ov{t - h > \frac1{2^{n}} \land k = n}}{\frac{2^{n-1}-1}{2^{n+1}}} \oplus
        \wg{\ov{t - h > \frac{1}{2^n} \land k = n}}{\frac{2^{n-1}-1}{2^{n+1}}}
      \end{array}}
    \\
    \Rightarrow
    \ob{\wg{\ov{t - h = \frac1{2^n} \land k = n}}{\frac1{2^{n+1}}} \oplus \wg{\ov{t - h > \frac1{2^{n}} \land k = n}}{\frac{2^n - 1}{2^{n+1}}}}
    \\
    \;\; k \coloneqq k+1
    \\
    \ob{\wg{\ov{t - h = \frac1{2^n} \land k = n+1}}{\frac1{2^{n+1}}} \oplus \wg{\ov{t - h > \frac1{2^{n}} \land k = n+1}}{\frac{2^n - 1}{2^{n+1}}}}
    \Leftrightarrow \ob{\varphi_{n+1}\oplus\psi_{n+1}}
  \end{array}
\]
\caption{Derivation of the $n\ge 2$ case.}
\label{fig:prob-ex}
\end{figure}
The $\psi_n$ assertions describe the terminating outcomes. The program can terminate after the first step, so $\psi_1 \triangleq \wg{\ov{t = 2 \land h = 2 \land k = 1}}{\frac12}$, but $\psi_n \triangleq \wg{\ov\tru}0$ for every other $n\neq 1$. Finally, $(\psi_n)_{n \in \N} \rightsquigarrow\psi_\infty \triangleq \wg{\ov{h=t}}{\frac12}$, the only terminating outcome. The $\zeta_n$ assertions describe nonterminating outcomes. The program only diverges in the limit, so we get $\zeta_n \triangleq \wg{\ov\tru}0$ for all $n\in\mathbb N$ and $\zeta_\infty \triangleq \mathord{\wg{\Uparrow}{\frac12}}$. Clearly $(\varphi_n \oplus \zeta_n)_{n \in \N} \Uparrow \zeta_\infty$, since according to $\varphi_n$, the probability that $t>h$ after $n$ iterations is $\frac{2^{n-1}-1}{2^n}$, which converges to $\frac12$.
We now establish the premises of the \ruleref{rule:while}{While} rule, ignoring the $\zeta_n$ terms since they are vacuous. We show the cases where $n=0$ and $n=1$ below on the left and right, respectively.
\[\small
\begin{array}{l|l}
\arraycolsep=0pt
  \begin{array}{l}
    \ob{\varphi_0} \Leftrightarrow
    \ob{\ov{t=1\land h=0 \land k=0}}
    \\
    \;\; t\coloneqq t+1 \fatsemi\phantom{x}
    \\
    \ob{\ov{t = 2\land h=0\land k=0}}
    \\
    \;\; (h \coloneqq h + 1) +_\frac12 (h \coloneqq h+ 1 +\frac{1}{2^k}) \fatsemi \phantom{x}
    \\
    \ob{\bigoplus_{i \in \{1, 2\}} \wg{\ov{t=2 \land h=i\land k=0}}{\frac12} }
    \\
    \;\; k \coloneqq k+1
    \\
    \ob{\bigoplus_{i \in \{1, 2\}} \wg{\ov{t=2 \land h=i\land k=1}}{\frac12} }
    \\
    \Leftrightarrow \ob{\varphi_1\oplus\psi_1}
  \end{array}
  &
  \begin{array}{l}
    \ob{\varphi_1} \Leftrightarrow
    \ob{\wg{\ov{t=2\land h=1 \land k=1}}{\frac12}}
    \\
    \;\; t\coloneqq t+1 \fatsemi\phantom{x}
    \\
    \ob{\wg{\ov{t=3\land h=1 \land k=1}}{\frac12}}
    \\
    \;\; (h \coloneqq h + 1) +_\frac12 (h \coloneqq h+ 1 +\frac{1}{2^k}) \fatsemi \phantom{x}
    \\
    \ob{\wg{\ov{t=3 \land h=2\land k=1}}{\frac14} \oplus \wg{\ov{t=3\land h=2+\frac12 \land k=1}}{\frac14}}
    \\
    \;\; k \coloneqq k+1
    \\
    \ob{\wg{\ov{t=3 \land h=2\land k=2}}{\frac14} \oplus \wg{\ov{t=3\land h=2+\frac12 \land k=2}}{\frac14}}
    \\
    \Rightarrow \ob{\wg{\ov{t-h = \frac12 \land k=2}}{\frac14} \oplus \wg{\ov{t-h > \frac12 \land k=2}}{\frac14}}
    \\
    \Leftrightarrow \ob{\varphi_2\oplus\psi_2}
  \end{array}
\end{array}  
\]
Finally, we show the case where $n\ge 2$ in \Cref{fig:prob-ex}.
We complete the proof with a straightforward application of the \ruleref{rule:while}{While} rule.
\[
  \inferrule*[right=\ruleref{rule:while}{While}]{
    \forall n\in\mathbb N.\quad
    \triple{\varphi_n}{C_\mathsf{body}}{\varphi_{n+1}\oplus{\psi_{n+1}}}
  }{
    \triple{\ov{t=1 \land h=0\land k=0}}{\whl{h< t}{C_\mathsf{body}}}{ \wg{\ov{h=t}}{\frac12}\oplus \wg{\Uparrow}{\frac12}}
  }
\]
This tells us that the program terminates in a state where $h=t$ with probability $\frac12$ and it diverges (the hare never catches the tortoise) with probability $\frac12$.

\section{Related Work}

\subsubsection*{Correctness, Incorrectness, and Taxonomies of Program Logics.}

Recent years have seen increased interests in program logics with abilities to reason about nondeterminism in varying ways. This was spurred by Incorrectness Logic \cite{il}, which advocated for using angelic nondeterminism to identify programs that \emph{sometimes} have undesirable behavior. Subsequently, new logics such as Outcome Logic \cite{outcome,zilberstein2024outcome,zilberstein2025demonic,zilberstein2024probabilistic}, Hyper Hoare Logic \cite{hyperhoare}, Exact Separation Logic \cite{maksimovi_c2023exact}, Local Completeness Logic \cite{Bruni2021ALF,brunijacm}, and Quantitative Hyper Weakest Pre \cite{zhang2024quantitative} emerged with the ability to reason about both angelic and and demonic nondeterminism in a single framework.

However, none of the above logics handle nontermination; while they can be used to prove that programs \emph{sometimes} terminate or \emph{always} diverge, they cannot be used to prove that programs \emph{always} terminate, or \emph{sometimes} diverge. The former is the well-known total correctness property \cite{manna1974axiomatic}. The latter is useful in program analysis, as many codebases have nontermination bugs that can be found with new static analysis tools \cite{raad2024nontermproving}.
To our knowledge, Total Outcome Logic is the only logic that subsumes all of the aforementioned abilities.

This proliferation of logics sparked efforts to {\em taxonomize} them in terms of weakening relations, Galois connections, and contrapositives.
For instance, \citet{zhang:quantitative-sp} began to organize logics by their characterizations in terms of predicate transformers. Through this lens, they discovered new logics such as \emph{partial incorrectness logic}. The taxonomy of  \citet{zhang:quantitative-sp}  was extended by \citet{verscht2025taxonomy}, who characterize 16+ Hoare-like logics spanning three dimensions of program analysis: (1) correctness vs. incorrectness, (2) totality vs. partiality, and (3) angelic vs. demonic nondeterminism.
 \citet{cousot2024calculational} described a framework for generating program logics calculationally by composing abstractions, resulting in a similar taxonomy of 12 logics, arranged in a cube, where (non)termination reasoning forms a major axis. 

While both taxonomies support partial and total correctness as well as incorrectness, their focus is to show how these emerge separately as specialized capabilities of disjoint logics. \TOL subsumes both taxonomies, uniting these different forms of reasoning under a single metatheoretic foundation. Moreover, intermediate specifications in \TOL enable reusable proof fragments, as they can be weakened to prove the various triples featured in correctness and incorrectness logics.

\subsubsection*{Weighted Programming.}

 \citet{batz2022weighted}  proposed weighted programming as a paradigm in which each outcome of a program is weighted by an element of a semiring. The initial formulation defined a weakest precondition style calculus for deriving the weight of a single outcome. It used an operational semantics based on total semirings to encode the bounded models that we have in this paper, such as probabilistic programs. While the operational semantics could identify infinite traces, the semantics was not derived from a fixed point.

Outcome Logic extended the model of weighted programming to partial semirings, to encode more models \cite{zilberstein:completeOL}. Outcome Logic also has the ability to reason about multiple executions in a single derivation, so that it can characterize the weights of many outcomes at once. Quantitative Weakest Hyper Pre \cite{zhang2024quantitative} is the predicate transformer analogue of Outcome Logic, as weakest preconditions is to Hoare Logic. However, neither Outcome Logic nor Quantitative Weakest Hyper Pre can identify the weight of nonterminating traces.

\subsubsection*{Powerdomains and Unbounded Nondeterminism.}

Powerdomains \cite{plotkin1976powerdomain,smyth1978power} provide a mechanism to lift a dcpo $\langle X, \le\rangle$ to a dcpo on $\mathcal P(X)$ (the powerset of $X$).
There are three standard powerdomains with different properties and tradeoffs, built from the Hoare and \citet{smyth1978power} orders below, and the Egli-Milner order, which combines the two: $S \sqsubseteq_{\mathsf{EM}} T$ iff $S\sqsubseteq_{\mathsf{H}} T$ and $S\sqsubseteq_{\mathsf{S}} T$.
\[
S\sqsubseteq_{\mathsf{H}} T
\tiff
\forall x\in S.\ \exists y\in T.\ x\le y
\qquad\qquad
S\sqsubseteq_{\mathsf{S}} T
\tiff
\forall y\in T.\ \exists x\in S.\ x\le y
\]
When applied to a flat poset $\langle \Sigma_\nonterm, \le\rangle$\footnote{The flat poset has $\nonterm \le \sigma$ and $\sigma\le\sigma$ for all $\sigma\in\Sigma_\nonterm$, but $\sigma\not\le\tau$ for all $\sigma\neq\tau$.}, we obtain the dcpos below \cite{s_ondergaard1992non}.
\[
\arraycolsep=0pt
\begin{array}{rllll}
  \mathsf{Hoare} \quad &\triangleq\quad & \langle  \{ S \cup \{\nonterm\} \mid S \subseteq \Sigma \}, &  \sqsubseteq_{\mathsf{H}} &\rangle
  \cong \langle \mathcal P(\Sigma), \subseteq \rangle
  \\
  \mathsf{Smyth} \quad&\triangleq& \langle  \mathcal P_{\mathsf{fin}}(\Sigma) \cup \{\{ \nonterm \}\}, &  \sqsubseteq_{\mathsf{S}} &\rangle
  \\
  \mathsf{Plotkin} \quad& \triangleq& \langle \mathcal P_{\mathsf{fin}}(\Sigma) \cup \{ S \subseteq \Sigma \mid \nonterm \in S \}, ~
  &\sqsubseteq_{\mathsf{EM}} &\rangle
\end{array}
\]
Only the Hoare powerdomain supports unbounded nondeterminism, however it cannot identify when nontermination may occur (except when there are no terminating outcomes at all), making it a good semantic basis for partial correctness.
Since $\nonterm$ is always present, the Hoare powerdomain can equivalently be defined as the dcpo on $\mathcal P(\Sigma)$ ordered by subset inclusion.
The semantic model of Outcome Logic \cite{zilberstein:completeOL} can be seen as a generalization of the Hoare powerdomain.

The other two powerdomains can identify nontermination, but do not allow unbounded nondeterminism, stemming from the impossibility result of \citet{plotkin:countable-nondeterminism}.
The \citet{smyth1978power} powerdomain consists of all \emph{finite} sets of terminating outcomes and $\{\nonterm\}$. If nontermination is possible at all, then the semantics is $\{\nonterm\}$, so we cannot distinguish a program that sometimes diverges from one that always does. This makes it a good semantic basis for total correctness.

Finally, the \citet{plotkin1976powerdomain} Powerdomain---derived from the Egli-Milner order---contains all finite subsets of $\Sigma$ and infinite subsets containing $\nonterm$. This is reminiscent of the \TOL semantic model, which maximizes the weight of nontermination for infinite sets. Indeed, in the powerset interpretation, the maximum possible weight is 1, and $m(\nonterm) = 1$ means that $\nonterm$ is in the set of outcomes. As such, our dcpo $\langle \WA^\infty(\Sigma_\nonterm), \sqsubseteq\rangle$ can be viewed as a generalization of the Plotkin powerdomain.

\section{Conclusion}

Recent interest in new program logics for correctness and incorrectness has led to the development of unified reasoning frameworks.  Despite being quite extensive, one dimension that was missing in existing frameworks was the ability to reason about both termination and nontermination, leaving the picture incomplete. In this paper, we proposed Total Outcome Logic as a logic to derive precise specifications about both terminating and nonterminating traces. Total Outcome Logic is generalized over a weighted programming model, meaning that reasoning about (non)termination orthogonally extends to programs with other computational effects, such as probabilistic programs.

Outcome Logic also provides an adequate theory to unify correctness and incorrectness analyses \cite{outcome} and Total Outcome Logic extends this theory further to more types of correctness and incorrectness properties. We showed that \TOL subsumes and extends entire taxonomies of program logics \cite{verscht2025taxonomy,cousot2024calculational}, rendering it a powerful metatheoretic foundation for shared program analysis.

As a next step, it would be interesting to build static analysis algorithms that take advantage of \TOL's expressive power. Inspired by techniques such as bi-abduction \cite{biab,zilberstein2024outcome,realbugs,isl}, our new tool would use formulae of the form shown in \Cref{sec:new-taxonomy} to increase the amount of specification sharing over prior work, while supporting partial correctness, total correctness, incorrectness, nontermination, and more all within a single tool. We believe that this extensible approach will help to scale the implementation and execution of static analysis for many different types of programs.

\bibliographystyle{ACM-Reference-Format}
\bibliography{refs}

\ifx\extended\undefined\else
\allowdisplaybreaks
\appendix
\clearpage

{\noindent \huge\bfseries\sffamily Appendix}

\section{Semantics of Tests}
\label{app:defs}

This definition for the semantics of tests $\detest{b} \colon \Sigma\ \to \{\0,\1\}$ was omitted from \Cref{sec:sem-weight}.
\begin{align*}
    \detest{\tru}(\sigma) &\triangleq \1 
    \\
    \detest{\fls}(\sigma) &\triangleq \0
    \\
    \detest{b_1 \lor b_2}(\sigma) &\triangleq \begin{cases}
        \1 & \text{if}~ \detest{b_1}(\sigma) = \1 ~\text{or}~ \detest{b_2}(\sigma) = \1 \\
        \0 & \text{otherwise}
    \end{cases}
    \\
    \detest{b_1 \land b_2}(\sigma) &\triangleq \begin{cases}
        \1 & \text{if}~ \detest{b_1}(\sigma) =  \detest{b_2}(\sigma) = \1 \\
        \0 & \text{otherwise}
    \end{cases}
    \\
    \detest{\lnot b}(\sigma) &\triangleq \begin{cases}
        \1 & \text{if}~ \detest{b}(\sigma) = \0 \\
        \0 & \text{if}~ \detest{b}(\sigma) = \1
    \end{cases}
    \\
    \detest{t}(\sigma) &\triangleq \begin{cases}
        \1 & \text{if } \sigma \in t \\
        \0 & \text{if } \sigma \not\in t
    \end{cases}
\end{align*}

\section{Domain Properties}
Here, we prove a few preliminary results on our semantic domain $\WA(\Sigma_{\nonterm})$. The first of these describes how the Kleisli extension $(-)^{\dagger}: (X \rightarrow \WA(Y)) \rightarrow \WA(X) \rightarrow \WA(Y)$ interacts with $+$, scalar multiplication $\cdot$, and divergence:

\begin{lemma} \label{lemma:bind-effects}
For any $f, g \in (X \rightarrow \WA(Y))$, $m, m' \in \WA(X)$, and $u \in U$:
    \begin{enumerate}
        \item $f^{\dagger}(m + m') = f^{\dagger}(m) + f^{\dagger}(m')$
        \item If $\supp(m) \subseteq \Sigma$, $(f + g)^{\dagger}(m) = f^{\dagger}(m) + g^{\dagger}(m)$
        \item $f^{\dagger}(u \cdot m) = u \cdot f^{\dagger}(m)$
        \item If $\supp(m) \subseteq \Sigma$, $f^{\dagger}(m \cdot u) = f^{\dagger}(m) \cdot u$
        \item $f^{\dagger}(\eta(\nonterm)) = \eta(\nonterm)$
    \end{enumerate}
\end{lemma}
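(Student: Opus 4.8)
The plan is to prove all five identities \emph{pointwise}: fix an arbitrary output $y \in \Sigma_{\nonterm}$ and show the two weighting functions agree at $y$. The uniform recipe is to unfold the definition of the Kleisli extension, which splits each value into an ordinary-state part and a single divergence \emph{carry-through} term,
\[
  f^{\dagger}(m)(y) \;=\; \sum_{x \in \supp(m)\cap\Sigma} m(x)\cdot f(x)(y) \;+\; m(\nonterm)\cdot\eta(\nonterm)(y),
\]
and then push the operation in question through this expression using the semiring axioms (distributivity, associativity, $+$-commutativity) together with the complete-semiring sum laws (1)--(3) recorded earlier. A recurring bookkeeping point is that the support of a combined weighting function need not equal the union or image of the original supports; this is harmless, since any extra index contributes a factor $\zero$ and $\zero\cdot x = \zero$ by annihilation, so I may always sum over a convenient common superset.

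Claims (1), (3), and (5) hold unconditionally and are the quickest. For (1), substituting $(m+m')(x)=m(x)+m'(x)$ and distributing termwise, then regrouping the (possibly infinite) sum by complete-semiring law (2), separates the $\Sigma$-sums of $m$ and $m'$; the carry term splits identically because $(m+m')(\nonterm)=m(\nonterm)+m'(\nonterm)$. Claim (3) is analogous but uses the left-distributive form of law (3) to factor $u$ out of $\sum_x (u\cdot m(x))\cdot f(x)(y)$ after reassociating, while the carry term factors since $(u\cdot m)(\nonterm)=u\cdot m(\nonterm)$. Claim (5) is immediate: $\eta(\nonterm)$ has empty support inside $\Sigma$, so the $\Sigma$-sum is $\zero$ and only the carry term $\eta(\nonterm)(\nonterm)\cdot\eta(\nonterm)(y)=\one\cdot\eta(\nonterm)(y)$ survives.

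Claims (2) and (4) carry the hypothesis $\supp(m)\subseteq\Sigma$, and this is exactly where the divergence term must be controlled. In (2), expanding $(f+g)(x)(y)=f(x)(y)+g(x)(y)$ and splitting the sum produces $f^{\dagger}(m)(y)+g^{\dagger}(m)(y)$ \emph{except} that each of those two expressions supplies its own copy of the carry term $m(\nonterm)\cdot\eta(\nonterm)(y)$, so the right-hand side would double-count divergence unless $m(\nonterm)=\zero$; the hypothesis $\supp(m)\subseteq\Sigma$ forces precisely this. In (4) the hypothesis again eliminates the carry term, so both sides reduce to their $\Sigma$-sums, and the identity $f^{\dagger}(m\cdot u)(y) = f^{\dagger}(m)(y)\cdot u$ then follows by reassociating $m(x)\cdot u\cdot f(x)(y)$ and pulling $u$ past the sum on the right via the right-distributive form of complete-semiring law (3).

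I expect the main obstacle to be the \emph{infinitary} bookkeeping rather than any single clever step: because the framework permits unbounded support, none of the distributivity rearrangements are justified by finite semiring distributivity alone, and each must be discharged through the complete-semiring laws (1)--(3), which are available here by Scott-continuity. The secondary subtlety, visible throughout, is the disciplined treatment of the divergence outcome: the carry-through term behaves like an ordinary summand under $+$ and under scaling, but it is \emph{duplicated} whenever a single bind is split across two Kleisli maps, and it is this duplication that forces the termination hypothesis $\supp(m)\subseteq\Sigma$ in exactly the two identities (2) and (4) that split or re-factor the bind.
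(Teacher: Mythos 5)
Your approach is the same as the paper's: unfold the Kleisli extension pointwise and push the operation through the $\Sigma$-sum and the divergence carry term using semiring distributivity and the complete-semiring sum laws. The paper only writes out items (2) and (5), leaving (1), (3), (4) to the reader, and on those two items you coincide exactly: for (2) the hypothesis $\supp(m)\subseteq\Sigma$ kills $m(\nonterm)\cdot\eta(\nonterm)$, which is precisely the term that would otherwise be duplicated when the bind is split across $f$ and $g$ (the paper drops it in its first equality for the same reason), and for (5) you use $\supp(\eta(\nonterm))\cap\Sigma=\emptyset$ so that only $\eta(\nonterm)(\nonterm)\cdot\eta(\nonterm)=\eta(\nonterm)$ survives, as in the paper. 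Your remark that infinitary rearrangements must go through the complete-semiring laws rather than finite distributivity is a point the paper's own proof of (2) uses silently.

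One flag on item (4), where the paper gives no proof to compare against: after eliminating the carry term, the left side has summands $\bigl(m(x)\cdot u\bigr)\cdot f(x)(y)$ with $u$ in the \emph{middle}, while the right side needs $\bigl(m(x)\cdot f(x)(y)\bigr)\cdot u$; reassociation alone cannot move $u$ past $f(x)(y)$, so your step tacitly uses commutativity of the semiring multiplication (true in all the paper's examples, but not required by its semiring definition). Relatedly, your diagnosis that the hypothesis in (4) exists to eliminate the carry term is not quite where the content lies: the carry terms $\bigl(m(\nonterm)\cdot u\bigr)\cdot\eta(\nonterm)(y)$ and $\bigl(m(\nonterm)\cdot\eta(\nonterm)(y)\bigr)\cdot u$ agree at every $y$ anyway (by annihilation off $\nonterm$ and unitality at $\nonterm$), so in the commutative setting (4) would hold even without the hypothesis. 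This does not invalidate your proof of (4) under the hypothesis in the commutative case, but the justification as written is incomplete for the generality at which the lemma is stated.
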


\begin{proof}
(1), (3), and (4) are left to the reader. We show:
\begin{enumerate}
    \item[(2)] If $\supp(m) \subseteq \Sigma$, then $m(\nonterm) = \0$. We have:
    \begin{align*}
        (f + g)^{\dagger}(m)
        &= \bigg(\sum_{\sigma \in \supp(m) \cap \Sigma} m(\sigma) \cdot (f + g)(\sigma)\bigg) + m(\nonterm) \cdot \eta(\nonterm) \\
        &= \sum_{\sigma \in \supp(m) \cap \Sigma} m(\sigma) \cdot (f(\sigma) + g(\sigma)) \\
        &= \bigg(\sum_{\sigma \in \supp(m) \cap \Sigma} m(\sigma) \cdot f(\sigma)\bigg) + \bigg(\sum_{\sigma \in \supp(m) \cap \Sigma} m(\sigma) \cdot g(\sigma)\bigg) \\
        &= f^{\dagger}(m) + g^{\dagger}(m)
    \end{align*}

    \item[(5)] Note that $\supp(\eta(\nonterm)) = \{\nonterm\}$, so $\supp(\eta(\nonterm)) \cap \Sigma = \emptyset$. Then:
    \begin{align*}
        f^{\dagger}(\eta(\nonterm)) 
        &= \bigg(\sum_{\sigma \in \supp(\eta(\nonterm)) \cap \Sigma} m(\sigma) \cdot f(\sigma) \bigg) + \eta(\nonterm)(\nonterm) \cdot \eta(\nonterm) \\
        &= \eta(\nonterm)(\nonterm) \cdot \eta(\nonterm) \\
        &= \eta(\nonterm)
    \end{align*}
\end{enumerate}
\end{proof}

\section{Totality of Language Semantics}
\todo{Removed multiplication with $\top$ in the definition of $(-)^{\dagger}$. Need to update here.}

\subsection{$\mathcal{D}$-Closure}
To accommodate both a continuous semantics and a (countably) infinite state set, it was necessary to restrict our domain to a subset $\mathcal{D}$ of weighting functions. In particular, we identified two possible classes of semantics with corresponding restrictions on the semiring of weights $\mathcal{A} = \langle X, +, \cdot, \0, \1 \rangle$ used:
\begin{itemize}
    \item \textbf{Conservative}: $\mathcal{A}$ is partial and bounded. Then, we take $\mathcal{D} \triangleq \{m \in \WA^{\infty}(\Sigma) \mid |m| = \top \}$.
    
    \item \textbf{Indicative}: $\mathcal{A}$ is bounded with strongly-infinite top element $\top$. For any other semiring, we can adjoin an element $w \not\in X$ and extend $+$ and $\cdot$ such that $w$ is a strongly-infinite upper bound:
    \begin{itemize}
        \item $w + x = x + w = w$ for all $x \in X$;
        \item $w \cdot x = x \cdot w = w$ for all $x \in X\setminus \{\0\}$;
        \item $w \cdot \0 = \0 \cdot w = \0$.
    \end{itemize}
    In this case, take $\mathcal{D} \triangleq \WA^{\infty}(\Sigma)$. \\
\end{itemize}

In this section we demonstrate that the program semantics is closed in $\mathcal{D}$ for each of the classes described above. 
\subsubsection{Conservative }

\begin{lemma}\label{lem:conservation}
    For any collection $(x_i \in X)_{i \in I}$ such that $\sum_{i \in I} x_i = \top$, if $(m_i)_{i \in I}$ is a family of weighting functions with $|m_i| = \top$ for each $i \in I$, then $\left|\sum_{i \in I} x_i \cdot m_i \right| = \top$.
\end{lemma}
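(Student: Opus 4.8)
The plan is to reduce the claim to a Fubini-style reordering of a doubly-indexed sum and then exploit the fact that conservative schemes force $\top = \one$. First I would unfold the left-hand side: by the pointwise definition of the sum of weighting functions and of mass,
\[
  \left|\sum_{i \in I} x_i \cdot m_i\right|
  = \sum_{\sigma \in \Sigma_{\nonterm}} \left(\sum_{i \in I} x_i \cdot m_i(\sigma)\right),
\]
where the outer sum may range over all of $\Sigma_{\nonterm}$, since states outside the support contribute $\zero$. This exhibits the quantity as a sum over the product index set $\Sigma_{\nonterm} \times I$.

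The crux is then the interchange of summation order. Using the disjoint-union law of the complete semiring (law 2)—applied once to partition $\Sigma_{\nonterm} \times I$ by its first coordinate and once by its second—I would rewrite the display as $\sum_{i \in I}\sum_{\sigma \in \Sigma_{\nonterm}} x_i \cdot m_i(\sigma)$. For each fixed $i$, left-distributivity of multiplication over infinite sums (law 3) lets me factor $x_i$ out, yielding $\sum_{\sigma} x_i \cdot m_i(\sigma) = x_i \cdot |m_i| = x_i \cdot \top$. Hence the whole expression equals $\sum_{i \in I} x_i \cdot \top$.

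It remains to show $\sum_{i \in I} x_i \cdot \top = \top$. Here I would invoke that in any conservative scheme the top element coincides with the multiplicative unit, $\top = \one$: since $\skp$ must denote a weighting function in $\mathcal{D}$, we need $|\unit(\sigma)| = \top$, yet $|\unit(\sigma)| = \one$ by definition of $\eta$. Consequently $x_i \cdot \top = x_i \cdot \one = x_i$, and so $\sum_{i \in I} x_i \cdot \top = \sum_{i \in I} x_i = \top$ by hypothesis. (Alternatively, avoiding $\top = \one$, right-distributivity gives $\sum_{i} x_i \cdot \top = (\sum_{i} x_i)\cdot\top = \top\cdot\top$, reducing the goal to multiplicative idempotence of $\top$.)

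The main obstacle I anticipate is not the algebra but the definedness bookkeeping in the partial, conservative setting: the disjoint-union law presupposes that the reindexed sums exist, and in a partial semiring an infinite sum can be undefined if its partial sums overflow past $\top$. I would discharge this by noting that every finite subsum of the doubly-indexed family is bounded above by $\top$ in the natural order—by monotonicity of $\cdot$ we have $x_i \cdot m_i(\sigma) \le x_i \cdot |m_i| = x_i$, and $\sum_{i} x_i = \top$ is defined—so all intermediate sums remain defined and the completeness axioms apply legitimately. Making this boundedness argument precise, so that the interchange is genuinely justified rather than merely formal, is the one place that requires care.
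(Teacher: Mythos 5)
Your proposal is correct and, at its core, runs the same computation as the paper's own proof, which is the one-line chain
$\left|\sum_{i \in I} x_i \cdot m_i\right| = \sum_{i \in I} x_i \cdot |m_i| = \sum_{i \in I} x_i \cdot \top = \left(\sum_{i \in I} x_i\right)\cdot\top = \top\cdot\top = \top$.
Where you genuinely diverge is at the final step and in the definedness bookkeeping. The paper simply asserts $\top \cdot \top = \top$, although multiplicative idempotence of $\top$ appears nowhere among its hypotheses --- indeed the ``non-idempotent'' in the definition of conservative weighting refers to \emph{addition} ($\top + \top$ is undefined). Your observation that conservativity forces $\top = \one$ --- since $\de{\skp}(\sigma) = \unit(\sigma)$ must lie in $\mathcal{D}$, whence $\top = |\unit(\sigma)| = \one$ --- supplies exactly the justification the paper omits, and your parenthetical fallback $\left(\sum_{i} x_i\right)\cdot\top = \top\cdot\top$ is literally the paper's route, inheriting its unproved idempotence claim. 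Similarly, the paper's first equality silently performs the Fubini-style interchange that you carry out explicitly via the complete-semiring laws, and your concern about definedness of the reindexed sums in the partial setting is well placed: the paper never addresses it, and discharging it does require something like your boundedness-by-$\top$ argument for finite subsums, combined with the standing assumption that the semiring is finitary and Scott-continuous so that the directed family of finite partial sums has a supremum. In short: same skeleton as the paper, but your version makes explicit, and closes, two gaps that the paper's proof leaves open.
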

\begin{proof}
    It holds that
    \begin{align*}
        \left|\sum_{i \in I} x_i \cdot m_i \right|
        = \sum_{i \in I} x_i \cdot |m_i|
        = \sum_{i \in I} x_i \cdot T
        = \left(\sum_{i \in I} x_i\right) \cdot \top 
        = \top \cdot \top 
        = \top
    \end{align*}
\end{proof}

\begin{lemma}
    For any $m \in \mathcal{D}$ and $f \in \Sigma \rightarrow \mathcal{D}$, $\bind(m, f) \in \mathcal{D}$.  
\end{lemma}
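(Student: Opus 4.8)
The statement unfolds into two obligations, since in the conservative setting $\mathcal{D} = \{m \in \WA^{\infty}(\Sigma) \mid |m| = \top\}$: I must show (a) that $\bind(m,f)$ has mass $\top$, and (b) that $\bind(m,f)$ lies in $\WA^{\infty}(\Sigma)$. The first step is to rewrite the Kleisli extension in pointwise-free form,
\[
  \bind(m,f) \;=\; \sum_{\sigma \in \supp(m)\cap\Sigma} m(\sigma)\cdot f(\sigma) \;+\; m(\nonterm)\cdot\eta(\nonterm),
\]
so that both obligations can be attacked through this finite-plus-diverging decomposition rather than through the individual values $\bind(m,f)(y)$.

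For obligation (a), I would appeal directly to \Cref{lem:conservation}. The family of scalar weights is $\{m(\sigma) \mid \sigma \in \supp(m)\cap\Sigma\} \cup \{m(\nonterm)\}$, whose sum is exactly $|m| = \top$ because $m \in \mathcal{D}$. The accompanying family of weighting functions is $\{f(\sigma)\} \cup \{\eta(\nonterm)\}$; each $f(\sigma)$ has mass $\top$ since $f(\sigma) \in \mathcal{D}$, and $\eta(\nonterm)$ has mass $\1$, which equals $\top$ in the conservative setting (this is forced, e.g., by requiring $\eta(\sigma) = \de{\skp}(\sigma) \in \mathcal{D}$). Hence every weighting function in the family has mass $\top$, and \Cref{lem:conservation} yields $|\bind(m,f)| = \top$ at once; in passing, this also certifies that the (possibly infinite) sums defining $\bind(m,f)$ are defined.

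For obligation (b), I would argue that a weighting function of mass $\top$ must already maximize its weight on $\nonterm$, so that it belongs to $\WA^{\omega}(\Sigma) \subseteq \WA^{\infty}(\Sigma)$. Writing $s = \sum_{\sigma\in\Sigma}\bind(m,f)(\sigma)$ and $d = \bind(m,f)(\nonterm)$, we have $s + d = \top$. If some $m' \in E(\bind(m,f))$ had $m'(\nonterm) = d'$ with $d' = d + w$ for $w \neq \0$, then the sum $s + d' = \top + w$ would be undefined, contradicting the conservative axiom that $\top$ can be added only to $\0$. Thus $d$ is the largest admissible weight on $\nonterm$, i.e. $d = \sup_{m' \in E(\bind(m,f))} m'(\nonterm)$, giving $\bind(m,f) \in \WA^{\omega}(\Sigma)$.

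The routine part is obligation (a): a one-line invocation of \Cref{lem:conservation} once the decomposition and the identity $\1 = \top$ are in place. The real obstacle is obligation (b), and specifically pinning down the genuine supremum condition of $\WA^{\omega}$ rather than mere non-exceedability of $d$: I must also rule out an admissible $d'$ that is incomparable to $d$ in the natural order. I expect to discharge this using downward-closedness of definedness for partial sums together with boundedness by $\top$, so that the set $\{d' \mid s + d' \text{ defined}\}$ has $d$ as its maximum; should that prove delicate for an exotic semiring, I would fall back on a case split, handling the finite-support case through $\WA^{+}(\Sigma)$ and reserving the maximality argument for when $m$ or some $f(\sigma)$ has infinite support.
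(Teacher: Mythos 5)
You have a genuine gap: the lemma quantifies over $\mathcal{D}$ under \emph{either} weighting scheme, but your first sentence silently fixes the conservative one, and the indicative scheme is never addressed. In the indicative setting the semiring is total, $\mathcal{D} = \WA^{\infty}(\Sigma)$, and there is no mass invariant $|m| = \top$ to propagate, so neither obligation, as you set them up, even applies there; your argument does not transfer. The paper handles this case by a different mechanism, strong-infinity absorption: if $m \in \WA^{\omega}(\Sigma)$ then $m(\nonterm) = \top$, and $\bind(m,f)(\nonterm) = \top + \sum_{\sigma \in \supp(m)\cap\Sigma} m(\sigma)\cdot f(\sigma)(\nonterm) = \top$; if instead $f(\tau) \in \WA^{\omega}(\Sigma)$ for some $\tau \in \supp(m)\cap\Sigma$, then $m(\tau) \neq \0$ gives $m(\tau)\cdot\top = \top$, which again absorbs the whole sum, so $\bind(m,f) \in \WA^{\omega}(\Sigma)$. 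The paper also isolates, before any scheme split, the case where $m$ and every $f(\sigma)$ have finite support: there $\supp(\bind(m,f))$ is a finite union of finite sets, hence $\bind(m,f) \in \WA^{+}(\Sigma)$. That case is needed in both schemes --- in the indicative one nothing else covers it --- yet in your write-up it appears only as a last-resort fallback.

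Within the conservative scheme your route is essentially the paper's: decompose $\bind(m,f)$, feed the weights $(m(\sigma))_{\sigma \in \supp(m)}$ and the mass-$\top$ functions into the conservation lemma (\cref{lem:conservation}), and conclude $|\bind(m,f)| = \top$. (Your use of the main-text definition of $(-)^{\dagger}$ together with the observation that $\1 = \top$ is forced in the conservative setting is defensible; the appendix instead inserts a factor $\top$ on the $\nonterm$ summand, a discrepancy the authors themselves flag in a pending revision note.) Your obligation (b) is where you are candid but not done: as you say yourself, ruling out an admissible $d' > d$ does not establish $d = \sup_{m' \in E(\bind(m,f))} m'(\nonterm)$, since an incomparable admissible $d'$ is not excluded, and ``downward-closedness of definedness'' is not among the paper's axioms on naturally ordered partial semirings. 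For what it is worth, the paper's own conservative subcase also stops at $|\bind(m,f)| = \top$ without explicitly verifying the $\WA^{\omega}$ condition, so you are no worse off on that point; but the cleanest repair is to adopt the paper's structure --- dispose of the finite-support case first via $\WA^{+}(\Sigma)$, then argue about $\nonterm$ --- and, decisively, to add the missing indicative case.
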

\begin{proof}
    Recall that $\mathcal{D} \subseteq \WA^{\infty}(\Sigma) = \WA^+(\Sigma) \cup \WA^{\omega}(\Sigma)$. First, suppose that $m, f(\sigma) \in \WA^+(\Sigma)$ (i.e. $m$ and $f(\sigma)$ have finite support) for all $\sigma \in \supp(m) \cap \Sigma$. Recall:
    \[
        \bind(m, f) = m(\nonterm) \cdot \top \cdot \mloop + \sum_{\sigma \in \supp(m) \cap \Sigma} m(\sigma) \cdot f(\sigma)
    \]
    So, 
    \[
        \supp(\bind(m, f)) = (\supp(m) \cap \{\nonterm\}) \cup \bigcup_{\sigma \in \supp(m) \cap \Sigma} \supp(f(\sigma))
    \]
    Since $\supp(m)$ is finite, and $\supp(f(\sigma))$ is finite for each $\sigma \in \supp(m) \cap \Sigma$, the above expression consists of a finite union of finite sets, which is itself finite. Thus, $\bind(m, f) \in \WA^+(\Sigma) \subseteq \WA^{\infty}(\Sigma)$. \\

    Next, we consider the case in which $m \in \WA^{\omega}(\Sigma)$ or $f(\sigma) \in \WA^{\omega}(\Sigma)$ for some $\tau \in \supp(m)\cap\Sigma$. We split our analysis across the two weighting schemes defined (and, accordingly, the two definitions for $\mathcal{D}$ provided for each):
    \begin{itemize}
        \item \textbf{Conservative Weighting}: $\mathcal{D} = \{m \in \WA^{\infty} \mid |m| = \top\}$. \\
        
        Since $m, f(\sigma) \in \mathcal{D}$ for every $\sigma \in \Sigma$, it holds that 
        \begin{itemize} 
            \item $(m(\sigma) \in X)_{\sigma \in \supp(m)}$ is a family of weights such that $\left| \sum_{\sigma \in \supp(m)} m(\sigma) \right| = |m| = \top$;
            \item $|f(\sigma)| = \top$ for each $\sigma \in \supp(m)\cap\Sigma$;
            \item $|\top \cdot \mloop| = \top$. 
        \end{itemize}
        Thus, by \cref{lem:conservation}, $|\bind(m, f)| = \top$. \\
        
        \item \textbf{Indicative Weighting}: $\mathcal{D} = \WA^{\infty}(\Sigma)$. \\
        
        If $m \in \WA^{\omega}(\Sigma)$, then $m(\nonterm) = \sup_{m' \in E(m)} m'(\nonterm) = \top$, indicating the presence of nontermination. As a strong infinity, $\top$ is absorbing and thus propagates through $\bind(m, f)$:
        \begingroup
        \addtolength{\jot}{1em}
        \begin{align*}
            \bind(m, f)(\nonterm) 
            &= m(\nonterm) \cdot \top \cdot \mloop(\nonterm) +\sum_{\sigma \in \supp(m) \cap \Sigma} m(\sigma) \cdot f(\sigma)(\nonterm)  \\
            &= \top \cdot \top \cdot \1 + \sum_{\sigma \in \supp(m) \cap \Sigma} m(\sigma) \cdot f(\sigma)(\nonterm) \\
            &= \top + \sum_{\sigma \in \supp(m) \cap \Sigma} m(\sigma) \cdot f(\sigma)(\nonterm) \\
            &= \top = \sup_{m' \in E(\bind(m, f))} m'(\nonterm)
        \end{align*}    
        \endgroup
        Similarly, if $f(\tau) \in \WA^{\omega}(\Sigma)$ for some $\tau \in \supp(m) \cap \Sigma$:
        \begingroup
        \addtolength{\jot}{1em}
        \begin{align*}
            \bind(m, f)(\nonterm) 
            &= m(\nonterm) \cdot \top \cdot \mloop(\nonterm) + \sum_{\sigma \in \supp(m) \cap \Sigma} m(\sigma) \cdot f(\sigma)(\nonterm) \\
            &=  \big(m(\nonterm) \cdot \top \cdot \1\big) + \big(m(\tau) \cdot f(\tau)(\nonterm)\big) + \sum_{\sigma \in \supp(m) \cap \Sigma/\{\tau\}} m(\sigma) \cdot f(\sigma)(\nonterm) \\
            &= \big(m(\nonterm) \cdot \top \cdot \1\big) + \big(m(\tau) \cdot \top\big) + \sum_{\sigma \in \supp(m) \cap \Sigma/\{\tau\}} m(\sigma) \cdot f(\sigma)(\nonterm) \\
            \intertext{Since $\tau \in \supp(m)$, we know $m(\tau) \neq 0$, so $m(\tau) \cdot \top = \top$:}
            &= \big(m(\nonterm) \cdot \top \cdot \1\big) + \top + \sum_{\sigma \in \supp(m) \cap \Sigma/\{\tau\}} m(\sigma) \cdot f(\sigma)(\nonterm) \\
            &= \top = \sup_{m' \in E(\bind(m, f))} m'(\nonterm)
        \end{align*}
        \endgroup
        Therefore, $\bind(m, f) \in \WA^{\omega}(\Sigma) \subseteq \WA^{\infty}(\Sigma) = \mathcal{D}$. 
    \end{itemize}

\end{proof}

\subsection{Fixpoint Existence}

\begin{theorem}[Scott-Continuity of $\WA(\Sigma_{\nonterm})$]
    Suppose $\mathcal{A}$ is a partial semiring which is bounded, finitary, Scott-continuous, and lower Scott-continuous. Then, $+: \WA(\Sigma_{\nonterm})^2 \rightarrow \WA(\Sigma_{\nonterm})$ on weighting functions is Scott-continuous with respect to the \textbf{fusion order} $\sqsubseteq$. That is, for any directed subset $D \subseteq \WA(\Sigma_{\nonterm})$,
    \[
        \sup_{m_1 \in D}(m_1 + m_2) = \sup D + m_2
    \]
\end{theorem}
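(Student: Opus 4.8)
The plan is to reduce the statement to the componentwise behaviour of the fusion order and then invoke the two continuity hypotheses on the semiring $\mathcal{A}$ separately: ordinary Scott-continuity for the $\Sigma$-components, and lower Scott-continuity for the $\nonterm$-component. The first thing I would record is that $+$ is monotone with respect to $\sqsubseteq$: if $m_1 \sqsubseteq m_1'$ then on each $\sigma \in \Sigma$ we have $m_1(\sigma) \le m_1'(\sigma)$, and adding $m_2(\sigma)$ to both sides preserves $\le$ by definition of the natural order together with commutativity of $+$, while on $\nonterm$ the inequality $m_1(\nonterm) \ge m_1'(\nonterm)$ is preserved the same way. Monotonicity guarantees that the image set $\{m_1 + m_2 \mid m_1 \in D\}$ is directed whenever $D$ is, so the left-hand supremum is taken over a directed set and exists.

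Next I would make explicit how suprema in the fusion order are computed. Because $\sqsubseteq$ is covariant on the $\Sigma$-components and contravariant on $\nonterm$, for any directed $D$ we have $(\sup D)(\sigma) = \sup_{m \in D} m(\sigma)$ for $\sigma \in \Sigma$ (a supremum in the natural order on $\mathcal{A}$) and $(\sup D)(\nonterm) = \inf_{m \in D} m(\nonterm)$ (an infimum). I would then check the directedness of the projected families: for fixed $\sigma \in \Sigma$ the set $\{m(\sigma) \mid m \in D\}$ is directed, since an upper bound $m''$ of $m, m'$ in $D$ satisfies $m''(\sigma) \ge m(\sigma)$ and $m''(\sigma) \ge m'(\sigma)$; dually, $\{m(\nonterm) \mid m \in D\}$ is lower-directed, since that same $m''$ satisfies $m''(\nonterm) \le m(\nonterm)$ and $m''(\nonterm) \le m'(\nonterm)$.

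With this setup the equation splits into two pointwise verifications. On each $\sigma \in \Sigma$,
\[
  \Big(\sup_{m_1 \in D}(m_1 + m_2)\Big)(\sigma) = \sup_{m_1 \in D}\big(m_1(\sigma) + m_2(\sigma)\big) = \Big(\sup_{m_1 \in D} m_1(\sigma)\Big) + m_2(\sigma),
\]
where the last step is exactly Scott-continuity of semiring addition applied to the directed set $\{m_1(\sigma)\}$, and the result equals $((\sup D) + m_2)(\sigma)$. On $\nonterm$ the computation is identical but with infima, and the crucial step
\[
  \inf_{m_1 \in D}\big(m_1(\nonterm) + m_2(\nonterm)\big) = \Big(\inf_{m_1 \in D} m_1(\nonterm)\Big) + m_2(\nonterm)
\]
is now an instance of lower Scott-continuity applied to the lower-directed set $\{m_1(\nonterm)\}$. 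Since the two functions agree at every point of $\Sigma_{\nonterm}$, they are equal.

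The main obstacle, and the reason both continuity assumptions are needed, is precisely the mixed variance of the fusion order: the supremum in $\WA(\Sigma_{\nonterm})$ is a pointwise supremum on $\Sigma$ but a pointwise infimum on $\nonterm$, so the $\nonterm$-component cannot be handled by ordinary Scott-continuity and instead forces the lower Scott-continuity hypothesis together with the verification that the relevant family is filtered rather than directed. A secondary point is partiality of $\mathcal{A}$: I would note that all sums appearing above are defined because they are dominated by the masses of weighting functions in $\WA(\Sigma_{\nonterm})$, and boundedness of $\mathcal{A}$ ensures the suprema and infima exist, so no partiality obstruction arises.
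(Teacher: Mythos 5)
Your proof is correct and follows essentially the same route as the paper's: a pointwise verification that uses Scott-continuity of $+$ in $\mathcal{A}$ on the $\Sigma$-components and lower Scott-continuity on the $\nonterm$-component, then reassembles via the mixed-variance definition of the fusion order. Your additional checks (monotonicity of $+$, directedness of the projected families, and the remark on partiality) are details the paper leaves implicit, but they do not change the approach.
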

\vspace{1em}
\begin{proof}
    Let $\sigma \in \Sigma$. Then,
    \begin{align*}
        \sup_{m_1 \in D}((m_1 + m_2)(\sigma)) 
        &= \sup_{m_1 \in D}(m_1(\sigma) + m_2(\sigma)) \\
        &= \sup_{m_1 \in D}(m_1(\sigma)) + m_2(\sigma) 
        & \text{(Scott-continuity of $\mathcal{A}$)} \\
        &= \big(\sup_{m_1 \in D} m_1(\sigma)\big) + m_2(\sigma) 
    \intertext{Dually,}
        \inf_{m_1 \in D}((m_1 + m_2)(\nonterm)) 
        &= \inf_{m_1 \in D}(m_1(\nonterm) + m_2(\nonterm)) \\
        &= \inf_{m_1 \in D}(m_1(\nonterm) + m_2(\nonterm)
        & \text{(Lower Scott-continuity of $\mathcal{A}$)} \\
        &= \big(\inf_{m_1 \in D} m_1(\nonterm)\big) + m_2(\nonterm) 
    \end{align*}
    Now, since fusion order extends $\leq$ pointwise on $\Sigma$ and $\geq$ on $\nonterm$, we have that $\sup_{m_1 \in D}(m_1 + m_2) = \sup D + m_2$. \\
\end{proof}

\begin{lemma}[Joint Continuity]\label{joint-continuity}
    Suppose $X$ is a dcpo. For any function $f: X \times X \rightarrow X$ that is Scott-continuous in the variables separately, $f$ must be continuous in the variables jointly as well. I.e. for directed subsets $D_1, D_2 \subseteq X$,
    \[
        \sup_{x \in D_1, y \in D_2} f(x, y) = f(\sup D_1, \sup D_2)
    \]
\end{lemma}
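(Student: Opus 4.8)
The plan is to reduce the joint supremum to an iterated one and then apply the separate-continuity hypotheses twice. First I would record the standard fact that every Scott-continuous function is monotone, so that separate Scott-continuity of $f$ yields monotonicity in each argument; combining the two, $f$ is monotone for the componentwise order, i.e. $x \sqsubseteq x'$ and $y \sqsubseteq y'$ imply $f(x,y) \sqsubseteq f(x',y')$. Since $D_1$ and $D_2$ are directed, $D_1 \times D_2$ is directed under the componentwise order, and joint monotonicity of $f$ then makes the image $\{f(x,y) \mid x \in D_1,\, y \in D_2\}$ directed. Because $X$ is a dcpo, all the suprema appearing below exist.

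Next I would prove the interchange step: for a monotone $f$ on a product of directed sets, the joint supremum equals the iterated supremum,
\[
  \sup_{x \in D_1,\, y \in D_2} f(x,y) \;=\; \sup_{x \in D_1}\Big(\sup_{y \in D_2} f(x,y)\Big).
\]
The $\sqsubseteq$ direction is immediate, since each inner supremum $\sup_{y} f(x,y)$ dominates every $f(x,y)$, so the right-hand side is an upper bound of the joint family and hence dominates its least upper bound. For the reverse direction, observe that for each fixed $x$ the joint supremum is an upper bound of $\{f(x,y) \mid y \in D_2\}$, so it dominates the least upper bound $\sup_{y} f(x,y)$; taking the supremum over $x \in D_1$ gives the converse inequality.

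Finally, I would discharge the two separate-continuity assumptions in turn. Continuity in the second argument gives $\sup_{y \in D_2} f(x,y) = f(x, \sup D_2)$ for each fixed $x$, so the iterated supremum rewrites as $\sup_{x \in D_1} f(x, \sup D_2)$. Continuity in the first argument, applied to the directed set $D_1$ with the second coordinate held fixed at $\sup D_2$, then gives $\sup_{x \in D_1} f(x, \sup D_2) = f(\sup D_1, \sup D_2)$. Chaining these identities with the interchange step yields the claim.

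I expect the only genuine subtlety to be the interchange step: one must verify that the family indexed by $D_1 \times D_2$ is actually directed (this is precisely where joint monotonicity is needed) so that its supremum is well defined, and that each inner supremum $\sup_{y} f(x,y)$ exists before it can be compared with the joint one. Both are guaranteed once monotonicity and the dcpo hypothesis are in place, after which the remaining two applications of separate continuity are entirely routine.
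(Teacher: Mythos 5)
Your proof is correct and follows essentially the same route as the paper's: both reduce the joint supremum to the iterated supremum $\sup_{x \in D_1} \sup_{y \in D_2} f(x,y)$ and then apply the two separate-continuity hypotheses in turn. The only difference is that you explicitly verify the interchange of joint and iterated suprema (via monotonicity and directedness of the image family), a step the paper's one-line chain of equalities takes for granted.
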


\begin{proof}
    Since $f$ is continuous separately in the first and second variables:\\
    \begin{align*}
        f(\sup D_1, \sup D_2)
        = \sup_{x \in D_1} f(x, \sup D_2) 
        = \sup_{x \in D_1} \sup_{y \in D_2} f(x, y) 
        = \sup_{x \in D_1, y \in D_2} f(x, y)
    \end{align*}
\end{proof}

\begin{lemma}\label{lemma:sum-continuity}
    Let $\langle X, +, \cdot, 0, 1 \rangle$ be a finitary (complete), continuous, partial semiring. For any family of Scott-continuous functions $(f_i: X \rightarrow X)_{i \in I}$ and directed set $D \subseteq X$:
    \[
        \sup_{x \in D} \sum_{i \in I} f_i(x) = \sum_{i \in I} f_i(\sup D)
    \]
\end{lemma}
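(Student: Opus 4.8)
The plan is to reduce the infinite sum to its defining supremum over finite index subsets and then exchange the two suprema. Recall that in a complete semiring the infinite sum is the supremum of its finite partial sums, so the right-hand side unfolds as
\[
  \sum_{i \in I} f_i(\sup D) = \sup\left\{ \sum_{i \in J} f_i(\sup D) \;\middle|\; J \subseteq I \text{ finite} \right\}.
\]
Thus it suffices to understand each finite partial sum $\sum_{i \in J} f_i$, and the whole argument hinges on showing that such finite sums are themselves Scott-continuous.

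First I would establish that every finite sum $\sum_{i \in J} f_i$ is Scott-continuous, by induction on $|J|$. The base case is the assumed continuity of a single $f_i$. For the inductive step I would write $\sum_{i \in J \cup \{j\}} f_i(x) = \left(\sum_{i \in J} f_i(x)\right) + f_j(x)$ and invoke the joint continuity of $+$ (\Cref{joint-continuity}), which applies since $+$ is separately Scott-continuous by the preceding theorem. Concretely, if $g$ and $h$ are continuous then for directed $D$ one has $g(\sup D) + h(\sup D) = \sup_{x,y \in D}(g(x)+h(y))$, and the directedness of $D$ together with the monotonicity of $g$, $h$, and $+$ collapses the double supremum to the diagonal $\sup_{x \in D}(g(x)+h(x))$. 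Hence $\sum_{i\in J} f_i(\sup D) = \sup_{x\in D}\sum_{i\in J} f_i(x)$ for every finite $J$.

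Substituting this identity into the unfolded right-hand side yields
\[
  \sum_{i \in I} f_i(\sup D) = \sup_{J \subseteq I \text{ finite}} \; \sup_{x \in D} \; \sum_{i \in J} f_i(x),
\]
while the left-hand side, again by the definition of the infinite sum, is
\[
  \sup_{x \in D} \sum_{i \in I} f_i(x) = \sup_{x \in D} \; \sup_{J \subseteq I \text{ finite}} \; \sum_{i \in J} f_i(x).
\]
These two expressions differ only in the order of the iterated suprema, so the proof concludes by exchanging them: for the doubly-indexed family $a_{x,J} = \sum_{i \in J} f_i(x)$, both iterated suprema coincide with the supremum over the product index set $D \times \{J \subseteq I \text{ finite}\}$, whenever all the suprema exist.

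The main obstacle I expect is the partiality of the semiring, since I must verify that each intermediate sum and supremum is actually defined before rearranging. For the finite sums $\sum_{i \in J} f_i(x)$ this follows from monotonicity, as $f_i(x) \le f_i(\sup D)$ bounds the partial sums and $\sum_{i \in I} f_i(\sup D)$ is assumed defined; the completeness and continuity hypotheses on $\mathcal{A}$ then guarantee the required suprema exist. I would make these existence checks explicit at each step, so that the final swap of suprema is fully justified rather than merely formal.
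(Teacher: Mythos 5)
Your proposal is correct and takes essentially the same route as the paper's proof: both establish continuity of finite partial sums by induction using the joint continuity of $+$ (\Cref{joint-continuity}), then unfold the infinite sum via its finitary definition and exchange the two iterated suprema. Your explicit verification of definedness in the partial semiring is a minor strengthening that the paper leaves implicit.
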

\begin{proof}
    Since each $f_i$ is Scott-continuous, we have $\{f_i(x) \mid x \in D\}$ is a directed set. We now proceed by induction on $I$. 
    \begin{itemize}
        \item Base Case: $I = \{i_1\}$. Then, we only need to show $\sup_{x \in D} f_{i_1}(x) = f_{i_1} (\sup D)$, which follows from the Scott-continuity of $f_{i_1}$.
        
        \item Successor Case: Suppose the claim holds for all set smaller than $I$, for $I$ finite. We can partition $I$ into disjoint non-empty parts $I_1$ and $I_2$, with $I = I_1 \cup I_2$, $I_1 \cap I_2 = \emptyset$, and $I_1, I_2 \ne \emptyset$. So, 
        \begin{align*}
            \sup_{x \in D} \sum_{i \in I} f_i(x) 
            &= \sup_{x \in D}\bigg(\sum_{i \in I_1} f_i(x) + \sum_{i \in I_2} f_i(x) \bigg) \\
            \intertext{By the induction hypothesis, each of $\lambda x.\: \sum_{i \in I_1} f_i(x)$ and $\lambda x.\: \sum_{i \in I_2} f_i(x)$ are continuous, so $\{\sum_{i \in I_1} f_i(x) \mid x \in D\}$ and $\{\sum_{i \in I_2} f_i(x) \mid x \in D\}$ are directed. We apply joint continuity of $+: X^2 \rightarrow X$ to get:}
            \vspace{1em}
            &= \sup_{x \in D} \sum_{i \in I_1} f_i(x) + \sup_{x \in D} \sum_{i \in I_2} f_i(x) \\
            &= \sum_{i \in I_1} f_i(\sup D) + \sum_{i \in I_2} f_i(\sup D) 
            & \text{(IH)} \\
            &= \sum_{i \in I} f_i(\sup D)
        \end{align*}
        
        \item Limit Case: Suppose the claim holds for all finite indexing sets. Since the semiring is finitary:
        \begingroup
        \addtolength{\jot}{0.5em}
        \begin{align*}
            \sup_{x \in D} \sum_{i \in I} f_i(x) 
            &= \sup_{x \in D} \bigg(\sup_{J \subseteq I, \text{ $J$ finite}} \sum_{i \in J} f_i(x)\bigg) \\
            &= \sup_{J \subseteq I, \text{ $J$ finite}} \bigg(\sup_{x \in D} \sum_{i \in J} f_i(x)\bigg) \\
            &= \sup_{J \subseteq I, \text{ $J$ finite}} \bigg(\sum_{i \in J} f_i(\sup D)\bigg) \\
            &= \sum_{i \in I} f_i(\sup D) 
        \end{align*}
        \endgroup
    \end{itemize}
\end{proof}

\begin{corollary}
    For any \textbf{finite} family of Scott-continuous functions $\big(f_i: \WA(\Sigma_{\nonterm})\rightarrow \WA(\Sigma_{\nonterm})\big)_{i \in I}$ and directed set $D \subseteq \WA(\Sigma_{\nonterm})$:
    \[
        \sup_{m \in D} \sum_{i \in I} f_i(m) = \sum_{i \in I} f_i(\sup D)
    \]
\end{corollary}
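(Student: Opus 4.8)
The plan is to reproduce the finite portion of the inductive argument from \Cref{lemma:sum-continuity}, but carried out in the dcpo $\langle \WA(\Sigma_{\nonterm}), \sqsubseteq\rangle$ rather than in the semiring carrier $X$. Two facts must be transported: first, that $\WA(\Sigma_{\nonterm})$ is a dcpo under the fusion order, so that the suprema appearing in the statement exist; and second, that the pointwise-lifted addition $+\colon \WA(\Sigma_{\nonterm})^2 \to \WA(\Sigma_{\nonterm})$ is \emph{jointly} Scott-continuous. The latter is the crux, and it is assembled from results already in hand: the preceding Scott-continuity theorem for $\WA(\Sigma_{\nonterm})$ gives that $+$ is Scott-continuous in each argument separately with respect to $\sqsubseteq$, and \Cref{joint-continuity} upgrades separate continuity to joint continuity. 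Once joint continuity of $+$ is available, the whole statement reduces to an induction on the finite cardinality $|I|$.

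First I would treat the base case $I = \{i_1\}$, where the claim $\sup_{m\in D} f_{i_1}(m) = f_{i_1}(\sup D)$ is exactly the Scott-continuity of the single function $f_{i_1}$. For the inductive step I would partition $I$ into two disjoint nonempty parts $I_1$ and $I_2$ with $I = I_1 \cup I_2$. The induction hypothesis makes each partial-sum map $m \mapsto \sum_{i\in I_1} f_i(m)$ and $m \mapsto \sum_{i\in I_2} f_i(m)$ Scott-continuous, so their images over $D$ are directed subsets of $\WA(\Sigma_{\nonterm})$. Applying joint continuity of $+$ to these two directed families, followed by the induction hypothesis on each part, gives
\[
  \sup_{m\in D}\sum_{i\in I} f_i(m)
  = \Bigl(\sup_{m\in D}\sum_{i\in I_1} f_i(m)\Bigr) + \Bigl(\sup_{m\in D}\sum_{i\in I_2} f_i(m)\Bigr)
  = \sum_{i\in I_1} f_i(\sup D) + \sum_{i\in I_2} f_i(\sup D)
  = \sum_{i\in I} f_i(\sup D).
\]
Because $I$ is finite, the induction terminates after finitely many splits and no limit case arises; this is precisely why the finitary/completeness hypothesis that \Cref{lemma:sum-continuity} required can be dropped in this corollary.

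The step carrying the real content is establishing joint continuity of $+$ on weighting functions, which is handled entirely by chaining the Scott-continuity theorem with \Cref{joint-continuity}; everything downstream is bookkeeping identical to \Cref{lemma:sum-continuity}. The one subtlety I would watch for is that the suprema are taken in the fusion order, in which the weight on ordinary states is compared covariantly while the weight on $\nonterm$ is compared contravariantly. I would therefore confirm that the separate-continuity input is stated with respect to this very order before invoking joint continuity, so that the directedness and supremum claims used in the inductive step are about the fusion order throughout.
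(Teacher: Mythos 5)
Your proposal is correct and takes essentially the same route as the paper, whose entire proof reads ``See the proof of Lemma~\ref{lemma:sum-continuity} up to the successor case'': the identical base-case/successor-case induction, with joint continuity of $+$ obtained by applying \Cref{joint-continuity} to the separate Scott-continuity of $+$ on $\WA(\Sigma_{\nonterm})$ under the fusion order, and the limit case dropped precisely because $I$ is finite. Your closing caution about the contravariant treatment of $\nonterm$ is exactly what the preceding Scott-continuity theorem already secures, so nothing further is needed.
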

\begin{proof}
    See the proof of $\ref{lemma:sum-continuity}$ up to the successor case.
\end{proof}

\begin{lemma}
    For any $\sigma \in \Sigma$, the function $g(f) = m(\sigma) \cdot f(\sigma)(\tau)$ is Scott-continuous in the semiring $\mathcal{A}$. 
\end{lemma}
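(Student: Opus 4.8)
The plan is to notice that $g$ depends on its argument only through the single semiring element $f(\sigma)(\tau)$: with $\sigma$, $\tau$, and $m$ all fixed, set $v \triangleq m(\sigma) \in U$, so that $g$ is the composite of the evaluation map $\mathrm{ev}\colon f \mapsto f(\sigma)(\tau)$ with left-multiplication $L_v \colon x \mapsto v\cdot x$. Continuity of $g$ on the function space $\Sigma \to \WA(\Sigma_{\nonterm})$ (under the pointwise fusion order) then splits into two independent facts: how $\mathrm{ev}$ transports the fusion order, and the continuity of $L_v$, which is precisely the third clause of the semiring's Scott-continuity, namely $v \cdot \sup D = \sup_{u\in D}(v\cdot u)$.

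First I would record that suprema in the function space are computed coordinatewise, since the fusion order is itself coordinatewise --- covariant on each regular state and contravariant on $\nonterm$. For a fusion-directed family $D$, this means $(\sup D)(\sigma)(\rho) = \sup_{f\in D} f(\sigma)(\rho)$ for every $\rho\in\Sigma$, whereas $(\sup D)(\sigma)(\nonterm) = \inf_{f\in D} f(\sigma)(\nonterm)$. This single asymmetry is what forces the proof to distinguish the case $\tau \in \Sigma$ from the case $\tau = \nonterm$.

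Then I would carry out the case analysis. When $\tau\in\Sigma$, the map $\mathrm{ev}$ sends the fusion supremum to the ordinary supremum $\sup_{f\in D} f(\sigma)(\tau)$, and Scott-continuity of $\cdot$ yields $g(\sup D) = v\cdot \sup_{f\in D} f(\sigma)(\tau) = \sup_{f\in D}\bigl(v\cdot f(\sigma)(\tau)\bigr) = \sup_{f\in D} g(f)$, which is exactly the required Scott-continuity. When $\tau=\nonterm$ the same computation runs with infima in place of suprema, and here I would invoke the \emph{lower} Scott-continuity of $\cdot$ assumed of $\mathcal{A}$ to obtain $v\cdot \inf_{f\in D} f(\sigma)(\nonterm) = \inf_{f\in D}\bigl(v\cdot f(\sigma)(\nonterm)\bigr)$, so that the contravariant $\nonterm$-limit is preserved as well.

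The main obstacle is purely the bookkeeping around the contravariant $\nonterm$-coordinate: one must keep straight that a directed supremum in the function space unfolds to suprema on the $\Sigma$-coordinates but to an infimum on $\nonterm$, and invoke Scott-continuity versus lower Scott-continuity of multiplication accordingly. No genuine estimate is needed --- the statement is a direct consequence of the standing hypotheses that $\mathcal{A}$ is Scott-continuous and lower Scott-continuous. This componentwise continuity is exactly the ingredient that, together with \Cref{lemma:sum-continuity} and joint continuity (\Cref{joint-continuity}), will give the continuity of $\bind(m,-)$ needed for the loop fixpoint to exist.
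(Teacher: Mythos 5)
Your proposal is correct and follows essentially the same route as the paper's proof: evaluate the directed supremum coordinatewise and pull the fixed scalar $m(\sigma)$ through it using continuity of multiplication in $\mathcal{A}$. In fact you are more careful than the paper, which compresses the whole issue into the remark ``since we are using the pointwise* ordering'' and applies Scott-continuity uniformly, whereas you explicitly unfold the contravariant $\nonterm$-coordinate as an infimum and invoke lower Scott-continuity there --- a detail the paper's standing hypotheses do supply but its proof silently elides.
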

\begin{proof}
    \begin{align*}
        \sup_{f \in D} g(f) 
        &= \sup_{f \in D} m(\sigma) \cdot f(\sigma)(\tau) \\
        \intertext{By Scott-continuity of the $\cdot$ operator in the semiring:}
        &= m(\sigma) \cdot \sup_{f \in D} f(\sigma)(\tau) \\
        \intertext{Since we are using the pointwise* ordering:}
        &= m(\sigma) \cdot (\sup D)(\sigma)(\tau) = g(\sup D) 
    \end{align*}
\end{proof}

\begin{corollary}
    For $D \subseteq (\Sigma \rightarrow \WA(\Sigma_{\nonterm}))$ directed, 
    \[
        \sup_{f \in D} \sum_{\sigma \in \supp(m) \cap \Sigma} m(\sigma) \cdot (\sup D)(\sigma)(\tau)
        = \sum_{\sigma \in \supp(m) \cap \Sigma} m(\sigma) \cdot (\sup D)(\sigma)(\tau)
    \]
\end{corollary}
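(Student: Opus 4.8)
The plan is to recognize this as a direct instance of \Cref{lemma:sum-continuity}, reading the left-hand summand as $m(\sigma)\cdot f(\sigma)(\tau)$ (the term genuinely depending on the bound variable $f$), so that the claim asserts that the directed supremum over $f\in D$ commutes with the sum over $\sigma\in\supp(m)\cap\Sigma$. For each fixed $\sigma$ I would set $g_\sigma(f)\triangleq m(\sigma)\cdot f(\sigma)(\tau)$. The immediately preceding lemma establishes exactly that each $g_\sigma$ is Scott-continuous, i.e. $\sup_{f\in D} g_\sigma(f)=g_\sigma(\sup D)=m(\sigma)\cdot(\sup D)(\sigma)(\tau)$, using Scott-continuity of the semiring product $\cdot$ together with the pointwise (fusion) order on the function space $\Sigma\to\WA(\Sigma_\nonterm)$. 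So the corollary reduces to pushing a directed sup through a sum of Scott-continuous maps.

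First I would observe that \Cref{lemma:sum-continuity} was stated for endomorphisms $f_i\colon X\to X$, whereas here the $g_\sigma$ map the dcpo $\Sigma\to\WA(\Sigma_\nonterm)$ into the semiring $\mathcal{A}$. This gap is only cosmetic: the proof of \Cref{lemma:sum-continuity} never uses that domain and codomain coincide — it relies solely on Scott-continuity of each $g_\sigma$ (to push the sup through a single term), on joint continuity of $+$ on $\mathcal{A}$ (for the successor case), and on $\mathcal{A}$ being finitary so that an infinite sum is the directed supremum of its finite partial sums (for the limit case). I would therefore either invoke a mildly generalized version of the lemma allowing a separate dcpo domain, or inline its three-case induction verbatim with $f_i\coloneqq g_\sigma$. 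Applying it yields $\sup_{f\in D}\sum_\sigma g_\sigma(f)=\sum_\sigma g_\sigma(\sup D)=\sum_\sigma m(\sigma)\cdot(\sup D)(\sigma)(\tau)$, which is the claimed equality.

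The only real obstacle is the possibly infinite index set $\supp(m)\cap\Sigma$, which forces the limit case of the induction rather than the finite corollary already proved. Here I would check the standing hypotheses carefully: that $\mathcal{A}$ is finitary, so the infinite sum on the right is well defined as a supremum over finite sub-sums; and that each family $\{g_\sigma(f)\mid f\in D\}$ is directed, which follows from the monotonicity half of the Scott-continuity of $g_\sigma$ established in the preceding lemma. Partiality of $\mathcal{A}$ then requires only a brief remark that every sum in play remains defined — guaranteed because these sums are the mass components of weighting functions drawn from $\mathcal{D}$, whose mass is defined by construction — after which the finitary and continuous structure makes the limit-case interchange of the two suprema routine.
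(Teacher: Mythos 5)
Your proposal is correct and matches the paper's own proof, which is the single line ``By Lemma~\ref{lemma:sum-continuity}'': one instantiates that lemma with the family $g_\sigma(f) \triangleq m(\sigma)\cdot f(\sigma)(\tau)$ indexed by $\sigma \in \supp(m)\cap\Sigma$, whose Scott-continuity is exactly the immediately preceding lemma. You in fact supply strictly more justification than the paper does --- in particular the observation that Lemma~\ref{lemma:sum-continuity} is stated for endomorphisms of the semiring whereas here the maps go from the function-space dcpo into $\mathcal{A}$ (a mismatch the paper silently elides, and which, as you note, is harmless since the lemma's induction never uses that domain and codomain coincide), as well as the checks on directedness, finitariness, and definedness of the partial sums in the infinite-support case.
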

\begin{proof}
    By Lemma \ref{lemma:sum-continuity}.
\end{proof}

\begin{lemma}
    If $m \in \mathcal{D}$ and $f \in (\Sigma \rightarrow \mathcal{D})$, $\bind(\de{C}(\sigma), f)$ is Scott-continuous with respect to its second argument $f$. 
\end{lemma}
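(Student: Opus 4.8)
The plan is to verify the defining equation of Scott-continuity, $\sup_{f \in D} \bind(m,f) = \bind(m, \sup D)$ for every directed $D \subseteq (\Sigma \to \mathcal{D})$, by comparing the two weighting functions pointwise at each outcome $\tau \in \Sigma_{\nonterm}$. The point to keep in mind throughout is that suprema in the fusion order are computed componentwise as ordinary suprema on the regular states $\sigma \in \Sigma$ but as \emph{infima} on the nontermination outcome $\nonterm$; consequently $(\sup D)(\sigma)(\tau) = \sup_{f \in D} f(\sigma)(\tau)$ for $\tau \in \Sigma$, whereas $(\sup D)(\sigma)(\nonterm) = \inf_{f \in D} f(\sigma)(\nonterm)$.

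First I would treat the regular states $\tau \in \Sigma$. Expanding the definition of $\bind$, the nontermination summand $m(\nonterm)\cdot\unit(\nonterm)(\tau)$ vanishes since $\unit(\nonterm)(\tau)=\zero$, leaving $\bind(m,f)(\tau) = \sum_{\sigma \in \supp(m) \cap \Sigma} m(\sigma)\cdot f(\sigma)(\tau)$. By the immediately preceding lemma each map $f \mapsto m(\sigma)\cdot f(\sigma)(\tau)$ is Scott-continuous, and \Cref{lemma:sum-continuity} lets the supremum commute with the (possibly infinite) indexed sum. Hence $\sup_{f \in D} \bind(m,f)(\tau) = \sum_\sigma m(\sigma)\cdot (\sup D)(\sigma)(\tau) = \bind(m,\sup D)(\tau)$, matching the state-component of the fusion supremum.

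The remaining and more delicate case is $\tau = \nonterm$, where the fusion order is contravariant and I must therefore show that $\bind$ preserves the \emph{infimum} of the nontermination weights. The summand contributed by $m(\nonterm)$ at $\nonterm$ is independent of $f$ and passes through unchanged, so it suffices to establish $\inf_{f \in D}\sum_\sigma m(\sigma)\cdot f(\sigma)(\nonterm) = \sum_\sigma m(\sigma)\cdot \inf_{f\in D} f(\sigma)(\nonterm)$. This is the order-dual of the state case: I would invoke the lower Scott-continuity of $\mathcal{A}$ (one of our standing assumptions) to push the infimum through each product $m(\sigma)\cdot(-)$, and then a dual, infimum-preserving version of the sum argument of \Cref{lemma:sum-continuity} to commute the infimum with the indexed sum. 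Combining both cases shows the two weighting functions agree at every $\tau \in \Sigma_\nonterm$, which is exactly equality in the fusion order.

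I expect the nontermination case to be the main obstacle, for two reasons. First, the contravariance forces a switch from suprema to infima, so the continuity lemmas established earlier cannot be applied directly; they must be mirrored into lower-continuity statements resting on the lower Scott-continuity hypothesis. Second, when $m \in \WA^\omega(\Sigma)$ its support may be infinite, so commuting the infimum past the sum requires the finitary limit-case reasoning of \Cref{lemma:sum-continuity} in its dual form, rather than a purely finite distributivity argument. Everything for $\tau\in\Sigma$, by contrast, reuses the already-established lemmas essentially verbatim.
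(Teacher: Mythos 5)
Your treatment of the program-state coordinates $\tau\in\Sigma$ matches the paper's, but your plan for the $\nonterm$ coordinate has a genuine gap: the ``dual, infimum-preserving version'' of \autoref{lemma:sum-continuity} that you invoke does not hold in general. The limit case of that lemma writes the infinite sum as a directed supremum of finite partial sums and then interchanges two suprema; dualizing it would require interchanging $\inf_{f\in D}$ with the supremum $\sup_{J\subseteq I,\,J\text{ finite}}$, and in general only $\inf\sup \geq \sup\inf$ holds. Concretely, in the paper's own semiring $\mathbb{N}^\infty$ take $I=\N$ and the lower-directed family $u^{(k)}$ with $u^{(k)}_i = \1$ if $i\geq k$ and $\0$ otherwise: every sum $\sum_{i\in\N}u^{(k)}_i = \infty$, so the infimum of the sums is $\infty$, while $\inf_k u^{(k)}_i = \0$ for each $i$, so the sum of the infima is $\0$. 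Lower Scott-continuity of $\mathcal{A}$ only gives you the binary (hence finite) interchange, so your computation at $\nonterm$ breaks down precisely when $\supp(\de{C}(\sigma))$ is infinite --- the case you yourself flag as the main obstacle. A second, smaller issue is your opening claim that $(\sup D)(\sigma)(\nonterm)=\inf_{f\in D}f(\sigma)(\nonterm)$: inside the restricted domain $\mathcal{D}$ the supremum need not be computed coordinatewise, since in $\WA^\omega(\Sigma)$ the $\nonterm$-weight is coupled to the state weights by the maximality condition $m(\nonterm)=\sup_{m'\in E(m)}m'(\nonterm)$ (and by mass conservation in the conservative case).

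The paper sidesteps the interchange entirely, and this is where the bounded-weighting restriction $\WA^\infty(\Sigma)=\WA^+(\Sigma)\cup\WA^\omega(\Sigma)$ earns its keep. It case-splits on $\de{C}(\sigma)$: if $\de{C}(\sigma)\in\WA^+(\Sigma)$, the sum is finite, the constant $\nonterm$-summand passes through the supremum, and the finite corollary of \autoref{lemma:sum-continuity} (whose successor case uses joint continuity of $+$, hence only binary lower Scott-continuity at $\nonterm$) closes the argument. If $\de{C}(\sigma)\in\WA^\omega(\Sigma)$, the paper proves agreement only at program states $\tau'\in\Sigma$ --- essentially your first half --- and then observes that this places $\sup_{f\in D}\bind(\de{C}(\sigma),f)$ in $E(\bind(\de{C}(\sigma),\sup D))$ and vice versa; the defining maximality property of $\WA^\omega(\Sigma)$ then forces the two $\nonterm$-weights to dominate each other, hence to be equal, with no infimum/sum interchange anywhere. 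To repair your proposal, replace the dualized sum lemma at $\nonterm$ with this case split and the $E(\cdot)$-maximality argument.
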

\begin{proof}
    Take $D \subseteq \Sigma \rightarrow \WA(\Sigma_{\nonterm})$ directed. Let $\sigma \in \Sigma$ and $\de{C}(\sigma) \in \mathcal{D}$. Recall that $\mathcal{D} \subseteq \WA^{\infty}(\Sigma) = \WA^+(\Sigma) \cup \WA^{\omega}(\Sigma)$, so we can split our analysis across two cases:
    \begin{itemize}
        \item Suppose $\de{C}(\sigma) \in \WA^+(\Sigma)$. Then, $\supp(\de{C}(\sigma))$ is finite, as is $\supp(\de{C}(\sigma) \cap \Sigma$.
        \begingroup
        \addtolength{\jot}{0.5em}
        \begin{align*}
            &\sup_{f \in D} \bind(\de{C}(\sigma), f) \\
            &= \sup_{f \in D} \bigg(\de{C}(\sigma)(\nonterm) \cdot \top \cdot \mloop + \sum_{\tau \in \supp(\de{C}(\sigma) \cap \Sigma)} \de{C}(\sigma)(\tau) \cdot f(\tau) \bigg) \\
            &= \de{C}(\sigma)(\nonterm) \cdot \top \cdot \mloop + \sup_{f \in D} \bigg(\sum_{\tau \in \supp(\de{C}(\sigma) \cap \Sigma)} \de{C}(\sigma)(\tau) \cdot f(\tau) \bigg) \\
            \intertext{Since the above summation is finite (by assumption), we can apply (Corollary):}
            &= \de{C}(\sigma)(\nonterm) \cdot \top \cdot \mloop + \sum_{\tau \in \supp(\de{C}(\sigma)) \cap \Sigma} \de{C}(\sigma)(\tau) \cdot (\sup D)(\tau) \\
            &= \bind(\de{C}(\sigma), \sup D)
        \end{align*}
        \endgroup

        \item Otherwise, $\de{C}(\sigma) \in \WA^{\omega}(\Sigma)$. We know that for any program state $\tau' \in \Sigma$,
        \begingroup 
        \addtolength{\jot}{0.5em}
        \begin{align*}
            & \sup_{f \in D} \bind(\de{C}(\sigma), f)(\tau') \\
            &= \sup_{f \in D} \bigg(\de{C}(\sigma)(\nonterm) \cdot \top \cdot \mloop(\tau') + \sum_{\tau \in \supp(\de{C}(\sigma)) \cap \Sigma} \de{C}(\sigma)(\tau) \cdot f(\tau)(\tau')\bigg) \\
            &= \de{C}(\sigma)(\nonterm) \cdot \top \cdot \mloop(\tau') + \sup_{f \in D} \bigg(\sum_{\tau \in \supp(\de{C}(\sigma)) \cap \Sigma} \de{C}(\sigma)(\tau) \cdot f(\tau)(\tau')\bigg) \\
            &= \de{C}(\sigma)(\nonterm) \cdot \top \cdot \mloop(\tau') + \sum_{\tau \in \supp(\de{C}(\sigma)) \cap \Sigma} \de{C}(\sigma)(\tau) \cdot (\sup D)(\tau)(\tau') \\
            &= \bind(\de{C}(\sigma), \sup D)(\tau')
        \end{align*}
        \endgroup
        In other words, $\sup_{f \in D} \bind(\de{C}(\sigma), f) \in E(\bind(\de{C}(\sigma), \sup D))$, so by the definition of $\WA^{\omega}(\Sigma)$,
        \begin{align*}
            \sup_{f \in D} \bind(\de{C}(\sigma), f)(\nonterm) \:&\sqsubseteq\: \bind(\de{C}(\sigma), \sup D)(\nonterm)
        \end{align*}
        Conversely, $\bind(\de{C}(\sigma), \sup D) \in E(\sup_{f \in D} \bind(\de{C}(\sigma), f))$ implies 
        \begin{align*}
            \sup_{f \in D} \bind(\de{C}(\sigma), f)(\nonterm) \:&\sqsupseteq\: \bind(\de{C}(\sigma), \sup D)(\nonterm)
        \end{align*}
        Thus, $\sup_{f \in D} \bind(\de{C}(\sigma), f)(\nonterm) = \bind(\de{C}(\sigma), \sup D)(\nonterm)$. 
    \end{itemize}
\end{proof}
\vspace{1em}

\begin{lemma}
    Let $\Phi_{\langle C, e_1, e_2 \rangle}(f)(\sigma) = \de{e_1}(\sigma) \cdot \bindnt(\de{C}(\sigma), f) + \de{e_2}(\sigma) \cdot \eta(\sigma)$ and suppose that it is a total function, then $\Phi_{\iter{C}{e_1}{e_2}}$ is Scott continuous with respect to the pointwise order: $f_1 \sqsubseteq f_2$ iff $f_1(\sigma) \sqsubseteq f_2(\sigma)$ for all $\sigma \in \Sigma$. 
\end{lemma}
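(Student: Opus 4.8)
The plan is to decompose $\Phi_{\iter{C}{e_1}{e_2}}(-)(\sigma)$ into a sum of maps that are each already known to preserve directed suprema in the fusion order, and then assemble these facts pointwise. Since the order on $\Sigma \to \WA(\Sigma_{\nonterm})$ is pointwise, it suffices to fix an arbitrary $\sigma \in \Sigma$ and a directed set $D$ and show $\Phi(\sup D)(\sigma) = \sup_{f \in D}\Phi(f)(\sigma)$, using that $(\sup D)(\sigma) = \sup_{f \in D} f(\sigma)$. Writing
\[
  \Phi(f)(\sigma) = g_\sigma(f) + h_\sigma, \qquad g_\sigma(f) \triangleq \de{e_1}(\sigma) \cdot \bindnt(\de{C}(\sigma), f), \qquad h_\sigma \triangleq \de{e_2}(\sigma) \cdot \unit(\sigma),
\]
I note that $h_\sigma$ does not depend on $f$, so the problem reduces to showing that $g_\sigma$ is Scott-continuous and that post-composition with the map $(-) + h_\sigma$ preserves directed suprema.

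Continuity of $g_\sigma$ follows by chaining two facts. First, the immediately preceding lemma establishes that $f \mapsto \bindnt(\de{C}(\sigma), f)$ is Scott-continuous, so $\sup_{f \in D}\bindnt(\de{C}(\sigma), f) = \bindnt(\de{C}(\sigma), \sup D)$. Second, I need left scalar multiplication by the fixed weight $\de{e_1}(\sigma)$ to be continuous on $\WA(\Sigma_{\nonterm})$. Because scaling is defined pointwise and the fusion order is covariant on $\Sigma$ but contravariant on $\nonterm$, this splits into verifying $\de{e_1}(\sigma) \cdot \sup_i m_i(\tau) = \sup_i \de{e_1}(\sigma) \cdot m_i(\tau)$ on ordinary states $\tau \in \Sigma$ and $\de{e_1}(\sigma) \cdot \inf_i m_i(\nonterm) = \inf_i \de{e_1}(\sigma) \cdot m_i(\nonterm)$ on the divergence coordinate. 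The former is exactly Scott-continuity of semiring multiplication, and the latter is lower Scott-continuity, both of which are among our standing assumptions on $\mathcal{A}$. Since monotone maps carry directed sets to directed sets, every supremum taken along the way genuinely exists, and the composition of the two continuous maps is continuous.

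For the outermost sum, I invoke the earlier theorem establishing that $+$ on weighting functions is Scott-continuous in each argument separately (again, preserving suprema on $\Sigma$ via Scott-continuity and infima on $\nonterm$ via lower Scott-continuity of $\mathcal{A}$). Applying it with the second argument fixed at the constant $h_\sigma$ gives $\sup_{f \in D}(g_\sigma(f) + h_\sigma) = \bigl(\sup_{f \in D} g_\sigma(f)\bigr) + h_\sigma$. Chaining these equalities yields $\Phi(\sup D)(\sigma) = \sup_{f \in D}\Phi(f)(\sigma)$, and since this holds at every $\sigma$ and the function order is pointwise, Scott-continuity of $\Phi$ follows.

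The main obstacle I anticipate is the contravariance built into the fusion order: every preservation claim has a ``supremum on $\Sigma$, infimum on $\nonterm$'' character, so I must be careful to invoke lower Scott-continuity, and not merely Scott-continuity, of the semiring at the divergence coordinate in each step. I also expect to lean on the totality hypothesis to guarantee that the partial operation $+$ and the scalings actually land in $\WA(\Sigma_{\nonterm})$, so that the suprema and infima I manipulate are genuine fusion-order suprema rather than formal pointwise expressions that could escape the domain.
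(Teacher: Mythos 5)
Your proof is correct and follows essentially the same route as the paper's: fix $\sigma \in \Sigma$ and a directed $D$, pull the supremum through $+$ and left-scaling by $\de{e_1}(\sigma)$ using continuity of these operations on $\WA(\Sigma_{\nonterm})$, apply the immediately preceding lemma to commute $\sup_{f \in D}$ with $\bindnt(\de{C}(\sigma), -)$, and conclude pointwise. Your explicit verification that scaling respects the contravariant $\nonterm$ coordinate via \emph{lower} Scott-continuity is a welcome detail that the paper compresses into the phrase ``by the continuity of $+$ and $\cdot$ in $\WA$,'' but it is a refinement of the same argument, not a different one.
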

\begin{proof}
    For all directed sets $D \subseteq (\Sigma \rightarrow \mathcal{W}(\Sigma \cup \{\nonterm\})$ and $\sigma \in \Sigma$, we have 
    \begin{align*}
        \sup_{f \in D} \Phi_{\iter{C}{e_1}{e_2}}(f)(\sigma) 
        &= \sup_{f \in D} \big(\de{e_1}(\sigma) \cdot \bindnt(\de{C}(\sigma), f) + \de{e_2}(\sigma) \cdot \eta(\sigma)\big)
        \intertext{By the continuity of $+$ and $\cdot$ in $\WA$:}
        &= \de{e_1}(\sigma) \cdot \bigg(\sup_{f \in D} \bindnt(\de{C}(\sigma), f)\bigg) + \de{e_2}(\sigma) \cdot \eta(\sigma)
        \intertext{By Lemma 6 (the previous Lemma):}
        &= \de{e_1}(\sigma) \cdot \bindnt(\de{C}(\sigma), \sup D) + \de{e_2}(\sigma) \cdot \eta(\sigma) \\ 
        &= \Phi_{\iter{C}{e_1}{e_2}}(\sup D)(\sigma) 
    \end{align*}
    As this holds for all $\sigma \in \Sigma$, we also have that
    \[
        \sup_{f \in D} \Phi_{\iter{C}{e_1}{e_2}}(f) = \Phi_{\langle C, e_1, e_2 \rangle}(\sup D)
    \]
\end{proof}
 
\section{Subsumption of Program Logics}
Here, we prove the results of \Cref{sec:vk-taxonomy} on the subsumption of classical program logics. For this, recall that we limit ourselves to a nondeterministic interpretation of TOL, instantiated on the Boolean semiring.
First, we show that the encodings of modalities $\always$ and $\sometimes$ of Dynamic Logic in TOL are DeMorgan duals.

\begin{lemma}[Modal Duality] \label{lemma:modal-duals}
    \[
        \sometimes P = \neg \always \neg P 
        \quad\text{and}\quad 
        \always P = \neg \sometimes \neg P
    \]
\end{lemma}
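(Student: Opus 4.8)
The plan is to reason entirely at the level of the extensional characterizations
\[
  \square P = \{m \mid \supp(m) \subseteq P_{\nonterm}\},
  \qquad
  \lozenge P = \{m \mid \supp(m) \cap P \neq \emptyset\},
\]
which are already established where the two modalities are defined. The one point that requires care is that two different complements are in play: the outer $\neg$ is outcome-assertion negation, that is, complement in $\WA(\Sigma_{\nonterm})$, whereas the inner $\neg P$ is complement of the \emph{state} predicate $P$ within $\Sigma$, namely $\Sigma \setminus P$. In particular the divergence outcome $\nonterm$ is never a member of $\Sigma \setminus P$, and this asymmetry is what drives the whole argument.

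First I would compute $\neg \square \neg P$. Unfolding the characterization of $\square$ with the predicate $\Sigma \setminus P$ gives $\square \neg P = \{m \mid \supp(m) \subseteq (\Sigma \setminus P) \cup \{\nonterm\}\}$, so its complement in $\WA(\Sigma_{\nonterm})$ is the set of $m$ for which $\supp(m) \not\subseteq (\Sigma \setminus P) \cup \{\nonterm\}$. The latter says there is some $\sigma \in \supp(m)$ with $\sigma \notin (\Sigma \setminus P) \cup \{\nonterm\}$; since such a $\sigma$ is a genuine program state ($\sigma \neq \nonterm$) lying outside $\Sigma \setminus P$, it must lie in $P$. Hence $\neg \square \neg P = \{m \mid \supp(m) \cap P \neq \emptyset\} = \lozenge P$, which is the first identity.

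The second identity follows symmetrically: $\lozenge \neg P = \{m \mid \supp(m) \cap (\Sigma \setminus P) \neq \emptyset\}$, whose complement is $\{m \mid \supp(m) \cap (\Sigma \setminus P) = \emptyset\}$, i.e. every state in the support already lies in $P$, so that $\supp(m) \subseteq P \cup \{\nonterm\} = P_{\nonterm}$; this is exactly $\square P$. Alternatively, since $\neg$ is an involution on $2^{\WA(\Sigma_{\nonterm})}$, the second identity is obtained from the first by substituting $\Sigma \setminus P$ for $P$ and applying $\neg$ to both sides. I expect the only real obstacle to be the bookkeeping around $\nonterm$: because $\nonterm$ is tolerated inside $\square$ but never supplies a $P$-witness for $\lozenge$, one must verify that complementing ``support avoids $P$'' yields ``support meets $P$'' rather than ``support meets $P_{\nonterm}$'' — precisely the step where the two distinct notions of negation interact.
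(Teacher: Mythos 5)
Your proof is correct and follows essentially the same route as the paper's: both unfold the extensional characterizations of $\square$ and $\lozenge$ and compute the complements directly in $2^{\mathcal{W}(\Sigma_{\Uparrow})}$. In fact you are slightly more explicit than the paper at the one delicate step --- justifying that a witness $\sigma \in \supp(m)$ outside $(\neg P)_{\Uparrow}$ must be a genuine program state in $P$, which the paper passes over in a single line.
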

\begin{proof}
\begin{align*}
    \neg \always \neg P
    &= \neg \{m \mid \supp(m) \subseteq (\neg P)_{\nonterm}\} \\
    &= \WA(\Sigma_{\nonterm}) \setminus \{m \mid \supp(m) \subseteq (\neg P)_{\nonterm}\} \\
    &= \{m \mid \supp(m) \not\subseteq (\neg P)_{\nonterm}\} \\
    &= \{m \mid \supp(m) \cap P \neq \emptyset\} 
    \quad= \sometimes P \\
    \vspace{1em} \\
    \neg \sometimes \neg P 
    &= \neg \{m \mid \supp(m) \cap \neg P \neq \emptyset\} \\
    &= \WA(\Sigma_{\nonterm}) \setminus \{m \mid \supp(m) \cap \neg P \neq \emptyset\} \\
    &= \{m \mid \supp(m) \cap \neg P = \emptyset\} \\
    &= \{m \mid \supp(m) \subseteq P_{\nonterm}\} 
    \quad= \always P 
\end{align*}
\end{proof}
Moreover, the alternative modalities $\alwaystot$ and $\sometimespart$ exhibit the same duality. Proof of this is nearly identical to the above, and so is omitted.  

Theorems \ref{thm:subsume-dwlp-awp} and \ref{thm:subsume-dwp-awlp} state that Hoare, Lisbon, Total Hoare, as well as angelic partial correctness triples can be encoded using the above modalities. These correspond to the logics in the front upper cube of \Cref{fig:vk-taxonomy} which are formulated in terms of weakest-precondition transformers.

\begingroup 
\def\thetheorem{\ref{thm:subsume-dwlp-awp}}
\begin{theorem}[Subsumption of Hoare Logic]
$
    \vDash \triple{\ov P}{C}{\always Q} \tiff P \subseteq \dwlp(C, Q)
$.
\end{theorem}
\addtocounter{theorem}{-1}
\endgroup
\begin{proof}
    $\implies$: Suppose $\sigma \in P$. Then, $\eta(\sigma) \vDash P$. We assume $\triple{\ov P}{C}{\always Q}$, so
    \begin{align*}
        \dem{C}(\eta(\sigma)) 
        &\vDash \always Q 
        \:\:=\:\: \exists (u, v) \in U^2.\: \ov{Q}^{(u)} \oplus \diverge{v} 
        \:\:=\:\: \{m \mid \supp(m) \subseteq Q_{\nonterm} \} 
    \end{align*}
    That is, $\supp(\de{C}(\sigma)) \subseteq Q_{\nonterm}$. By definition $\sigma \in \dwlp(C, Q)$. \\

    \noindent $\impliedby$: We assume $\vDash P \subseteq \dwlp(C, Q)$. Suppose $m \vDash \ov P$. So, $|m| = 1$ and $\supp(m) \subseteq P \subseteq \dwlp(C, Q)$. Note that $\nonterm \not\in \supp(m)$. Then, since $\supp(\de{C}(\sigma)) \subseteq Q_{\nonterm}$ for all $\sigma \in \supp(m)$, we have
    \begin{align*}
        \supp(\dem{C}(m)) 
        &= \bigcup_{\sigma \in \supp(m)} \supp(\de{C}(\sigma))
        \quad\subseteq\quad Q_{\nonterm}
    \end{align*}
    By definition of $\always$, $\dem{C}(m) \vDash \always Q$. 

\end{proof}

\begingroup 
\def\thetheorem{\ref{thm:subsume-dwlp-awp}}
\begin{theorem}[Subsumption of Lisbon Logic]
$
    \vDash \triple{\ov P}{C}{\sometimes Q} \tiff P \subseteq \awp(C, Q)
$.
\end{theorem}
\addtocounter{theorem}{-1}
\endgroup
\begin{proof}
    $\implies$: Suppose $\sigma \in P$. Then, $\eta(\sigma) \vDash P$. Since $\vDash \triple{\ov P}{C}{\sometimes Q}$, we have that 
    \begin{align*}
        \dem{C}(\eta(\sigma)) \vDash \sometimes Q 
        \:\:=\:\: \exists u : U \setminus \{\0\}.\: \ov{Q}^{(u)} \oplus \top 
        \:\:=\:\: \{m \mid \supp(m) \cap Q \neq \emptyset \}
    \end{align*}
    This means $\supp(\de{C}(\sigma)) \cap Q \neq \emptyset$, so $\sigma \in \awp(C, Q)$ by definition. \\

    \noindent $\impliedby$: Assume $P \subseteq \langle Q \rangle$. Suppose $m \vDash \ov P$, i.e. $|m| = 1$, $\supp(m) \subseteq P \subseteq \awp(C, Q)$. Note that $\nonterm \not\in \supp(m)$. So, for any $\sigma \in \supp(m)$, $\supp(\de{C}(\sigma)) \cap Q \neq \emptyset$. It holds that 
    \begin{align*}
        \supp\left(\dem{C}(m)\right) \cap Q
        &= \left( \bigcup_{\sigma \in \supp(m)} \supp(\de{C}(\sigma)) \right) \cap Q 
        = \bigcup_{\sigma \in \supp(m)} \left(\supp(\de{C}(\sigma)) \cap Q \right)
        \neq \emptyset 
    \end{align*}
    By definition, $\dem{C}(m) \vDash \sometimes Q$. 

\end{proof}

\begingroup 
\def\thetheorem{\ref{thm:subsume-dwp-awlp}}
\begin{theorem}[Subsumption of Total Hoare Logic]
$
    \vDash \triple{\ov P}{C}{\alwaystot Q} \tiff P \subseteq \dwp(C, Q)
$.
\end{theorem}
\addtocounter{theorem}{-1}
\endgroup
\begin{proof}
    $\implies$: Suppose $\sigma \in P$. Then, $\eta(\sigma) \vDash \ov P$ and since $\vDash \triple{\ov P}{C}{\alwaystot Q}$, 
    \begin{align*}
        \dem{C}(\eta(\sigma)) \vDash \alwaystot Q 
        = \exists u : U\setminus\{\0\}.\: \ov{Q}^{(u)}
        = \{m \mid \supp(m) \subseteq Q\}
    \end{align*}
    That is, $\supp(\de{C}(\sigma)) \subseteq Q$, so we have $\sigma \in \dwp(C, Q)$. \\

    \noindent $\impliedby$: Suppose $m \vDash \ov P$, i.e. $|m| = 1$ and $\supp(m) \subseteq P \subseteq \dwp(C, Q)$. Note that $\nonterm \not\in \supp(m)$, and 
    \[
        \supp\left(\dem{C}(m)\right)
        = \bigcup_{\sigma \in \supp(m)} \supp(\de{C}(\sigma))
    \]
    For each $\sigma \in \supp(m)$, $\supp(\de{C}(\sigma)) \subseteq Q$. This gives us that $\dem{C}(m) \vDash \alwaystot Q$, as desired. 

\end{proof}

\begingroup 
\def\thetheorem{\ref{thm:subsume-dwp-awlp}}
\begin{theorem}[Subsumption of Angelic Partial Correctness]
\[
    \vDash \triple{\ov P}{C}{\sometimespart Q} \tiff P \subseteq \awlp(P, Q)
\]
\end{theorem}
\addtocounter{theorem}{-1}
\endgroup
\begin{proof}
    $\implies$: Suppose $\sigma \in P$. Then, $\eta(\sigma) \vDash \ov P$ and since $\vDash \triple{\ov P}{C}{\sometimespart Q}$, 
    \begin{align*}
        \dem{C}(\eta(\sigma)) \vDash \sometimespart Q 
        &= \exists (u, v): U^2\setminus\{(\0, \0)\}.\: \ov{Q}^{(u)} \oplus \Uparrow^{(v)} \oplus \top \\
        &= \{m \mid \supp(m) \cap Q_{\nonterm} \neq \emptyset\}
    \end{align*}
    That is, $\supp(\de{C}(\sigma)) \cap Q_\nonterm \neq \emptyset$, so we have $\sigma \in \awlp(C, Q)$. \\

    \noindent $\impliedby$: Suppose $m \vDash \ov P$, i.e. $|m| = 1$ and $\supp(m) \subseteq P \subseteq \awlp(C, Q)$. By definition, for any $\sigma \in \supp(m)$, $\supp(\de{C}(\sigma)) \cap Q_\nonterm \neq \emptyset$. It holds then that
    \[
        \supp\left(\dem{C}(m)\right) \cap Q_\nonterm
        = \left(\bigcup_{\sigma \in \supp(m)} \supp(\de{C}(\sigma))\right) \cap Q_\nonterm
        = \bigcup_{\sigma \in \supp(m)} \left(\supp(\de{C}(\sigma)) \cap Q_\nonterm\right)
        \neq \emptyset
    \]
    This gives us $\dem{C}(m) \vDash \sometimespart Q$, as desired. 

\end{proof}

\section{Soundness and Relative Completeness}
Here, we prove the soundness and (relative) completeness of \TOL. First, some preliminary definitions:
\begin{definition} \label{def:projection}
    For any test $b \in \bb{2}^{\Sigma}$ and $m \in \WA(\Sigma_{\nonterm})$, we define the \textit{projection} of $m$ onto $b$ to be the following weighting function:
    \[
        (\?{b}{m})(\sigma) \triangleq \begin{cases}
            m(\sigma) & \text{if } \detest{b}(\sigma) = \1 \\
            0 & \text{if } \detest{b}(\sigma) = \0 \text{ or } \sigma = \nonterm
        \end{cases}
    \]
\end{definition}
In our proofs, it will be useful to decompose a program configuration $m \in \WA(\Sigma_\nonterm)$ into two -- one part which solely describes weights on program states $\Sigma$, and another which solely describes the degree of nontermination. We introduce the following notational devices:
\begin{definition}
Suppose $m \in \WA(\Sigma_{\nonterm})$. We define $\projsig{m} \triangleq \?{\tru}{m}$ and $\projdiv{m} \triangleq m(\nonterm) \cdot \eta(\nonterm)$ to be the \textup{stateful} and \textup{nonterminating} components of $m$, respectively. These are weighting functions: 
    \begin{align*}
        (\projsig{m})(\sigma) &= \begin{cases}
            m(\sigma) & \text{if } \sigma \in \Sigma \\
            \0 & \text{if } \sigma =\: \nonterm
        \end{cases} & 
        (\projdiv{m})(\sigma) &= \begin{cases}
            \0 & \text{if } \sigma \in \Sigma \\
            m(\sigma) & \text{if } \sigma =\: \nonterm
        \end{cases}
    \end{align*}
\end{definition}
We give some properties for the projection operator: 
\begin{lemma} \label{lemma:projection-properties}
    The projection operator $(\?{b}{-})$ commutes with the following operations:
    \begin{enumerate}
        \item[(i)] $\sum_{i \in I} (\?{b}{m_i}) = \?{b}{\sum_{i \in I} m_i}$
        \item[(ii)] $u \cdot (\?{b}{m}) = \?{b}{u \cdot m}$ and $(\?{b}{m}) \cdot u = \?{b}{m \cdot u}$
    \end{enumerate}
    Moreoever, for any outcome assertion $\varphi$:
    \begin{enumerate}
        \item[(ii)] if for all $m \in \varphi$, $m = \?{b}{m'}$ for some $m'$, then $\varphi \vDash b$
        \item[(iv)] if for all $m \in \varphi$, $m = \projdiv{m'}$ for some $m'$, then $\varphi \vDash \div$. 
    \end{enumerate}
\end{lemma}
\begin{proof}
    (i) and (ii): We show that $\left(u \cdot \sum_{i \in I} (\?{b}{m_i}) \cdot v\right)(\sigma) = \left(\?{b}{u \cdot \sum_{i \in I} m_i} \cdot v\right)(\sigma)$. From the definition of \hyperref[def:projection]{projection}, it is straightforward to see that these are equal to 
    \[
        \begin{cases}
            u \cdot \sum_{i \in I} m_i(\sigma) \cdot v & \text{if } \detest{b}(\sigma) = \1 \\
            \0 & \text{if } \detest{b}(\sigma) = \0 \text{ or } \sigma = \nonterm
        \end{cases}
    \]
    (iii): Suppose that for all $m \in \varphi$, $m = \?{b}{m'}$ for some $m'$. Then, it should be clear that for all $\sigma \in \supp(m) = \supp(\?{b}{m'})$, $\detest{b}(\sigma) = 1$. By definition, $\varphi \vDash b$. (iv) holds by a similar argument.
\end{proof}
\begin{corollary}
    \quad
    $
        \dem{\assume e}(\?{b}{m}) = \?{b}{\left(\dem{\assume e}(m)\right)}
    $
\end{corollary}
\begin{proof}
    \begin{align*}
        &\?{b}{\dem{\assume e}(m)}(\tau) \\
        &\quad = \?{b}{\left[\sum_{\sigma \in \supp(m) \cap \Sigma} m(\sigma) \cdot \de{\assume e}(\sigma) + m(\nonterm) \cdot \eta(\nonterm) \right]} \\
        &\quad = \sum_{\sigma \in \supp(m) \cap \Sigma} \?{b}{\left[m(\sigma) \cdot \de{\assume e}(\sigma)(\tau)\right]} + \?{b}{m(\nonterm) \cdot \eta(\nonterm)} \\
        &\quad =  \sum_{\sigma \in \supp(m) \cap \Sigma} \?{b}{\left[m(\sigma) \cdot \de{e}(\sigma) \cdot \eta(\sigma) \right]} \\
        &\quad = \sum_{\sigma \in \supp(m) \cap \Sigma} \?{b}{\left[m(\sigma)\right] \cdot \de{e}(\sigma) \cdot \eta(\sigma)} \\
        &\quad = \sum_{\sigma \in \supp(\?{b}{m}) \cap \Sigma} \?{b}{\left[m(\sigma)\right] \cdot \de{\assume e}(\sigma)}
    \end{align*}
    Note that this is precisely $\dem{\assume e}(\?{b}{m})$.
\end{proof}

We proceed by proving some results on the semantics of iteration. Recall that the semantics for a command $\iter{C}{e}{e'}$ is given in terms of the characteristic function $\cPhi{C}{e}{e'}: (\Sigma \rightarrow \WA(\Sigma_{\nonterm})) \rightarrow \Sigma \rightarrow \WA(\Sigma_{\nonterm})$:
\[
    \cPhi{C}{e}{e'}(f)(\sigma) = \de{e}(\sigma) \cdot f^{\dagger}\big(\de{C}(\sigma)\big) + \de{e'}(\sigma) \cdot \eta(\sigma)
\]
The following Lemma states that applications of $\cPhi{C}{e}{e'}$ starting on $\bot$ are equivalent to \textit{unrolling} $\iter{C}{e}{e'}$ a corresponding number of times.
\begin{lemma} \label{lemma:n-unrolling}
For all $n \in \N$, $\sigma \in \Sigma$, and $\tau \in \Sigma_{\nonterm}$,
\footnote{Our semantics must capture two computational effects -- weighted execution and nontermination -- each of which compose differently when programs are sequenced. As a result, the unrolling is expressed differently depending on whether we want the final weight on a program state in $\Sigma$ or the final weight of the nontermination outcome. In the case of the former, we aggregate the weight collected for traces terminating within the first $n$ iterations. In the latter, we push any weight of traces not yet terminated (including any divergent weight from potential inner loops) onto $\nonterm$. }
\[
    \cPhi{C}{e}{e'}^{n+1}(\bot)(\sigma)(\tau) = \begin{cases}
        \sum_{k = 0}^n \de{(\assume e \seq C)^n \seq \assume e'}(\sigma)(\tau) & \text{if } \tau \in \Sigma \\
        \bot^{\dagger}(\de{(\assume e \seq C)^{n+1}}(\sigma))(\nonterm) & \text{if } \tau = \:\nonterm
    \end{cases}
\]
Note: For a command $C$, we define $C^0 \triangleq \skp$ and $C^{n+1} \triangleq C^n \seq C$. 
\end{lemma}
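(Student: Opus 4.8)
The plan is to prove the identity by induction on $n$, handling the two cases $\tau \in \Sigma$ and $\tau = \nonterm$ side by side, since terminating weight and divergence weight are threaded through the Kleisli extension differently. (I read the summand as $\de{(\assume e \seq C)^k \seq \assume e'}$, indexed by the bound variable $k$ of the sum.) Before the induction I would record two facts. First, the bottom of the fusion order is the total-divergence weighting $\bot = \top \cdot \eta(\nonterm)$, i.e.\ $\bot(\sigma) = \zero$ for $\sigma \in \Sigma$ and $\bot(\nonterm) = \top$; consequently, viewing $\bot$ as a constant map $\Sigma \to \WA(\Sigma_\nonterm)$, its Kleisli extension $\bot^\dagger(m)$ is supported only on $\nonterm$, so it vanishes on every $\sigma \in \Sigma$. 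Second, $\de{\assume e \seq C}(\sigma) = \de{e}(\sigma) \cdot \de{C}(\sigma)$, which follows from the semantics of $\assume$, the monad law $f^\dagger \circ \eta = f$, and the scalar pull-through of Lemma~\ref{lemma:bind-effects}(3); combined with $(\assume e \seq C)^{k+1} = (\assume e \seq C) \seq (\assume e \seq C)^k$ and the semantics of sequencing, this lets me peel one iteration off the front of any unrolling.

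For the base case $n = 0$ I would expand $\cPhi{C}{e}{e'}(\bot)(\sigma) = \de{e}(\sigma) \cdot \bot^\dagger(\de{C}(\sigma)) + \de{e'}(\sigma) \cdot \eta(\sigma)$ directly. For $\tau \in \Sigma$ the first summand vanishes because $\bot^\dagger$ is $\nonterm$-supported, leaving $\de{e'}(\sigma) \cdot \eta(\sigma)(\tau) = \de{\assume e'}(\sigma)(\tau)$, which is exactly the $k = 0$ term. For $\tau = \nonterm$ the exit term vanishes since $\eta(\sigma)(\nonterm) = \zero$, and the surviving $\de{e}(\sigma) \cdot \bot^\dagger(\de{C}(\sigma))(\nonterm)$ rewrites, by the two preliminary facts and Lemma~\ref{lemma:bind-effects}(3), as $\bot^\dagger(\de{e}(\sigma) \cdot \de{C}(\sigma))(\nonterm) = \bot^\dagger(\de{\assume e \seq C}(\sigma))(\nonterm)$, as required.

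For the inductive step I would set $g \triangleq \cPhi{C}{e}{e'}^{n+1}(\bot)$, write $\cPhi{C}{e}{e'}^{n+2}(\bot) = \cPhi{C}{e}{e'}(g)$, expand the outer $\cPhi{C}{e}{e'}$, and substitute the inductive description of $g$ inside $g^\dagger(\de{C}(\sigma))$. In the $\tau \in \Sigma$ case the exit term again supplies the $k=0$ summand; for the rest I would peel one iteration via the preliminary facts, pull $\de{e}(\sigma)$ through the extension by Lemma~\ref{lemma:bind-effects}(3), and unfold the outer $\dagger$ into a sum over $\rho \in \supp(\de{C}(\sigma)) \cap \Sigma$, after which the two double sums coincide once $\sum_\rho$ and $\sum_k$ are exchanged and $\de{C}(\sigma)(\rho)$ distributed inward — both justified by the complete-semiring laws (distributivity over infinite sums and the partition/reindexing law). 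In the $\tau = \nonterm$ case I would observe that $h(\rho) \triangleq \bot^\dagger(\de{(\assume e \seq C)^{n+1}}(\rho))$ is itself $\nonterm$-supported, so the monad associativity law $\bot^\dagger \circ h^\dagger = (\bot^\dagger \circ h)^\dagger$ identifies $\de{e}(\sigma) \cdot g^\dagger(\de{C}(\sigma))(\nonterm)$ with $\bot^\dagger(\de{(\assume e \seq C)^{n+2}}(\sigma))(\nonterm)$, provided the standalone term $\de{C}(\sigma)(\nonterm) \cdot \eta(\nonterm)(\nonterm)$ is checked to appear identically on both sides.

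I expect the $\tau = \nonterm$ half of the inductive step to be the main obstacle. The difficulty is twofold: one must keep track of the fact that $\bot^\dagger$ collapses all the not-yet-terminated mass onto $\nonterm$ while simultaneously \emph{preserving} genuine divergence already contributed by the body via $\de{C}(\sigma)(\nonterm)$ (for instance from a nested loop), and one must invoke the Kleisli associativity law in precisely the orientation that makes ``one more unrolling of $\assume e \seq C$'' on the right correspond to ``one application of $\cPhi{C}{e}{e'}$'' on the left. By contrast, the $\Sigma$ case is essentially bookkeeping once the infinite-sum exchange is grounded in the complete-semiring axioms.
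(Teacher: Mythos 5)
Your proposal is correct and follows essentially the same route as the paper's proof: induction on $n$, with the base case handled by noting that $\bot^{\dagger}(\de{\assume e \seq C}(\sigma))$ is $\nonterm$-supported while $\de{\assume e'}(\sigma)(\nonterm) = \0$, and the inductive step handled by expanding $\cPhi{C}{e}{e'}^{n+2}(\bot) = \cPhi{C}{e}{e'}\bigl(\cPhi{C}{e}{e'}^{n+1}(\bot)\bigr)$, exchanging the double sums via the complete-semiring laws in the $\Sigma$ case, and collapsing the not-yet-terminated mass onto $\nonterm$ in the divergence case. The only difference is presentational: where the paper computes the $\nonterm$ case directly (and somewhat tersely, eliding the standalone $\de{C}(\sigma)(\nonterm)$ term), you make the same step explicit via the Kleisli associativity law and the front/back-peeling identity for $(\assume e \seq C)^{n+2}$, which is if anything slightly more careful.
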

\begin{proof}
    We proceed by induction on $n$. 
    \begin{itemize}
        \item $n = 0$. We have that
        \begin{align*}
            \cPhi{C}{e}{e'}(\bot)(\sigma)
            &= \de{e}(\sigma) \cdot \bot^{\dagger}\big(\de{C}(\sigma)\big) + \de{e'}(\sigma) \cdot \eta(\sigma) \\
            &= \bot^{\dagger}(\de{\assume e \seq C}(\sigma)) + \de{\assume e'}(\sigma)
        \end{align*}
        Now, if $\tau \in \Sigma$, then $\bot^{\dagger}(\de{\assume e \seq C)}(\sigma))(\tau) = \0$, since $\supp(\bot^{\dagger}(\de{\assume e \seq C}(\sigma)) = \{\nonterm\}$. On the other hand, $\de{\assume e'}(\sigma)(\nonterm) = \0$. This gives us the two cases, as desired.

        \item \textit{Inductive step}. Suppose the claim holds for $n$. First,
        \begin{align*}
            \cPhi{C}{e}{e'}^{n+2}(\bot)(\sigma) 
            &= \de{e}(\sigma) \cdot \big(\cPhi{C}{e}{e'}^{n+1}(\bot)\big)^{\dagger}\big(\de{C}(\sigma)\big) + \de{e'}(\sigma) + \eta(\sigma) \\
            &= \de{e}(\sigma) \cdot \bigg(\sum_{\tau' \in \supp(\de{C}(\sigma))} \de{C}(\sigma)(\tau') \cdot \cPhi{C}{e}{e'}^{n+1}(\bot)(\tau')\bigg) + \de{e'}(\sigma) \cdot \eta(\sigma)
        \end{align*}
        Now, for $\tau \in \Sigma$, the induction hypothesis gives us:
        \begingroup
        \addtolength{\jot}{1em}
        \footnotesize
        \begin{align*}
            & \cPhi{C}{e}{e'}^{n+2}(\bot)(\sigma)(\tau) \\
            &= \de{e}(\sigma) \cdot \left(\sum_{\tau' \in \supp{\de{C}(\sigma)}} \de{C}(\sigma)(\tau') \cdot \sum_{k=0}^n \de{(\assume e \seq C)^k \seq \assume e'}(\tau')(\tau)\right) + \de{e'}(\sigma) \cdot \eta(\sigma)(\tau) \\
            &= \left(\sum_{\tau' \in \supp(\de{C}(\sigma))} \sum_{k=0}^n \de{e}(\sigma) \cdot \de{C}(\sigma)(\tau') \cdot \de{(\assume e \seq C)^k \seq \assume e'}(\tau')(\tau)\right) + \de{e'}(\sigma) \cdot \eta(\sigma)(\tau) \\
            &= \left(\:\sum_{k=0}^n \sum_{\tau' \in \supp(\de{C}(\sigma))} \de{e}(\sigma) \cdot \de{C}(\sigma)(\tau') \cdot \de{(\assume e \seq C)^k \seq \assume e'}(\tau')(\tau)\right) + \de{e'}(\sigma) \cdot \eta(\sigma)(\tau) \\
            &= \left(\:\sum_{k=1}^{n+1} \de{(\assume e \seq C)^k \seq \assume e'}(\sigma)(\tau) \right) + \de{\assume e'}(\sigma)(\tau) \\
            &= \sum_{k=0}^{n+1} \de{(\assume e \seq C)^k \seq \assume e'}(\sigma)(\tau)
        \end{align*}
        \endgroup
        Applying the other case of the induction hypothesis:
        {\footnotesize
        \begingroup
        \addtolength{\jot}{0.5em}
        \begin{align*}
            &\cPhi{C}{e}{e'}^{n+2}(\bot)(\sigma)(\nonterm) \\
            &= \de{e}(\sigma) \cdot \bigg(\sum_{\tau' \in \supp(\de{C}(\sigma))} \de{C}(\sigma)(\tau') \cdot \bot^{\dagger}(\de{(\assume e \seq C)^{n+1}}(\tau'))(\nonterm)\bigg) + \de{e'}(\sigma) \cdot \eta(\sigma)(\nonterm) \\
            &= \bot^{\dagger}(\de{(\assume e \seq C)^{n+2}}(\sigma))(\nonterm) + \de{e'}(\sigma) \cdot \0 \\
            &= \bot^{\dagger}(\de{(\assume e \seq C)^{n+2}}(\sigma))(\nonterm)
        \end{align*}
        \endgroup
        }%
    \end{itemize}
\end{proof}

We can now reformulate the semantics of $\iter{C}{e}{e'}$ by unrolling the command iteration-by-iteration:
\begin{lemma}[Unrolling] \label{lemma:unrolling}
For all $\sigma \in \Sigma$ and $\tau \in \Sigma_{\nonterm}$, 
\[
    \de{\iter{C}{e}{e'}}(\sigma)(\tau) = \begin{cases}
        \sum_{n \in \N} \de{(\assume e \seq C)^n \seq \assume e'}(\sigma)(\tau) & \text{if } \tau \in \Sigma \\
        \inf_{n \in \N} \bot^{\dagger}(\de{(\assume e \seq C)^n}(\sigma))(\tau) & \text{if } \tau = \:\nonterm
    \end{cases}
\]
\end{lemma}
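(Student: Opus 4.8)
The plan is to express the least fixpoint defining $\de{\iter{C}{e}{e'}}$ as a supremum of finite approximants via Kleene's theorem, and then read off each case directly from the closed form for those approximants already computed in \Cref{lemma:n-unrolling}. The preceding continuity lemma establishes that $\cPhi{C}{e}{e'}$ is Scott-continuous on the dcpo $\WA^\infty(\Sigma)$ ordered by the fusion order $\sqsubseteq$, and this dcpo has a least element $\bot$. Since $\cPhi{C}{e}{e'}$ is in particular monotone, the iterates form an ascending chain $\bot \sqsubseteq \cPhi{C}{e}{e'}(\bot) \sqsubseteq \cPhi{C}{e}{e'}^2(\bot) \sqsubseteq \cdots$, so by the Kleene fixpoint theorem its least fixpoint is the supremum of this chain:
\[
    \de{\iter{C}{e}{e'}}(\sigma) = \big(\lfp.\:\cPhi{C}{e}{e'}\big)(\sigma) = \sup_{n \in \N} \cPhi{C}{e}{e'}^{n+1}(\bot)(\sigma).
\]

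Next I would compute this supremum pointwise, using the fact that suprema of a chain in the fusion order decompose componentwise: the value at a state $\tau \in \Sigma$ is the supremum (in the natural order $\le$) of the pointwise values, whereas the value at $\nonterm$ is their infimum, since $\sqsubseteq$ is covariant on $\Sigma$ but contravariant on $\nonterm$. Substituting the closed form from \Cref{lemma:n-unrolling} then splits the argument into the two stated cases. For $\tau \in \Sigma$ this yields
\[
    \de{\iter{C}{e}{e'}}(\sigma)(\tau) = \sup_{n \in \N} \sum_{k=0}^{n} \de{(\assume e \seq C)^k \seq \assume e'}(\sigma)(\tau),
\]
and for $\tau = \nonterm$ it yields $\inf_{n \in \N} \bot^{\dagger}\big(\de{(\assume e \seq C)^{n+1}}(\sigma)\big)(\nonterm)$, which is the claimed infimum after a trivial reindexing of the chain.

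It then remains to identify the supremum of partial sums in the $\Sigma$ case with the infinite sum $\sum_{n \in \N} \de{(\assume e \seq C)^n \seq \assume e'}(\sigma)(\tau)$. This follows from the definition of the infinite sum in a complete semiring as the supremum over all finite index sets: the initial segments $\{0,\dots,n\}$ are cofinal among finite subsets of $\N$, and since every summand is nonnegative in the natural order, the supremum over initial-segment partial sums coincides with the supremum over all finite partial sums. The main obstacle I anticipate is precisely the bookkeeping around this supremum/infimum split: one must justify that taking the supremum of the chain of approximants commutes with the pointwise case analysis and, in the $\nonterm$ case, that passing to the supremum in $\sqsubseteq$ genuinely yields the infimum of the masses rather than merely a lower bound. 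Both points rest on the componentwise characterization of directed suprema in the fusion order, which in turn depends on the Scott- and lower-Scott-continuity of the semiring operations established earlier.
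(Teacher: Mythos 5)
Your proposal is correct and takes essentially the same route as the paper's proof: Kleene's fixpoint theorem applied to the Scott-continuous characteristic function, substitution of the closed form for $\Phi^{n+1}_{\langle C,e,e'\rangle}(\bot)$ from the $n$-unrolling lemma, componentwise evaluation of the chain's supremum in the fusion order (covariant on $\Sigma$, contravariant on $\nonterm$, so the $\nonterm$-component becomes an infimum), and identification of the supremum of partial sums with the infinite sum via the finitary-semiring definition. The only difference is that you spell out two steps the paper leaves implicit---the cofinality of initial segments among finite subsets of $\N$ and the trivial reindexing of the decreasing $\nonterm$-chain---both of which are sound.
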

\begin{proof}
By the program semantics (\cref{fig:inference-command}) and Kleene's fixed point theorem,
\[
    \de{\iter{C}{e}{e'}}(\sigma) = \lfp f.\: \cPhi{C}{e}{e'}(f)(\sigma) = \sup_{n \in N} \cPhi{C}{e}{e'}^n(\bot)(\sigma)
\]
Now, suppose $\tau \in \Sigma$. Since $\cPhi{C}{e}{e'}^0(\bot) = \bot$, or the bottom of the order on $(\Sigma \rightarrow \WA(\Sigma_{\nonterm}))$, we can rewrite the supremum as:
\begin{align*}
    \sup_{n \in \N} \cPhi{C}{e}{e'}^n(\bot)(\sigma)(\tau) 
    &= \sup_{n \in \N} \cPhi{C}{e}{e'}^{n+1}(\bot)(\sigma)(\tau)
    \intertext{By \cref{lemma:n-unrolling}:}
    &= \sup_{n \in \N} \sum_{k=0}^n \de{(\assume e \seq C)^k \seq \assume e'}(\sigma)(\tau) \\
    \intertext{By the definition of infinite sums:}
    &= \sum_{n \in \N} \de{(\assume e \seq C)^n \seq \assume e'}(\sigma)(\tau)
\end{align*}
Finally, we can obtain the remaining case using \autoref{lemma:n-unrolling}:
\begin{align*}
    \sup_{n \in \N} \cPhi{C}{e}{e'}^n(\sigma)(\nonterm) 
    &= \sup_{n \in \N} \big(\bot^{\dagger}(\de{(\assume e \seq C)^n)})(\nonterm)\big) \\
    &= \inf_{n \in \N} \bot^{\dagger}(\de{(\assume e \seq C)^n)})(\nonterm)
\end{align*}
\end{proof}
\begin{theorem}[Soundness]
\[
    \Omega \vdash \triple{\varphi}{C}{\psi} \quad\implies\quad \vDash \triple{\varphi}{C}{\psi}
\]
\end{theorem}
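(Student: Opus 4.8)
The plan is to proceed by rule induction on the derivation $\Omega \vdash \triple{\varphi}{C}{\psi}$: for each inference rule in \Cref{fig:inference-structural,fig:inference-command,fig:inference-iter} I assume the premises are semantically valid, together with the soundness of the oracle-discharged side conditions (so that every implication $\varphi' \Rightarrow \varphi$ and every entailment $\varphi \vDash e = u$ appearing as a premise genuinely holds), and I show that the conclusion is valid, i.e. $\dem{C}(m) \vDash \psi$ for every $m \vDash \varphi$. Unfolding the definitions of $\dem{C}$ and of the outcome assertions reduces each rule to a computation in the monad $\WA$.

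The structural and non-iterative command rules are largely routine and rest on \Cref{lemma:bind-effects}. \textsc{Skip}, \textsc{Consequence}, \textsc{False}, and \textsc{True} are immediate, and \textsc{Seq} follows by composing the two premises through $\dem{C_1 \seq C_2} = \dem{C_2}\circ\dem{C_1}$. \textsc{Disj}, \textsc{Conj}, \textsc{Exists}, \textsc{Scale}, and \textsc{Choice} hold because $\dem{C}$ commutes with the corresponding assertion constructors, using parts (1) and (3) of \Cref{lemma:bind-effects} (linearity of $(-)^\dagger$ over $+$ and over scalar multiplication). \textsc{Div} (and hence \textsc{Div*}) uses part (5), $f^\dagger(\eta(\nonterm)) = \eta(\nonterm)$, to show that $\nonterm$-weight is carried through untouched. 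The subtle point in this block is \textsc{Plus}: since the Kleisli extension treats $\Sigma$-weight and $\nonterm$-weight differently, additivity $(f+g)^\dagger(m) = f^\dagger(m)+g^\dagger(m)$ holds only when $\supp(m)\subseteq\Sigma$ (\Cref{lemma:bind-effects}(2)), which is exactly why the rule carries the side condition $\varphi \vDash \tru$; I would invoke that entailment at precisely this step. \textsc{Assume} uses its entailment premise to replace $\de{e}(\sigma)$ by the constant $u$ on $\supp(m)$, yielding $\dem{\assume e}(m) = m \cdot u$.

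The crux of the argument, and the step I expect to be the main obstacle, is the \textsc{Iter} rule, where fixpoints and the two limiting assertions enter. Here I would invoke the \textbf{Unrolling} lemma (\Cref{lemma:unrolling}), which expresses $\de{\iter{C}{e}{e'}}(\sigma)(\tau)$ as a sum over iterations, $\sum_{n} \de{(\assume e \seq C)^n \seq \assume e'}(\sigma)(\tau)$, on terminating states $\tau\in\Sigma$, and as an infimum, $\inf_n \bot^\dagger(\de{(\assume e \seq C)^n}(\sigma))(\nonterm)$, on $\nonterm$. Given $m \vDash \varphi_0 \oplus \zeta_0$, I first prove by induction on $n$, using the premise $\triple{\varphi_n \oplus \zeta_n}{\assume e \seq C}{\varphi_{n+1}\oplus\zeta_{n+1}}$, that $\dem{(\assume e \seq C)^n}(m) \vDash \varphi_n \oplus \zeta_n$. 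Filtering through $\assume e'$ (premise $\triple{\varphi_n}{\assume e'}{\psi_n}$) then shows that the weight exiting at iteration $n$ satisfies $\psi_n$, while the $\zeta_n$ part, having support in $\{\nonterm\}$, is simply carried through and contributes no terminating weight; the entailments $\varphi_n\vDash\tru$ and $\zeta_n\vDash\div$ are what let me separate the exiting $\Sigma$-weight from the residual $\nonterm$-weight cleanly. Summing the terminating contributions and appealing to the definition of $(\psi_n)_{n}\rightsquigarrow\psi_\infty$ places the terminating component in $\psi_\infty$.

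The remaining difficulty is the nonterminating component. The Unrolling lemma gives the loop's $\nonterm$-weight as $\inf_n \bot^\dagger(\de{(\assume e \seq C)^n}(m))(\nonterm)$, and since $\bot^\dagger$ pushes all residual mass of $\varphi_n \oplus \zeta_n$ onto $\nonterm$, this infimum is exactly $(\inf_n |m_n|\cdot\top)\cdot\eta(\nonterm)$ for the family of residuals $m_n \vDash \varphi_n \oplus \zeta_n$; this is the content of the premise $(\varphi_n\oplus\zeta_n)_{n}\Uparrow\zeta_\infty$. The care needed is in matching the $\inf$ arising from the fixpoint (recall that larger in the fusion order means less $\nonterm$-weight, so the supremum of approximants becomes an infimum on $\nonterm$) with the $\inf$ in the diverging-assertion definition, and in verifying that the decomposition of $\dem{\iter{C}{e}{e'}}(m)$ into its $\Sigma$-part and its $\nonterm$-part is precisely an outcome conjunction. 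Once the terminating sum and the diverging infimum are separately identified, the case closes by assembling $\dem{\iter{C}{e}{e'}}(m) = m_{\mathrm{term}} + m_{\nonterm}$ with $m_{\mathrm{term}}\vDash\psi_\infty$ and $m_{\nonterm}\vDash\zeta_\infty$, so that $\dem{\iter{C}{e}{e'}}(m) \vDash \psi_\infty \oplus \zeta_\infty$.
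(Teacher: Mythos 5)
Your proposal is correct and follows essentially the same route as the paper's proof: rule induction on the derivation, with the non-iterative cases discharged via \Cref{lemma:bind-effects} (including the key observation that \textsc{Plus} needs $\varphi \vDash \tru$ to apply part (2)), and the \textsc{Iter} case handled exactly as in the paper—an inner induction on $n$ establishing $\dem{(\assume e \seq C)^n}(m) \vDash \varphi_n \oplus \zeta_n$, the entailments $\varphi_n \vDash \tru$ and $\zeta_n \vDash \div$ separating terminating from diverging mass, and the Unrolling lemma (\Cref{lemma:unrolling}) matching the iteration-wise sum to $(\psi_n)_n \rightsquigarrow \psi_\infty$ and the $\nonterm$-infimum to $(\varphi_n \oplus \zeta_n)_n \Uparrow \zeta_\infty$. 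No gaps beyond the level of detail one would expect to fill in mechanically.
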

\begin{proof}
    The triple $\triple{\varphi}{C}{\psi}$ is proved using inference rules given in \cref{fig:inference-command} and \cref{fig:inference-structural}. If the last step in this proof makes use of an axiom, then we are done since we took all axioms in $\Omega$ to be semantically valid. Otherwise, we proceed by induction on the derivation of $\Omega \vdash \triple{\varphi}{C}{\psi}$.
    \begin{itemize}
        \item \textsc{False}. We need to show $\vDash \triple{\bot}{C}{\varphi}$. Suppose $m \vDash \bot$. But this is impossible, so the claim holds vacuously. 

        \item \textsc{True}. We need to show $\vDash \triple{\varphi}{C}{\top}$. Suppose $m \vDash \varphi$. It is trivial that $\dem{C}(m) \vDash \top$, so we are done. 

        \item \textsc{Div}. We want to demonstrate $\vDash \triple{\Uparrow^{(u)}}{C}{\Uparrow^{(u)}}$. Suppose $m \vDash \:\Uparrow^{(u)}$. Then, $m = u \cdot \eta(\nonterm)$ by definition. We have
        \begin{align*}
            \dem{C}(m) 
            &= \sum_{\sigma \in \supp(m) \cap \Sigma} \big(u \cdot \eta(\nonterm)\big)(\sigma) \cdot \de{C}(\sigma) + \big(u \cdot \eta(\nonterm)\big)(\nonterm) \cdot \eta(\nonterm) \\
            &= \sum_{\sigma \in \supp(m) \cap \Sigma} \0 \cdot \de{C}(\sigma) + u \cdot \eta(\nonterm) 
            = u \cdot \eta(\nonterm) 
        \end{align*}
        So, $\dem{C}(m) \vDash \:\Uparrow^{(u)}$. 

        \item \textsc{Scale}. By the induction hypothesis, we have $\vDash \triple{\varphi}{C}{\psi}$. We need to show $\vDash \triple{u \odot \varphi}{C}{u \odot \psi}$. Suppose $m \vDash u \odot \varphi$. By definition, $ m = u \cdot m'$ where $m' \vDash \varphi$. Using (iii) of \autoref{lemma:bind-effects}, it follows that
        \[
            \dem{C}(m) = \dem{C}(u \cdot m') = u \cdot \dem{C}(m')
        \]
        Since $\dem{C}(m') \vDash \psi$, we can conclude that $\dem{C}(m) \vDash u \odot \psi$. 

        \item \textsc{Disj}. By the induction hypothesis, $\vDash \triple{\varphi_1}{C}{\psi_1}$ and $\vDash \triple{\varphi_2}{C}{\psi_2}$. We want to show $\vDash \triple{\varphi_1 \lor \varphi_2}{C}{\psi_1 \lor \psi_2}$. Suppose $m \vDash \varphi_1 \lor \varphi_2$. Without loss of generality, take $m \vDash \varphi_1$. The induction hypothesis gives us $\dem{C}(m) \vDash \psi_1$, which can be weakened to $\dem{C}(m) \vDash \psi_1 \lor \psi_2$. The case for $m \vDash \varphi_2$ is symmetric.  

        \item \textsc{Conj}. We have $\vDash \triple{\varphi_1}{C}{\psi_1}$ and $\vDash \triple{\varphi_2}{C}{\psi_2}$ by the induction hypothesis and we want to show $\vDash \triple{\varphi_1 \land \varphi_2}{C}{\psi_1 \land \psi_2}$. Supposing $m \vDash \varphi_1 \land \varphi_2$, it holds that $m \vDash \varphi_1$ and $m \vDash \varphi_2$. Then, $\dem{C}(m) \vDash \psi_1$ and $\dem{C}(m) \vDash \psi_2$. By definition, $\dem{C}(m) \vDash \psi_1 \land \psi_2$, as desired.

        \item \textsc{Choice}. By the induction hypothesis, we have that $\vDash \triple{\phi(t)}{C}{\phi'(t)}$ for all $t \in T$. We now show that $\triple{\bigoplus_{t \in T} \phi(t)}{C}{\bigoplus_{t \in T} \phi'(t)}$. Suppose $m \vDash \bigoplus_{t \in T} \phi(t)$. By definition, this means $m = \sum_{t \in T} m_t$ such that for all $t \in T$, $m_t \vDash \phi(t)$. Since $\sum$ commutes with $(-)^{\dagger}$ in the first argument by \autoref{lemma:bind-effects}, 
        \begin{align*}
            \dem{C}(m) 
            &= \dem{C}\left(\sum_{t \in T} m_t\right) 
            = \sum_{t \in T} \dem{C}(m_t)
        \end{align*}
        For each $m_t$, $\dem{C}(m_t) \vDash \phi'(t)$, giving us $\dem{C}(m) \vDash \bigoplus_{t \in T} \phi'(t)$. 

        \item \textsc{Exists}. By the induction hypothesis, we are given $\forall t \in T.\: \vDash \triple{\phi(t)}{C}{\phi'(t)}$. We need to show $\vDash \triple{\exists t : T.\: \phi(t)}{C}{\exists t : T.\: \phi'(t)}$. Suppose $m \vDash \exists t : T.\: \phi(t)$.
        By definition, this means $m \in \bigcup_{t \in T} \phi(t)$, so there must be some $t \in T$ for which $m \in \phi(t)$. Then, by the induction hypothesis, $\dem{C}(m) \vDash \phi'(t)$, we have that $\dem{C}(m) \vDash \exists t : T.\: \phi'(t)$.

        \item \textsc{Consequence}. Given to us are $\vDash \varphi' \implies \varphi$ and $\vDash \psi \implies \psi'$, and we have $\vDash \triple{\varphi}{C}{\psi}$ by the induction hypothesis. We need to show $\vDash \triple{\varphi'}{C}{\psi'}$. Suppose $m \vDash \varphi'$. Then, $m \vDash \varphi$ and $\dem{C}(m) \vDash \psi$. Finally, this gives us $\dem{C}(m) \vDash \psi'$ as desired. 
        
        \item \textsc{Skip}. We need to show that $\vDash \triple{\varphi}{\skp}{\varphi}$. Suppose that $m \vDash \varphi$. We have that $\dem{\skp}(m) = m$, so $\dem{\skp}(m) \vDash \varphi$ as desired. 
            
        \item \textsc{Seq}. Given $\Omega \vdash \triple{\varphi}{C_1}{\vartheta}$ and $\Omega \vdash \triple{\vartheta}{C_2}{\psi}$, we need to show $\vDash \triple{\varphi}{C_1 \seq C_2}{\psi}$. Suppose $m \vDash \varphi$, so by the induction hypothesis, $\dem{C_1}(m) \vDash \vartheta$ and $\dem{C_2}(\dem{C_1}(m)) \vDash \psi$. Since $\dem{C_2}(\dem{C_1}(m)) = (\dem{C_1} \circ \dem{C_2})(m) = \dem{C_1 \seq C_2}(m)$, we are done.

        \item \textsc{Plus}. Given $\varphi \vDash \1$, $\Omega \vdash \triple{\varphi}{C_1}{\psi_1}$, and $\Omega \vdash \triple{\varphi}{C_2}{\psi_2}$, we need to show $\vDash \triple{\varphi}{C_1 + C_2}{\psi_1 \oplus \psi_2}$. First, suppose $m \vDash \varphi$. Since $\varphi \vDash \1$, it must be that $\supp(m) \subseteq \Sigma$. By \autoref{lemma:bind-effects}, we know that the $(-)^{\dagger}$ operator commutes with $+$ in its first argument: 
        \[
            \dem{C_1 + C_2}(m) = \dem{C_1}(m) + \dem{C_2}(m)
        \]
        By the induction hypothesis, $\dem{C_1}(m) \vDash \psi_1$ and $\dem{C_2}(m) \vDash \psi_2$, so we have $\dem{C_1 + C_2}(m) \vDash \psi_1 \oplus \psi_2$.
        
        \item \textsc{Assume}. Suppose $\varphi \vDash e = u$. Recall that this holds iff 
        \[
            \forall m \in \varphi.\: m(\nonterm) = 0 \text{ and } \forall \sigma \in \supp(m) \cap \Sigma.\: \de{e}(\sigma) = u. 
        \]
        We need to show $\vDash \triple{\varphi}{\assume e}{\varphi \ u}$. Take $m \vDash \varphi$. Then, 
        \begin{align*}
            \dem{\assume e}(m) 
            &= \sum_{\sigma \in \supp(m) \cap \Sigma} m(\sigma) \cdot \de{\assume}(\sigma) + m(\nonterm) \cdot \eta(\nonterm) \\
            &= \sum_{\sigma \in \supp(m) \cap \Sigma} m(\sigma) \cdot \de{e}(\sigma) \cdot \eta(\sigma) + m(\nonterm) \cdot \eta(\nonterm) \\
            &= \sum_{\sigma \in \supp(m) \cap \Sigma} m(\sigma) \cdot u \cdot \eta(\sigma)  + 0 \\
            &= \left(\sum_{\sigma \in \supp(m) \cap \Sigma} m(\sigma) \cdot \eta(\sigma) \right) \cdot u \\
            &= m \cdot u
        \end{align*}
        By definition, $m \cdot u \vDash \varphi \odot u$, as desired. 

        \item \textsc{Iter}. We need to show that $\vDash \triple{\varphi_0 \oplus \zeta_0}{\iter{C}{e}{e'}}{\psi_\infty \oplus \zeta_\infty}$, given the premises. Suppose $m \vDash \varphi_0 \oplus \zeta_0$. By the induction hypothesis, we have that for all $n \in \N$,
        \[
            \vDash \triple{\varphi_n \oplus \zeta_n}{\assume e \seq C}{\varphi_{n+1} \oplus \zeta_{n+1}}
            \qquad
            \vDash \triple{\varphi_n}{\assume e'}{\psi_n}
        \]
        By mathematical induction on $n$, it is straightforward to show that
        \begin{align}
            \dem{(\assume e \seq C)^n}(m) &\vDash \varphi_n \oplus \zeta_n \\
            \dem{(\assume e \seq C)^n \seq \assume e'}(m) &\vDash \psi_n \oplus \zeta_n
        \end{align}
        for all $n$. Note that $\vDash \triple{\zeta_n}{\assume e'}{\zeta_n}$ since $\zeta_n \vDash \div$. In particular, we have that for any $m \vDash \zeta_n$, $\supp(m) \subseteq \{\nonterm\}$, so $\dem{\assume e'}(m) = m$.
        
        Consider each of these results separately:
        \begin{enumerate}
            \item For all $n$, $\dem{(\assume e \seq C)^n \seq \assume e'}(m) \vDash \psi_n \oplus \zeta_n$. Then, there exist $p_n, q_n \in \WA(\Sigma_{\nonterm})$ such that 
            \[
                \dem{(\assume e \seq C)^n \seq \assume e'}(m) = p_n + q_n
            \]
            for $p_n \vDash \psi_n$ and $q_n \vDash \zeta_n$. Since $(\psi_n)_{n \in \N} \rightsquigarrow \psi_{\infty}$, it must be that $\sum_{n \in \N} p_n \vDash \psi_\infty$.
            \vspace{0.5em}
            \item For all $n$, $\dem{(\assume e \seq C)^n \seq \assume e'}(m)\vDash \psi_n \oplus \zeta_n$. Since \\ $(\varphi_n \oplus \zeta_n)_{n \in \N} \Uparrow \zeta_\infty$, we have 
            \[
                \left(\inf_{n \in \N} \big|\dem{(\assume e \seq C)^n}(m)\big| \cdot \top\right) \cdot \eta(\nonterm) \vDash \zeta_\infty
            \]
        \end{enumerate}
        We can now combine the two parts. By \autoref{lemma:unrolling}, the unrolling of $\dem{\iter{C}{e}{e'}}(m)$ is \todo{Need to update unrolling!}
        \begin{align*}
            \dem{\iter{C}{e}{e'}}
            &= \sum_{n \in \N} \projsig{\dem{(\assume e \seq C)^n \seq \assume e'}(m)} \\
            &\quad + \left(\inf_{n \in \N} |\dem{(\assume e \seq C)^n}(m)| \cdot \top \right) \cdot \eta(\nonterm) 
        \end{align*}
        Therefore, $\dem{\iter{C}{e}{e'}}(m) \vDash \psi_\infty \oplus \zeta_\infty$. 
        

    \end{itemize}
\end{proof}

Paving a path to relative completeness, we first prove that triples of the form $\triple{\varphi}{C}{\spost(C, \varphi)}$ can be derived, where $\spost(C, \varphi)$ is the \textit{strongest postcondition} that makes $\triple{\varphi}{C}{\psi}$ true. Defined otherwise: 
\begin{definition}[Strongest Postcondition]
$\spost(C, \varphi)$ consists of all program configurations reachable from $\varphi$ by executing command $C$: 
\[
    \spost(C, \varphi) \triangleq \{\dem{C}(m) \mid m \in \varphi\}
\]
\end{definition}
\begin{lemma} \label{lemma:spost-derivation}
\[
    \Omega \vdash \triple{\varphi}{C}{\spost(C, \varphi)}
\]
\end{lemma}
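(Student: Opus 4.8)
The plan is to prove the lemma by first reducing to singleton preconditions and then inducting on the structure of $C$. Recall that $\spost(C,\varphi)$ denotes the strongest postcondition, i.e.\ the direct image $\{\dem{C}(m)\mid m\in\varphi\}$; in particular $\spost(C,\bone_m)=\bone_{\dem{C}(m)}$ for a singleton, and $\spost(C,\varphi)=\exists m:\varphi.\:\spost(C,\bone_m)$. Since moreover $\varphi=\bigcup_{m\in\varphi}\{m\}=\exists m:\varphi.\:\bone_m$, a single application of the \textsc{Exists} rule reduces the goal to establishing $\Omega\vdash\triple{\bone_m}{C}{\bone_{\dem{C}(m)}}$ for an arbitrary $m\in\WA(\Sigma_\nonterm)$. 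I would then prove this refined statement by structural induction on $C$, with the inductive hypothesis covering all (singleton) preconditions of strict subprograms.

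The straightforward cases invoke the corresponding command rules directly. For $\skp$ the \textsc{Skip} rule applies since $\dem{\skp}(m)=m$; for a basic action $a$ I invoke the assumed axiom for $a$ in $\Omega$; and for $C_1\seq C_2$ the \textsc{Seq} rule chains the two instances of the inductive hypothesis, using that $\spost(\cdot,\bone_m)$ remains a singleton so the intermediate assertion matches exactly, i.e.\ $\dem{C_1\seq C_2}(m)=\dem{C_2}(\dem{C_1}(m))$.

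The cases for $\assume e$ and $C_1+C_2$ require splitting $m$ along the decomposition $m=(\projsig m)+m(\nonterm)\cdot\eta(\nonterm)$ from \Cref{def:projection}, because \textsc{Assume} and \textsc{Plus} both demand sure termination. For $C_1+C_2$ I apply \textsc{Plus} to the terminating part $\projsig m$ (feeding it the inductive hypotheses for $C_1$ and $C_2$) and \textsc{Div*} to $m(\nonterm)\cdot\eta(\nonterm)$, then recombine with the \textsc{Choice} rule; the resulting postcondition is identified with $\bone_{\dem{C_1+C_2}(m)}$ using the set identity $\bone_{m_1}\oplus\bone_{m_2}=\bone_{m_1+m_2}$ together with clauses (2) and (5) of \Cref{lemma:bind-effects}. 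The $\assume e$ case is analogous, except the terminating part is further partitioned by the value of $\de e$ on its support, yielding a \textsc{Choice} over the index set $\{u\in U\mid u=\de e(\sigma)\text{ for some }\sigma\in\supp(m)\}$ on which \textsc{Assume} applies blockwise.

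The main obstacle is the loop case $\iter{C}{e}{e'}$, which I would discharge with the \textsc{Iter} rule instantiated by families built from the finite unrollings. Writing $g_n\triangleq\dem{(\assume e\seq C)^n}(m)$, I set $\varphi_n\triangleq\bone_{\projsig{g_n}}$, $\zeta_n\triangleq\bone_{g_n(\nonterm)\cdot\eta(\nonterm)}$, and $\psi_n\triangleq\bone_{\projsig{\dem{(\assume e\seq C)^n\seq\assume{e'}}(m)}}$. By construction $\varphi_n\vDash\tru$ and $\zeta_n\vDash\div$, and $\varphi_n\oplus\zeta_n=\bone_{g_n}$, so the iteration premise $\triple{\varphi_n\oplus\zeta_n}{\assume e\seq C}{\varphi_{n+1}\oplus\zeta_{n+1}}$ and the exit premise $\triple{\varphi_n}{\assume{e'}}{\psi_n}$ both follow from the inductive hypothesis for $C$ and for the assume commands, composed via \textsc{Seq}. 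The delicate step is verifying the two limiting side conditions: because every family member is a singleton, the universally quantified choices in the definitions of converging and diverging assertions collapse to their unique witnesses, so taking $\psi_\infty\triangleq\bone_{\sum_{n}\projsig{\dem{(\assume e\seq C)^n\seq\assume{e'}}(m)}}$ makes $(\psi_n)_{n\in\N}\rightsquigarrow\psi_\infty$ immediate, and taking $\zeta_\infty\triangleq\bone_{(\inf_{n}|g_n|\cdot\top)\cdot\eta(\nonterm)}$ makes $(\varphi_n\oplus\zeta_n)_{n\in\N}\Uparrow\zeta_\infty$ immediate. Finally the \textsc{Iter} rule yields $\triple{\bone_m}{\iter{C}{e}{e'}}{\psi_\infty\oplus\zeta_\infty}$, and the Unrolling lemma (\Cref{lemma:unrolling}) shows that its two cases match $\psi_\infty$ and $\zeta_\infty$ termwise, so $\psi_\infty\oplus\zeta_\infty=\bone_{\dem{\iter{C}{e}{e'}}(m)}=\spost(\iter{C}{e}{e'},\bone_m)$, as required. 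The only subtlety worth care is ensuring the projection decomposition makes the entailments $\varphi_n\vDash\tru$ and $\zeta_n\vDash\div$ hold \emph{exactly}, so that the \textsc{Iter} premises are literally satisfied.
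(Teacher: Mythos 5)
Your proposal is correct and follows essentially the same route as the paper's proof: structural induction on $C$, the decomposition $m = (\projsig{m}) + m(\nonterm)\cdot\eta(\nonterm)$ to satisfy the sure-termination side conditions of \textsc{Plus} and \textsc{Assume}, and the instantiation of \textsc{Iter} with the unrolling families $\bone_{\projsig{\dem{(\assume e \seq C)^n}(m)}}$, with \Cref{lemma:unrolling} identifying $\psi_\infty \oplus \zeta_\infty$ with the strongest post. The only difference is organizational---you factor the reduction to singleton preconditions through a single up-front application of \textsc{Exists}, where the paper applies \textsc{Exists} per case---which is a cosmetic streamlining, not a different argument.
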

\begin{proof}
    By induction on the structure of the program $C$.
    \begin{itemize}
        \item $C = \skp$. $\dem{\skp}(m) = m$ for all $m \in \WA(\Sigma_{\nonterm})$, so $\spost(\skp, \varphi) = \varphi$. The desired triple is derived by one application of the \hyperref[fig:inference-command]{\textsc{Skip}} rule:
        \vspace{-1em}
        \begin{mathpar}
            \inferrule*[Right=Skip]{\:}{\triple{\varphi}{C}{\varphi}}
        \end{mathpar}

        \item $C = C_1 \seq C_2$. First, observe that
        \begin{align*}
            \spost(C_1 \seq C_2, \varphi)
            &= \{\dem{C_1 \seq C_2}(m) \mid m \in \varphi\} \\
            &= \{\dem{C_2}(\dem{C_1}(m)) \mid m \in \varphi\} \\
            &= \{\dem{C_2}(m') \mid m' \in \{\dem{C_1}(m) \mid m \in \varphi\}\} \\
            &= \spost(C_2, \spost(C_1, \varphi))
        \end{align*}
        By the induction hypothesis, we have
        \begin{align*}
            \Omega &\vdash \triple{\varphi}{C_1}{\spost(C_1, \varphi)} \\
            \Omega &\vdash \triple{\spost(C_1, \varphi)}{C_2}{\spost(C_2, \spost(C_1, \varphi))}
        \end{align*}
        The derivation is completed with the \hyperref[fig:inference-command]{\textsc{Seq}} rule:
        \begin{mathpar}
            \inferrule*[Right=Seq]
                {\inferrule
                    {\Omega}
                    {\triple{\varphi}{C_1}{\spost(C_1, \varphi)}} \\
                \inferrule
                    {\Omega}
                    {\triple{\spost(C_1, \varphi)}{C_2}{\spost(C_2, \spost(C_1, \varphi))}}
                }
                {\triple{\varphi}{C_1 \seq C_2}{\spost(C_1 \seq C_2, \varphi)}}
        \end{mathpar}

        \item $C = C_1 + C_2$. It holds that 
        \begin{align*}
            \spost(C_1 + C_2, \varphi) 
            &= \big\{\dem{C_1 + C_2}(m) \mid m \in \varphi\big\} \\
            &= \big\{\dem{C_1 + C_2}\big(\projsig{m} + m(\nonterm) \cdot \eta(\nonterm)\big) \mid m \in \varphi\big\} 
        \end{align*}
        By \autoref{lemma:bind-effects} (i), this is equal to:
        \begin{align*}
            &= \big\{\dem{C_1 + C_2}\big(\projsig{m}\big) + \dem{C_1 + C_2}\big(m(\nonterm) \cdot \eta(\nonterm)\big) \mid m \in \varphi\big\} \\
            \intertext{Observe that $\supp(\projsig{m}) \subseteq \Sigma$. Applying (ii) and (iv) of \autoref{lemma:bind-effects}:}
            &= \big\{\dem{C_1}\big(\projsig{m}\big) + \dem{C_2}\big(\projsig{m}\big) + m(\nonterm) \cdot \eta(\nonterm) \mid m \in \varphi \big\} \\
            &= \bigcup_{m \in \varphi} \big\{\dem{C_1}\big(m'\big) + \dem{C_2}\big(m'\big) + m(\nonterm) \cdot \eta(\nonterm) \mid m' \in \bone(\projsig{m})\big\} \\
            &= \exists m : \varphi.\: \spost(C_1, \projsig{m}) \oplus \spost(C_2, \projsig{m}) \:\oplus \Uparrow(m(\nonterm))
        \end{align*}
        We break the derivation into parts. First, note that $\bone(\projsig{m}) \vDash \textsf{term}$ for any $m$. Then, we have the subderivation $(*)$:
        \begin{mathpar}
            \small
            \inferrule*[Right=Plus,sep=-7em]
                {\inferrule*
                    {\Omega}
                    {\triple{\bone(\projsig{m})}{C_1}{\spost(C_1, \bone(\projsig{m}))}} \\
                \inferrule*[vdots=3em]
                    {\Omega}
                    {\triple{\bone(\projsig{m})}{C_2}{\spost(C_2, \bone(\projsig{m}))}}
                }
                {\triple{\bone(\projsig{m})}{C_1 + C_2}{\spost(C_1, \bone(\projsig{m})) \oplus \spost(C_2, \bone(\projsig{m}))}}
        \end{mathpar}
        The full derivation is completed as follows:
        \begin{mathpar}
            \small  
            \inferrule*[Right=Exists]
                {\forall m \in \varphi.\:\:
                \inferrule*[Right=Choice, sep=3em]
                    {\inferrule*[Right=Plus]
                        {(*)}
                        {\triple{\bone(\projsig{m})}{C_1 + C_2}{\ldots}} \\
                    \inferrule*[Right=Div]
                        {\:}
                        {\triple{\Uparrow(m(\nonterm))}{C_1 + C_2}{\Uparrow(m(\nonterm))}}
                    }
                    {\triple{\bone(m)}{C_1 + C_2}{ \spost(C_1, \bone(\projsig{m})) \oplus \spost(C_2, \bone(\projsig{m})) \oplus \Uparrow(m(\nonterm))}}
                }
                {\triple{\varphi}{C_1 + C_2}{\spost(C_1 + C_2, \varphi)}}
        \end{mathpar}

        \item $C = \assume e$, where $e$ must either be a test $b \in \Test$ or weight $u \in U$. First, we see that 
        \begin{align*}
            \spost(\assume e, \varphi) 
            &= \big\{\dem{\assume e}(m) \mid m \in \varphi \big\} \\
            &= \big\{\dem{\assume e}\big(\projsig{m} + m(\nonterm) \cdot \eta(\nonterm)\big) \mid m \in \varphi \big\} \\
            &= \big\{\dem{\assume e}\big(\projsig{m}\big) + m(\nonterm) \cdot \eta(\nonterm) \mid m \in \varphi \big\} 
        \end{align*}
        Recall that, by \autoref{def:projection}, 
        \[
            \projsig{m} 
            = \?{b}{(\projsig{m})} + \?{\neg b}{(\projsig{m})}
            = \?{b}{m} + \?{\neg b}{m}
        \]
        So $\bone(\projsig{m}) = \bone(\?{b}{m}) + \bone(\?{\neg b}{m})$. Then, we have that:
        \begin{align*}
            \spost(\assume b, \varphi)
            &= \big\{\dem{\assume b}(\?{b}{m} + \?{\neg b}{m}) + m(\nonterm) \cdot \eta(\nonterm) \mid m \in \varphi \big\} \\
            &= \big\{\?{b}{m} + m(\nonterm) \cdot \eta(\nonterm) \mid m \in \varphi \big\} \\
            &= \exists m : \varphi.\: \bone(\?{b}{m}) + \Uparrow(m(\nonterm))
        \end{align*}
        It should be clear that $\?{b}{m} \vDash b = \1$ and $\?{\neg b}{m} \vDash b = \0$. Then, we have $(*)$:
        \begin{mathpar}
            \small
            \inferrule*[Right=Choice]
                {\inferrule[Assume]
                    {\bone(\?{b}{m}) \vDash b = \1}
                    {\triple{\bone(\?{b}{m})}{\assume b}{\bone(\?{b}{m}) \odot \1}} \\
                \inferrule[Assume]
                    {\bone(\?{\neg b}{m}) \vDash b = \0}
                    {\triple{\bone(\?{\neg b}{m})}{\assume b}{\bone(\?{b}{m}) \odot \0}}
                }
                {\triple{\bone(\?{b}{m}) \oplus \bone(\?{\neg b}{m})}{\assume b}{\bone(\?{\neg b}{m})}} \\
        \end{mathpar}
        The derivation is completed as follows:
        \begin{mathpar}
            \small
            \inferrule*[Right=Exists]
                {\forall m \in \varphi.\:\:
                \inferrule*[Right=Choice,sep=-2em]
                    {\inferrule*[Right=Choice]
                        {(*)}
                        {\triple{\bone(\?{b}{m}) \oplus \bone(\?{\neg b}{m})}{\assume b}{\bone(\?{\neg b}{m})}} \\
                    \inferrule*[Right=Div,vdots=3em]
                        {\:}
                        {\triple{\Uparrow(m(\nonterm))}{\assume b}{\Uparrow(m(\nonterm))}}
                    }
                    {\triple
                        {\bone(\?{b}{m}) \oplus \bone(\?{\neg b}{m}) \oplus \Uparrow(m(\nonterm))}
                        {\assume b}
                        {\bone(\?{b}{m}) \oplus \Uparrow(m(\nonterm))}
                    }
                }
                {\triple{\varphi}{\assume b}{\spost(\assume b, \varphi)}}
        \end{mathpar}
        Suppose instead that $e$ is a weight $u \in U$. Since $\supp(\projsig{m}) \subseteq \Sigma$, $\dem{\assume u}(\projsig{m}) = (\projsig{m}) \cdot u$ by \autoref{lemma:bind-effects}. So,
        \begin{align*}  
            \spost(\assume u, \varphi)
            &= \big\{(\projsig{m}) \cdot u + m(\nonterm) \cdot \eta(\nonterm) \mid m \in \varphi\big\} \\
            &= \exists m : \varphi.\: \bone((\projsig{m}) \odot u) \:\oplus \Uparrow(m(\nonterm)) 
        \end{align*}
        We know $\bone(\projsig{m}) \vDash u = u$, so the full derivation is given by:
        \begin{mathpar}
            \small
            \inferrule*[Right=Exists]
                {\forall m \in \varphi.\:\:
                \inferrule*[Right=Choice, sep=0em]
                    {\inferrule*[Right=Assume]
                        {\bone(\projsig{m}) \vDash u = u}
                        {\triple{\bone(\projsig{m})}{\assume u}{\bone(\projsig{m}) \odot u}} \\
                    \inferrule*[Right=Div, vdots=3em]
                        {\:}
                        {\triple{\Uparrow(m(\nonterm))}{\assume u}{\Uparrow(m(\nonterm))}}
                    }
                    {\triple{\bone(\projsig{m}) \oplus \Uparrow(m(\nonterm))}{\assume u}{\bone(\projsig{m}) \odot u \:\oplus \Uparrow(m(\nonterm))}}
                }
                {\triple{\varphi}{\assume u}{\spost(\assume u, \varphi)}}
        \end{mathpar}
        
        \item $C = \iter{C}{e}{e'}$. Note that $\varphi = \exists m.\: \varphi.\: \bone(m)$. To use the \cruleref{Iter} rule, we define the corresponding families of assertions, parametric on a weighting function $m$:
        \begin{align*}
            \varphi_n(m) 
            &\triangleq \bone\big(\projsig{\dem{(\assume e \seq\: C)^n}(m)}\big) \\
            \psi_n(m) 
            &\triangleq \bone\big(\projsig{\dem{(\assume e \seq\: C)^n \seq\: \assume e'}(m)}\big) \\
            \zeta_n(m) 
            &\triangleq \:\Uparrow\big(\dem{(\assume e \seq C)^n}(m)(\nonterm)\big) \\
            \psi_{\infty}(m) 
            &\triangleq \bone\big(\projsig{\dem{\iter{C}{e}{e'}}(m)}\big) \\
            \zeta_\infty(m) 
            &\triangleq \:\Uparrow\left(\dem{\iter{C}{e}{e'}}(m)(\nonterm)\right)
        \end{align*}
        Observe that 
        \begin{align*}
            \varphi_n(m) \oplus \zeta_n(m) 
            &= \bone\big(\dem{(\assume e \seq C)^n}(m)\big) 
            = \spost((\assume e \seq C)^n, \bone(m)) \\
            \psi_{\infty}(m) \oplus \zeta_\infty(m) 
            &= \bone\left(\dem{\iter{C}{e}{e'}}(m)\right)
            = \spost(\iter{C}{e}{e'}, \bone(m))
        \end{align*}
        Next, we show that the premises of \cruleref{Iter} hold for these definitions: 
        \begin{itemize}
            \item $(\psi_n(m))_{n \in \N} \rightsquigarrow \psi_\infty(m)$. 
            \vspace{0.5em}

            Consider a family $(m_n)_{n \in \N}$ such that $m_n \vDash \psi_n(m)$ for all $n$. By definition of $\psi_n(m)$, $m_n$ can only be one weighting function: $\dem{(\assume e \seq C)^n \seq \assume e'}(m)$. Then, 
            \begin{align*}
                \sum_{n \in \N} m_n 
                &= \sum_{n \in \N} \projsig{\dem{(\assume e \seq C)^n \seq \assume e'}(m)} 
                \intertext{By \autoref{lemma:unrolling}, this is precisely the stateful component of $\dem{\iter{C}{e}{e'}}(m)$:}
                &= \projsig{\dem{\iter{C}{e}{e'}}(m)} 
                \vDash \bone\left(\dem{\iter{C}{e}{e'}}(m)\right) = \psi_\infty(m)
            \end{align*}

            \item $(\varphi_n(m) \oplus \zeta_n(m))_{n \in \N} \Uparrow \zeta_\infty(m)$. 
            \vspace{0.5em}

            Take a family $(m_n)_{n \in \N}$ such that $m_n \vDash \varphi_n(m) \oplus \zeta_n(m)$ for all $n$. Then, $m_n \vDash \bone\big(\dem{(\assume e \seq C)^n}(m)\big)$, so $m_n = \dem{(\assume e \seq C)^n}(m)$. We have that 
            \begin{align*}
                (\inf_{n \in \N} |m_n| \cdot \top) \cdot \eta(\nonterm)
                &= \left(\inf_{n \in \N} \left|\dem{(\assume e \seq C)^n}(m)\right|\cdot \top \right) \cdot \eta(\nonterm) 
            \end{align*}
            By \autoref{lemma:unrolling}, this is equal to $\projdiv{\dem{\iter{C}{e}{e'}}(m)}$, which certainly satisfies $\Uparrow\left(\dem{\iter{C}{e}{e'}}(m)\right) = \zeta_\infty(m)$. 
            
            \item $\Omega \vdash \triple{\varphi_n(m) \oplus \zeta_n(m)}{\assume e \seq C}{\varphi_{n+1}(m) \oplus \zeta_{n+1}(m)}$.
            \begin{align*}
                \varphi_{n+1}(m) \oplus \zeta_{n+1}(m) 
                &= \{\dem{(\assume e \seq C)^{n+1}}(m)\} \\
                &= \{\dem{\assume e \seq C}(\dem{(\assume e \seq C)^n}(m))\} \\
                &= \{\dem{\assume e \seq C}(m') \mid m' \in \varphi_n(m) \oplus \zeta_n(m)\} \\
                &= \spost(\assume e \seq C, \varphi_n(m) \oplus \zeta_n(m))
            \end{align*}
            By the induction hypothesis, this triple is derivable.  

            \item $\Omega \vdash \triple{\varphi_n(m)}{\assume e'}{\psi_n(m)}$.
            \begin{align*}
                \varphi_n(m) 
                &= \{\projsig{\dem{(\assume e \seq C)^n \seq \assume e'}(m)}\} \\
                &= \{\dem{\assume e'}(\projsig{(\assume e \seq C)^n})\} \\
                &= \{\dem{\assume e'}(m') \mid m' \in \varphi_n(m)\} \\
                &= \spost(\assume e', \varphi_n(m))
            \end{align*}
            By the induction hypothesis, this triple is derivable.

            \item For all $n \in \N$, it is clear that $\varphi_n(m) \vDash \textsf{term}$ and $\zeta_n(m) \vDash \div$. 
        \end{itemize}

        By definition, $\varphi_0(m) = \bone\big(\projsig{m}\big)$ and $\zeta_0(m) = \:\Uparrow(m(\nonterm))$, so 
        \[
            \varphi_0(m) \oplus \zeta_0(m) = \bone(\projsig(m) + \projdiv{m}) = \bone(m)
        \]
        Then, 
        \begin{align*}
            & \varphi = \exists m : \varphi.\: \varphi_0(m) \oplus \zeta_0(m) \\
            & \spost(\iter{C}{e}{e'}, \varphi) = \exists m : \varphi.\: \psi_{\infty}(m) \oplus \zeta_\infty(m)
        \end{align*}
        The derivation $\Omega \vdash \triple{\varphi}{\iter{C}{e}{e'}}{\spost(\iter{C}{e}{e'}, \varphi)}$ proceeds as follows:
        \begin{mathpar}
            \small
            \inferrule*[Right=Exists]
                {\inferrule*[Right=Iter,sep=-4em]
                    {\forall n.\:
                    \inferrule*
                        {\Omega}
                        {\triple{\varphi_n(m) \oplus \zeta_n(m)}{\assume e \seq C}{\varphi_{n+1}(m) \oplus \zeta_{n+1}(m)}} \\
                    \inferrule*[vdots=3em]
                        {\Omega}
                        {\triple{\varphi_n(m)}{\assume e'}{\psi_n(m)}}
                    }
                    {\triple{\varphi_0(m) \oplus \zeta_0(m)}{\iter{C}{e}{e'}}{\psi_{\infty}(m) \oplus \zeta_\infty(m)}}
                }
                {\triple{\varphi}{\iter{C}{e}{e'}}{\spost(\iter{C}{e}{e'}, \varphi)}}
        \end{mathpar}
        
        \item $C = a$. We assumed that $\Omega$ contains all valid triples pertaining to atomic actions $a \in \Act$, so $\Omega \vdash \triple{\varphi}{a}{\spost(a, \varphi)}$ since $\vDash \triple{\varphi}{a}{\spost(a, \varphi)}$. 
    \end{itemize}
\end{proof}

\begin{theorem}[Relative Completeness]
\[
    \vDash \triple{\varphi}{C}{\psi} \implies \vdash \triple{\varphi}{C}{\psi}
\]
\end{theorem}
\begin{proof}
    We show that $\spost(C, \varphi) \implies \psi$. Suppose that $m \vDash \spost(C, \varphi)$. Then by definition, there must exist some $m'$ such that $m = \dem{C}(m')$. Since $\vDash \triple{\varphi}{C}{\psi}$, we know that $m \vDash \psi$ as well. Then, we can apply \sruleref{Consequence} to derive the desired triple:
    \begin{mathpar}
        \inferrule*[Right=Consequence]
            {\inferrule*[Right=\autoref{lemma:spost-derivation}]
                {\Omega}
                {\triple{\varphi}{C}{\spost(C, \varphi)}} \\ \qquad\qquad 
            \spost(C, \varphi) \implies \psi
            }
            {\triple{\varphi}{C}{\psi}}
    \end{mathpar}
\end{proof}

\section{Rule Derivations}

\begin{lemma} \label{lemma:hoare-triple-equiv}
    The following rules are derivable 
    \begin{mathpar}
        \inferrule*[Left=\autoref{lemma:hoare-triple-equiv} - Partial]
            {\triple{\ov P}{C}{\always Q}}
            {\triple{\always P}{C}{\always Q}}
        \qquad
        \inferrule*[Right=\autoref{lemma:hoare-triple-equiv} - Total]
            {\triple{\ov P}{C}{\alwaystot Q}}
            {\triple{\alwaystot P}{C}{\alwaystot Q}}
    \end{mathpar}
\end{lemma}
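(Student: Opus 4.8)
The plan is to prove both rules by the same recipe: unfold the strengthened precondition ($\square P$ or $\alwaystot P$) through its defining existential and outcome conjunction, transport the given premise across each syntactic piece with the structural rules, and glue the pieces back together with \sruleref{Consequence}. Since these are the nondeterministic modalities, I work in the Boolean instantiation used in the subsumption section, where $U=\{\0,\1\}$ and each weighting function is a characteristic function; the only algebraic facts I need are that scaling cannot enlarge support and that $P^{(u)} = u\odot P$ (immediate when $u\in\{\0,\1\}$, since $\1\cdot m=m$ and $\0\cdot m=\mathbf 0$).

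I would do the total rule first, as it has no divergence component. Recall $\alwaystot P = \exists u:U.\,P^{(u)}$. Applying \sruleref{Exists}, it suffices to derive $\triple{P^{(u)}}{C}{\alwaystot Q}$ for each $u\in U$, because the postcondition $\alwaystot Q$ is constant in $u$ and $U$ is nonempty, so $\exists u:U.\,\alwaystot Q = \alwaystot Q$. Writing $P^{(u)} = u\odot P$ and feeding the premise $\triple{P}{C}{\alwaystot Q}$ into \sruleref{Scale} gives $\triple{P^{(u)}}{C}{u\odot \alwaystot Q}$. Since $u\odot\alwaystot Q = \{u\cdot m \mid \supp(m)\subseteq Q\}$ and scaling only shrinks support, $u\odot\alwaystot Q \subseteq \alwaystot Q$, so one application of \sruleref{Consequence} yields $\triple{P^{(u)}}{C}{\alwaystot Q}$, and \sruleref{Exists} closes the derivation.

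The partial rule follows the same shape but must additionally carry the divergence outcome. Here $\square P = \exists (u,v):U^2.\,P^{(u)}\oplus\Uparrow^{(v)}$, so after \sruleref{Exists} I reduce to $\triple{P^{(u)}\oplus\Uparrow^{(v)}}{C}{\square Q}$ for each pair. I split the outcome conjunction with \sruleref{Choice}: on the $P^{(u)}$ branch I reuse the scaling argument to obtain $\triple{P^{(u)}}{C}{u\odot\square Q}$, and on the $\Uparrow^{(v)}$ branch \sruleref{Div} gives $\triple{\Uparrow^{(v)}}{C}{\Uparrow^{(v)}}$, since a program cannot disturb weight already sitting on $\nonterm$. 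Recombining through \sruleref{Choice} yields $\triple{P^{(u)}\oplus\Uparrow^{(v)}}{C}{(u\odot\square Q)\oplus\Uparrow^{(v)}}$, and it remains to check $(u\odot\square Q)\oplus\Uparrow^{(v)} \Rightarrow \square Q$ for \sruleref{Consequence}: this holds because $\square Q=\{m\mid\supp(m)\subseteq Q_\nonterm\}$ is closed under $+$, is closed under scaling, and contains every $\Uparrow^{(v)}$ (whose support lies in $\{\nonterm\}\subseteq Q_\nonterm$).

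The one step that needs care is matching $P^{(u)}$ with the scaled assertion $u\odot P$ so that \sruleref{Scale} applies. This is transparent for Boolean weights since $u$ ranges only over $\{\0,\1\}$, but it is not an identity in a general semiring (e.g.\ in $\mathbb{N}^\infty$ a mass-$2$ function spread over two states is neither $\0$- nor $\1$-scaled from a mass-$\1$ function); there one would instead decompose $P^{(u)}$ state-by-state, scaling each singleton $\bone_{\eta(\sigma)}$ (valid because $\eta(\sigma)\in P$ whenever $\sigma\in P$) and reassembling with \sruleref{Choice}. Everything else is routine support bookkeeping. As a sanity check and an alternative route, the conclusion is semantically valid whenever the premise is---each $\sigma\in\supp(m)\cap\Sigma$ lies in $P$, so $\supp(\dem{C}(\eta(\sigma)))\subseteq Q_\nonterm$ and hence $\supp(\dem{C}(m))\subseteq Q_\nonterm$---so relative completeness already guarantees derivability, with the derivation above making it explicit.
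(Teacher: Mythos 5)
Your derivation is correct, and in the Boolean instantiation (which the appendix fixes for the whole subsumption section, so your standing assumption matches the paper's) it goes through as written; but it takes a genuinely different route from the paper's proof. The paper unfolds $\square P$ as an existential over \emph{weighting functions} $m \vDash \square P$, splits each $m$ state-by-state as $\bigoplus_{\sigma \in \supp(m)} m(\sigma) \odot \bone_{\eta(\sigma)}$, applies \sruleref{Scale} to the premise once per state, and reassembles with \sruleref{Choice}, \sruleref{Exists}, and \sruleref{Consequence}. You instead unfold the modality through its defining existential over \emph{weights} $(u,v)$, collapse each component in a single \sruleref{Scale} step via the Boolean identity $P^{(u)} = u \odot P$, and dispatch the divergence component with \sruleref{Div}. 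Your route buys a shorter derivation, but it is tied to the Boolean semiring exactly where you flag it: $P^{(u)} = u \odot P$ fails in general, and your proposed fallback---decompose state-by-state, scale the singletons $\bone_{\eta(\sigma)}$, recombine with \sruleref{Choice}---is precisely the paper's argument, so you have in effect identified both proofs. Conversely, your handling of divergence is more careful than the paper's: the paper asserts $\eta(\sigma) \vDash P$ for every $\sigma \in \supp(m)$, which is false when $\nonterm \in \supp(m)$ (there $m \vDash \square P$ only gives $\supp(m) \subseteq P_{\nonterm}$), so a fully rigorous version of the paper's derivation needs a separate \sruleref{Div}-style branch for the $m(\nonterm) \odot \bone_{\eta(\nonterm)}$ component---which is exactly what your explicit $\Uparrow^{(v)}$ branch supplies. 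Your closing appeal to relative completeness is a valid sanity check but is not needed, since the explicit derivation is complete.
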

\begin{proof}
    Observe that $\always P$ is equivalent to:
    \begin{align*}
        \always P 
        &= \exists m : \always P.\: \bone(m) 
        = \exists m : \always P.\: \oplus_{\sigma \in \supp(m)} m(\sigma) \odot \bone(\eta(\sigma))
    \end{align*}
    For $\sigma \in \supp(m)$, $\eta(\sigma) \vDash \ov P$, so $\always P$ implies $\exists m : \always P.\: \bigoplus_{\sigma \in \supp(m)} m(\sigma) \odot \ov P$. Moreover, 
    \begin{align*}
        \always Q 
        &= \exists m : \always P.\: \always Q \\
        &= \exists m: \always P.\: \oplus_{\sigma \in \supp(m)} \always Q 
        \:\:=\:\: \exists m: \always P.\: \oplus_{\sigma \in \supp(m)} m(\sigma) \odot \always Q
    \end{align*}
    The derivation is as follows:
    \begin{mathpar}
        \inferrule*[Right=Consequence]
            {\inferrule*[Right=Exists]
                {\forall m \in \always P. \\
                    \inferrule*[Right=Choice]
                        {\forall \sigma \in \supp(m). \\
                            \inferrule*[Right=Scale]
                                {\triple{\ov P}{C}{\always Q}}
                                {\triple{m(\sigma) \odot \ov P}{C}{m(\sigma) \odot \always Q}}
                        }
                        {\triple{\oplus_{\sigma \in \supp(m)} m(\sigma) \odot \ov P}{C}{\oplus_{\sigma \in \supp(m)} m(\sigma) \odot \always Q}}
                }
                {\triple{\exists m : \always P.\: \oplus_{\sigma \in \supp(m)} m(\sigma) \odot \always P}{C}{\exists m : \ov P.\: \oplus_{\sigma \in \supp(m)} m(\sigma) \odot \always Q}}
            }
            {\triple{\always P}{C}{\always Q}}
    \end{mathpar}
\end{proof}
Note that $\ov P \implies \always P$, so by a simple application of \sruleref{Consequence}, we can derive the converse: 
\begin{mathpar}
    \inferrule*[Right=Consequence]
        {\triple{\always P}{C}{\always Q}}
        {\triple{\ov P}{C}{\always Q}}
\end{mathpar}
So we see that the two triples $\triple{\ov P}{C}{\always Q}$ and $\triple{\always P}{C}{\always Q}$ are really interchangeable in any derivation. Very similar derivations can be obtained to show that $\triple{\ov P}{C}{\alwaystot Q}$ and $\triple{\alwaystot P}{C}{\alwaystot Q}$ equivalent in the logic as well. 

\begin{lemma} \label{lemma:lisbon-triple-equiv}
    The following rule is derivable:
    \begin{mathpar}
        \inferrule*[Right=\autoref{lemma:lisbon-triple-equiv}]
            {\triple{\ov P}{C}{\sometimes Q}}
            {\triple{\sometimes P}{C}{\sometimes Q}}
    \end{mathpar}
\end{lemma}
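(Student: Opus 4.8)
The plan is to mirror the derivation of \autoref{lemma:hoare-triple-equiv}, but to exploit the \emph{existential} (angelic) shape of the $\sometimes$ modality rather than the universal shape of $\square$. Recall that $\sometimes P = \exists u : U/\{\0\}.\: P^{(u)} \oplus \top$, and that in the nondeterministic (Boolean) setting to which these subsumption results are restricted we have $P^{(u)} = u \odot P$. The key structural observation is that any witness for $\sometimes P$ can be split into the single terminating outcome in $P$ that witnesses the modality, together with an unconstrained remainder that is absorbed by $\top$. This is what lets the proof peel off one scaled point mass and discard everything else.

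Concretely, I would first establish the two semantic implications that bracket the derivation. For the precondition I would show $\sometimes P \implies \exists u : U/\{\0\}.\: (u \odot P) \oplus \top$: given $m \vDash \sometimes P$, pick a witness $\sigma_0 \in \supp(m) \cap P$, set $u = m(\sigma_0) \neq \0$, and write $m = u \cdot \eta(\sigma_0) + \sum_{\sigma \neq \sigma_0} m(\sigma) \cdot \eta(\sigma)$, so that the first summand satisfies $u \odot P$ (since $\eta(\sigma_0) \vDash P$) and the remainder satisfies $\top$. For the postcondition I would show the reverse direction $\exists u : U/\{\0\}.\: (u \odot \sometimes Q) \oplus \top \implies \sometimes Q$: adding the $\top$ component only enlarges the support, and scaling a $Q$-witness by $u \neq \0$ preserves it, so the combined weighting function still meets $Q$ on its support.

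With these two implications in hand the derivation is a short instance of the structural rules applied pointwise over $u \in U/\{\0\}$. From the hypothesis $\triple{P}{C}{\sometimes Q}$ I would apply \textsc{Scale} to obtain $\triple{u \odot P}{C}{u \odot \sometimes Q}$, combine it with the axiom $\triple{\top}{C}{\top}$ (rule \textsc{True}) via \textsc{Choice} to get $\triple{(u \odot P) \oplus \top}{C}{(u \odot \sometimes Q) \oplus \top}$, generalize over $u$ with \textsc{Exists}, and finally close with \textsc{Consequence}, using the two implications above to rewrite the pre- and postconditions as $\sometimes P$ and $\sometimes Q$ respectively.

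The hard part will be the postcondition implication $(u \odot \sometimes Q) \oplus \top \implies \sometimes Q$, which is the only step that genuinely uses a property of the weight semiring: scaling a $Q$-witness by $u \neq \0$ must not annihilate it, i.e.\ the relevant zero divisors must be absent. This holds automatically in the Boolean semiring underlying the nondeterministic reading of these lemmas, where $U/\{\0\} = \{\1\}$ and the scaling is the identity; there $\sometimes P$ is literally $P \oplus \top$, the \textsc{Scale}/\textsc{Exists} layer collapses entirely, and the argument reduces to a single \textsc{Choice} of the hypothesis against \textsc{True} followed by \textsc{Consequence}.
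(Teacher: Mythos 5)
Your proposal is correct, and it is worth noting that the paper itself leaves the proof body of this lemma empty, so there is no written proof to compare against; the closest reference point is the paper's proof of Lemma~\ref{lemma:hoare-triple-equiv}, which you explicitly set out to mirror. Your adaptation is the right one: where the demonic proof decomposes the precondition over the \emph{entire} support, via $\bone_m = \bigoplus_{\sigma \in \supp(m)} m(\sigma) \odot \bone_{\eta(\sigma)}$, and applies \textsc{Scale} under \textsc{Choice} at every state, the angelic modality only constrains a single witness, so your split of $m \vDash \sometimes P$ into one scaled point mass $u \cdot \eta(\sigma_0)$ satisfying $u \odot P$ plus a remainder absorbed by $\top$ (discharged via \textsc{True} under a binary \textsc{Choice}) is exactly the analogous derivation, closed by \textsc{Exists} and \textsc{Consequence} just as in the paper. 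Your two bracketing implications are sound where you need them: $\sometimes P \implies \exists u : U\setminus\{\0\}.\:(u \odot P) \oplus \top$ holds in any of the paper's semirings, and you correctly isolate the one genuinely semiring-dependent step, $(u \odot \sometimes Q) \oplus \top \implies \sometimes Q$, which needs both the absence of zero divisors (so that $u \cdot m_1(\sigma) \neq \0$ at the witness $\sigma$) and zero-sum-freeness (so that adding the $\top$-component cannot cancel that weight); the latter is automatic in any naturally ordered semiring, and both hold in the Boolean instantiation to which the appendix restricts the subsumption results, matching how the paper invokes this lemma (e.g., in the QInv-Angel derivation). One small inaccuracy that does not affect the argument: in general one only has $u \odot P \subseteq P^{(u)}$ rather than equality outside the Boolean case, but since your derivation only ever manipulates $u \odot P$, nothing breaks.
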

\begin{proof}
    For each $m \in \sometimes P$, we can fix a state $\sigma_m \in \supp(m)$ such that $\sigma_m \in \ov P$ as well. This must exist by definition of $\sometimes P$. Let $u_m = m(\sigma)$ be its assigned weight. Note that $u_m \neq \0$. Then, we have as consequences:
    \begin{align*}
        \sometimes P
        \iff& \exists m : \sometimes P.\: \bone(m) \\
        \iff& \exists m : \sometimes P.\: (u_m \odot \bone(\eta(\sigma_m))) 
        \oplus \top \\
        \implies& \exists m : \sometimes P.\: (u_m \odot \ov P) \oplus \top \\
        \: \\
        \exists m : \sometimes P.\: (u_m \odot \sometimes Q) \oplus \top 
        \iff& \exists m : \sometimes P.\: \sometimes Q \oplus \top \\
        \iff& \exists m : \sometimes P.\: \sometimes Q \\
        \iff& \sometimes Q 
    \end{align*}
    The derivation is as follows:
    \begin{mathpar}
        \inferrule*[Right=Consequence]
            {\inferrule*[Right=Exists]
                {\forall m \in \ov P. \\
                \inferrule*[Right=Choice, sep=4em]
                    {\inferrule*[Right=Scale]
                        {\triple{\ov P}{C}{\sometimes Q}} 
                        {\triple{u_m \odot \ov P}{C}{u_m \odot \sometimes Q}} \\
                    \inferrule*[Right=True]
                        {\:}
                        {\triple{\top}{C}{\top}}
                    }
                    {\triple{(u_m \odot \ov P) \oplus \top}{C}{(u_m \odot \sometimes Q) \oplus \top}}                
                }
                {\triple{\exists m : \sometimes P.\: (u_m \odot \ov P) \oplus \top}{C}{\exists m : \sometimes P.\: (u_m \odot \ov P) \oplus \top}}
            }
            {\triple{\sometimes P}{C}{\sometimes Q}}
    \end{mathpar}
\end{proof}

\begin{lemma}
    The following rules are derivable:
    \begin{mathpar}
        \inferrule[Seq-Lisbon]
        {\triple{\ov P}{C_1}{\sometimes Q} \\ 
        \triple{\ov Q}{C_2}{\sometimes R}
        }
        {\triple{\ov P}{C_1 \seq C_2}{\sometimes R}}
        \qquad
        \inferrule[Seq-Total-Hoare]
            {\triple{\ov P}{C_1}{\alwaystot Q} \\ 
            \triple{\ov Q}{C_2}{\alwaystot R}
            }
            {\triple{\ov P}{C_1 \seq C_2}{\alwaystot R}} 
    \end{mathpar}
\end{lemma}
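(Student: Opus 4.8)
The plan is to reduce both rules to the ordinary \textsc{Seq} rule of \autoref{fig:inference-command} by first massaging the premises so that the postcondition of $C_1$ and the precondition of $C_2$ are literally the same assertion. The obstacle here is purely syntactic, and it is exactly why these rules are not instances of \textsc{Seq} already: the standard rule requires a shared intermediate assertion $\zeta$, but in the premises we are handed, $C_1$ terminates in the modal assertion $\alwaystot Q$ (resp.\ $\sometimes Q$), whereas $C_2$ is started from the bare state predicate $Q$. Since $\alwaystot Q \neq Q$ as outcome assertions, \textsc{Seq} cannot be applied as is.

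The fix is provided by the triple-equivalence lemmas proved just above. For the total-correctness case, \autoref{lemma:hoare-triple-equiv} (its \textsc{Total} variant) shows that from $\triple{Q}{C_2}{\alwaystot R}$ we may derive $\triple{\alwaystot Q}{C_2}{\alwaystot R}$ — it wraps the bare precondition in the $\alwaystot$ modality. Applying it to the second premise yields a triple whose precondition $\alwaystot Q$ now coincides exactly with the postcondition of the first premise. A single application of \textsc{Seq} with intermediate assertion $\zeta \triangleq \alwaystot Q$ then closes the derivation:
\begin{mathpar}
    \inferrule*[Right=Seq]
        {\triple{P}{C_1}{\alwaystot Q} \\
         \triple{\alwaystot Q}{C_2}{\alwaystot R}}
        {\triple{P}{C_1 \seq C_2}{\alwaystot R}}
\end{mathpar}

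The Lisbon case proceeds identically, substituting \autoref{lemma:lisbon-triple-equiv} for \autoref{lemma:hoare-triple-equiv}: that lemma converts $\triple{Q}{C_2}{\sometimes R}$ into $\triple{\sometimes Q}{C_2}{\sometimes R}$, and then \textsc{Seq} with intermediate assertion $\sometimes Q$ combines it with $\triple{P}{C_1}{\sometimes Q}$ to give $\triple{P}{C_1 \seq C_2}{\sometimes R}$. I expect no genuine difficulty beyond correctly invoking these equivalence lemmas; the role they play — lifting a postcondition modality onto the precondition so the two triples compose — is the only nontrivial ingredient, and everything else is a direct appeal to \textsc{Seq}. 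The one point worth stating explicitly in the writeup is that we use only the forward (wrapping) direction of the equivalence lemmas, so even though the full interderivability is available, it is not needed here.
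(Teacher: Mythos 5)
Your proposal is correct and matches the paper's own derivation exactly: the paper likewise lifts the second premise via \autoref{lemma:hoare-triple-equiv} (the \textsc{Total} variant) to obtain $\triple{\alwaystot Q}{C_2}{\alwaystot R}$ and closes with a single application of \textsc{Seq} at intermediate assertion $\alwaystot Q$, handling \textsc{Seq-Lisbon} ``nearly identically'' via \autoref{lemma:lisbon-triple-equiv}. Nothing is missing, and your observation that only the modality-wrapping direction of the equivalence lemmas is needed is accurate.
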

\begin{proof}
    We show only the derivation for \ruleref{rule:seq}{Seq-Total-Hoare} -- we derive \ruleref{rule:seq}{Seq-Lisbon} nearly identically using an application of \autoref{lemma:lisbon-triple-equiv}.
    \begin{mathpar}
        \inferrule*[Right=Seq]
            {\triple{\ov P}{C_1}{\alwaystot Q} \\ 
            \inferrule*[Right=\autoref{lemma:hoare-triple-equiv} - Total]
                {\triple{\ov Q}{C_2}{\alwaystot R}}
                {\triple{\alwaystot Q}{C_2}{\alwaystot R}}
            }
            {\triple{\ov P}{C_1 \seq C_2}{\alwaystot R}}
    \end{mathpar}
\end{proof}

\begin{lemma}
    The following rule is derivable:
    \begin{mathpar}
        \inferrule*[Right=If]
            {\varphi_1 \vDash b \\ \triple{\varphi_1}{C_1}{\psi_1} \\
            \varphi_2 \vDash \neg b \\ \triple{\varphi_2}{C_2}{\psi_2}
            }
            {\triple{\varphi_1 \oplus \varphi_2}{\iftf{b}{C_1}{C_2}}{\psi_1 \oplus \psi_2}}
    \end{mathpar}
\end{lemma}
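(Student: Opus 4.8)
The plan is to unfold the syntactic sugar $\iftf{b}{C_1}{C_2} \triangleq (\assume b \fatsemi C_1) + (\assume{\neg b}\fatsemi C_2)$ and organize the derivation around the \cruleref{Plus} rule applied with the \emph{common} precondition $\varphi_1 \oplus \varphi_2$. First I would unpack the guard hypotheses: from $\varphi_1 \vDash b$ (i.e.\ $\varphi_1 \vDash b = \1$) I obtain both $\varphi_1 \vDash b = \1$ and $\varphi_1 \vDash \neg b = \0$, using that $\de{\neg b}(\sigma) = \lnot\detest{b}(\sigma)$; symmetrically $\varphi_2 \vDash \neg b$ yields $\varphi_2 \vDash \neg b = \1$ and $\varphi_2 \vDash b = \0$. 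I would also record that the entailment relation forces sure termination, so each $m_1 \in \varphi_1$ and $m_2 \in \varphi_2$ has $m_1(\nonterm)=m_2(\nonterm)=\0$; hence every $m_1+m_2 \in \varphi_1\oplus\varphi_2$ satisfies $\nonterm\notin\supp(m_1+m_2)$, giving the side condition $\varphi_1\oplus\varphi_2 \vDash \tru$ required by \cruleref{Plus}.

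The core of the argument is to derive the two branch triples with this shared precondition. For the first branch I would split the precondition with \sruleref{Choice} over a two-element index set to analyze $\assume b$ alone: on the $\varphi_1$ component \cruleref{Assume} gives postcondition $\varphi_1 \odot \1 = \varphi_1$, while on the $\varphi_2$ component \cruleref{Assume} (using $\varphi_2 \vDash b = \0$) gives $\varphi_2 \odot \0$. Combining yields $\triple{\varphi_1\oplus\varphi_2}{\assume b}{\varphi_1 \oplus (\varphi_2\odot\0)}$. Since $\varphi_2\odot\0$ contains only the zero weighting function (the additive unit, equal to $\tru^{(\0)}$ in a naturally ordered semiring), we have $\varphi_1 \oplus (\varphi_2\odot\0) = \varphi_1$, so \sruleref{Consequence} collapses the postcondition to $\varphi_1$; chaining with the hypothesis $\triple{\varphi_1}{C_1}{\psi_1}$ via \cruleref{Seq} produces $\triple{\varphi_1\oplus\varphi_2}{\assume b \fatsemi C_1}{\psi_1}$. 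The mirror-image derivation (with $\neg b$, the zero now landing on the $\varphi_1$ component, and $\triple{\varphi_2}{C_2}{\psi_2}$) gives $\triple{\varphi_1\oplus\varphi_2}{\assume{\neg b}\fatsemi C_2}{\psi_2}$.

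Finally I would assemble these two triples with \cruleref{Plus}, whose side condition $\varphi_1\oplus\varphi_2 \vDash \tru$ was checked above, concluding $\triple{\varphi_1\oplus\varphi_2}{(\assume b\fatsemi C_1)+(\assume{\neg b}\fatsemi C_2)}{\psi_1\oplus\psi_2}$, which is exactly the desired \textsc{If} triple after refolding the sugar.

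The step I expect to be the main obstacle is handling the "dead" branch cleanly: showing that filtering the wrong component through an $\assume$ scales it by $\0$ down to the zero weighting function, and that this zero is absorbed by $\oplus$ so that it disappears from the postcondition. The difficulty is bookkeeping with \sruleref{Choice}, \cruleref{Assume}, and \sruleref{Consequence} rather than any deep semantic content; crucially, performing the filtering at the level of $\assume b$ \emph{before} sequencing $C_1$ is what avoids ever having to push the zero weighting function through $C_1$ itself.
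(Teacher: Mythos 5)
Your proposal is correct and follows essentially the same route as the paper's proof: split $\varphi_1 \oplus \varphi_2$ with \textsc{Choice} across the two \textsc{Assume} instances (the live component passing through as $\varphi_i \odot \1$, the dead one collapsing to the zero weighting function, which $\oplus$ absorbs), sequence each filtered precondition with the corresponding branch triple via \textsc{Seq}, and combine the two results with \textsc{Plus} after unfolding the sugar. Your explicit check of the \textsc{Plus} side condition $\varphi_1 \oplus \varphi_2 \vDash \tru$ (via the sure-termination requirement built into assertion entailment) is a detail the paper leaves implicit, but it is the same argument.
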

\begin{proof}
    First, we give a subderivation (1):
    \begin{mathpar}
        \inferrule*[Right=Seq]
            {\inferrule*[Right=Choice]
                {\inferrule*[Right=Assume]
                    {\varphi_1 \vDash b}
                    {\triple{\varphi_1}{\assume b}{\varphi_1}} \\
                \qquad
                \inferrule*[Right=Assume]
                    {\varphi_2 \vDash \neg b}
                    {\triple{\varphi_2}{\assume b}{\0 \odot \varphi_2}}
                }
                {\triple{\varphi_1 \oplus \varphi_2}{\assume b}{\varphi_1}} \\
            \qquad
            \triple{\varphi_1}{C}{\psi_1}
            }
            {\triple{\varphi_1 \oplus \varphi_2}{\assume b \seq C_1}{\psi_1}}
    \end{mathpar}
    The proof of (2) is nearly identical. We complete the derivation with:
    \begin{mathpar}
        \inferrule*[Right=Plus]
            {\inferrule*[Right=Seq]
                {(1)}
                {\triple{\varphi_1 \oplus \varphi_2}{\assume b \seq C_1}{\psi_1}} \\
            \inferrule*[Right=Seq]
                {(2)}
                {\triple{\varphi_1 \oplus \varphi_2}{\assume \neg b \seq C_2}{\psi_2}}
            }
            {\triple{\varphi_1 \oplus \varphi_2}{\iftf{b}{C_1}{C_2}}{\psi_1 \oplus \psi_2}}
    \end{mathpar}
\end{proof}

\begin{lemma}
    The following rule is derivable:
    \begin{mathpar}
        \inferrule*[Right=If-Hoare]
            {\triple{\ov{P \land b}}{C_1}{\alwaystot Q} \\ \triple{\ov{P \land \neg b}}{C_2}{\alwaystot Q}}
            {\triple{\ov P}{\iftf{b}{C_1}{C_2}}{\alwaystot Q}} \\
    \end{mathpar}
\end{lemma}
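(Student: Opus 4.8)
The plan is to derive \ruleref{rule:if-hoare}{If-Hoare} from the already-established \ruleref{rule:if}{If} rule, together with the \textbf{Total} direction of \autoref{lemma:hoare-triple-equiv} and a single application of \sruleref{Consequence}. The natural first instinct is to apply \ruleref{rule:if}{If} directly with $\varphi_1 = P \land b$ and $\varphi_2 = P \land \neg b$, but this fails: splitting a generic $m \vDash P$ along the guard via the projection of \autoref{def:projection} produces components $\?{b}{m}$ and $\?{\neg b}{m}$ whose individual masses need not be $1$, so they do not satisfy the unit-mass assertions $P \land b = (P\cap b)^{(1)}$ and $P \land \neg b = (P \cap \neg b)^{(1)}$. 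The resolution is to weaken the premise preconditions to the mass-agnostic assertions $\alwaystot(P \land b)$ and $\alwaystot(P \land \neg b)$, which is exactly what \autoref{lemma:hoare-triple-equiv} provides.

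Concretely, I would first apply the \textbf{Total} rule of \autoref{lemma:hoare-triple-equiv} to each hypothesis, turning $\triple{P \land b}{C_1}{\alwaystot Q}$ into $\triple{\alwaystot(P \land b)}{C_1}{\alwaystot Q}$ and $\triple{P \land \neg b}{C_2}{\alwaystot Q}$ into $\triple{\alwaystot(P \land \neg b)}{C_2}{\alwaystot Q}$. I would then discharge the two entailment side conditions of \ruleref{rule:if}{If}: since every $m \vDash \alwaystot(P \land b)$ has $\supp(m) \subseteq P \cap b \subseteq \Sigma$, such $m$ terminates surely and is supported only on $b$-states, so $\alwaystot(P \land b) \vDash b$; symmetrically $\alwaystot(P \land \neg b) \vDash \neg b$. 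Feeding these into \ruleref{rule:if}{If} with $\psi_1 = \psi_2 = \alwaystot Q$ yields
\[
    \triple{\alwaystot(P \land b) \oplus \alwaystot(P \land \neg b)}{\iftf{b}{C_1}{C_2}}{\alwaystot Q \oplus \alwaystot Q}.
\]

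It then remains to reconcile this with the target triple through \sruleref{Consequence}, which needs two implications. For the precondition, I would use the projection decomposition $m = \?{b}{m} + \?{\neg b}{m}$: for $m \vDash P$ we have $\supp(\?{b}{m}) \subseteq P \cap b$ and $\supp(\?{\neg b}{m}) \subseteq P \cap \neg b$, so $m \in \alwaystot(P \land b) \oplus \alwaystot(P \land \neg b)$, giving $P \implies \alwaystot(P \land b) \oplus \alwaystot(P \land \neg b)$. For the postcondition, $\alwaystot Q \oplus \alwaystot Q \implies \alwaystot Q$ holds because $\supp(m_1 + m_2) \subseteq \supp(m_1) \cup \supp(m_2) \subseteq Q$ whenever both summands are supported in $Q$. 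A single \sruleref{Consequence} step then closes the derivation. I anticipate no serious difficulty: the only conceptual point is recognizing that the unit-mass mismatch rules out a direct appeal to \ruleref{rule:if}{If} and forces the detour through the $\alwaystot$-normalized preconditions supplied by \autoref{lemma:hoare-triple-equiv}.
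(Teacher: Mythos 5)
Your proposal is correct and takes essentially the same route as the paper's own derivation: weaken each premise to $\triple{\alwaystot(P \land b)}{C_1}{\alwaystot Q}$ and $\triple{\alwaystot(P \land \neg b)}{C_2}{\alwaystot Q}$ via the Total case of \autoref{lemma:hoare-triple-equiv}, apply \ruleref{rule:if}{If} using the entailments $\alwaystot(P \land b) \vDash b$ and $\alwaystot(P \land \neg b) \vDash \neg b$, and close with \sruleref{Consequence}. Your explicit verification of the two semantic implications in the final step, namely $P \implies \alwaystot(P \land b) \oplus \alwaystot(P \land \neg b)$ via the guard projection and $\alwaystot Q \oplus \alwaystot Q \implies \alwaystot Q$ via supports, merely spells out details the paper leaves implicit.
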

\begin{proof}
    Note that $\alwaystot(P \land b) \vDash b$ and $\alwaystot(P \land \neg b) \vDash \neg b$. The derivation proceeds as follows:
    \begin{mathpar}
        \inferrule*[Right=Consequence]
            {\inferrule*[Right=\ruleref{rule:if}{If}]
                {\inferrule*[right={\autoref{lemma:hoare-triple-equiv}}]
                    {\triple{\ov{P \land b}}{C_1}{\alwaystot Q}}
                    {\triple{\alwaystot(P \land b)}{C_1}{\alwaystot Q}} \\
                \inferrule*[Right={\autoref{lemma:hoare-triple-equiv}}]
                    {\triple{\ov{P \land \neg b}}{C_2}{\alwaystot Q}}
                    {\triple{\alwaystot(P \land \neg b)}{C_2}{\alwaystot Q}}
                }
                {\triple{\alwaystot(P \land b) \oplus \alwaystot(P \land \neg b)}{\iftf{b}{C_1}{C_2}}{\alwaystot Q \oplus \alwaystot Q}}
            }
            {\triple{\ov P}{\iftf{b}{C_1}{C_2}}{\alwaystot Q}}
    \end{mathpar}
\end{proof}

\begin{lemma}
    The following rule is derivable:
    \begin{mathpar}
        \inferrule*[Right=If-Lisbon]
            {\triple{\ov{P \land b}}{C_1}{\sometimes Q} \\ \triple{\ov{P \land \neg b}}{C_2}{\sometimes Q}}
            {\triple{\ov P}{\iftf{b}{C_1}{C_2}}{\sometimes Q}}
    \end{mathpar}
\end{lemma}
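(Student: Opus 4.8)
The plan is to follow the shape of the \ruleref{rule:if-hoare}{If-Hoare} derivation, but to account for an essential asymmetry that appears in the Lisbon (nondeterministic, Boolean-semiring) setting in which these rules live: the empty weighting function satisfies $\alwaystot Q$ but \emph{not} $\sometimes Q$, so the branch that is not taken cannot be assigned the postcondition $\sometimes Q$. Consequently the naive mirror of \ruleref{rule:if-hoare}{If-Hoare} — lifting each premise with \autoref{lemma:lisbon-triple-equiv} and feeding $\sometimes(P\land b)$ and $\sometimes(P\land\neg b)$ into the \ruleref{rule:if}{If} rule — does not go through, both because $\sometimes(P\land b)$ fails to entail the guard $b$ that \ruleref{rule:if}{If} demands, and because one branch may be inactive. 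I would therefore avoid \autoref{lemma:lisbon-triple-equiv} entirely and instead exploit the absorption facts $\sometimes Q \oplus \top = \sometimes Q$ and $\sometimes Q \lor \sometimes Q = \sometimes Q$ together with an explicit case split on which branch is active.

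First I would record the state-level implication $P \Rightarrow (P \land \sometimes b) \lor (P \land \sometimes\neg b)$: any $m \vDash P$ has nonempty support contained in $P$, and since $b,\neg b$ partition $\Sigma$ the support must meet at least one of them. For the first disjunct I would observe that every $m \vDash P \land \sometimes b$ splits as $m = m_1 + m_2$, where $m_1 \vDash (P\land b)^{(\one)}$ is the active $b$-part (necessarily of mass $\one$ in the Boolean semiring, where nonempty support and mass $\one$ coincide) and $m_2 \vDash \alwaystot(P\land\neg b)$ is the possibly-empty $\neg b$-part; hence $P \land \sometimes b \Rightarrow (P\land b)^{(\one)} \oplus \alwaystot(P\land\neg b)$.

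With this decomposition I would apply the \ruleref{rule:if}{If} rule to $(P\land b)^{(\one)} \oplus \alwaystot(P\land\neg b)$, using the entailments $(P\land b)^{(\one)} \vDash b$ and $\alwaystot(P\land\neg b)\vDash\neg b$, the given premise $\triple{P\land b}{C_1}{\sometimes Q}$ on the active branch (note $P\land b$ is exactly the shorthand for $(P\land b)^{(\one)}$, so no lifting is needed), and the \sruleref{True} axiom $\triple{\alwaystot(P\land\neg b)}{C_2}{\top}$ on the inactive branch. This yields the postcondition $\sometimes Q \oplus \top = \sometimes Q$, and one application of \sruleref{Consequence} gives $\triple{P\land\sometimes b}{\iftf{b}{C_1}{C_2}}{\sometimes Q}$. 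The symmetric derivation handles $P\land\sometimes\neg b$, placing the $C_2$ premise on the active $\neg b$-branch and \sruleref{True} on the $b$-branch. Finally I would combine the two disjuncts with \sruleref{Disj}, collapse $\sometimes Q \lor \sometimes Q$ to $\sometimes Q$, and close with \sruleref{Consequence} via $P \Rightarrow (P\land\sometimes b)\lor(P\land\sometimes\neg b)$.

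The main obstacle is exactly the step where the Lisbon case parts ways with the Hoare case: because an empty branch does not witness $\sometimes Q$, the two branches cannot be given the same postcondition at once, which forces the proof through the "which branch is active" case analysis (the \sruleref{Disj} split) rather than a single application of \ruleref{rule:if}{If}, as well as the $\top$-absorption trick $\sometimes Q \oplus \top = \sometimes Q$ to neutralize the inactive branch. Discharging the two set-level implications — the $b/\neg b$ partition fact and the $\oplus$-decomposition of $P\land\sometimes b$ — is routine, but one must be careful that "mass $\one$" genuinely coincides with "nonempty support", which is precisely what ties the argument to the nondeterministic instantiation.
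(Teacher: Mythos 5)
Your proof is correct, and its outer skeleton coincides with the paper's: both proofs case-split on which branch of the conditional is active and finish with \textsc{Disj} and \textsc{Consequence}. The inner subderivations are genuinely different, though. The paper handles each disjunct by applying the \textsc{If} rule with the inactive branch given the exact-zero pre- and postcondition $\bone_0$ (so the conclusion collapses, $\bone_0$ being the $\oplus$-unit, to $\triple{P \land b}{\iftf{b}{C_1}{C_2}}{\sometimes Q}$), and then weakens the precondition to $\sometimes(P \land b)$ via the lifting lemma (\autoref{lemma:lisbon-triple-equiv}); it is that lemma which silently absorbs any residual support on the $\neg b$ side, and the final disjuncts are $\sometimes(P\land b)$ and $\sometimes(P\land\neg b)$. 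You avoid the lifting lemma altogether: your disjuncts are the stronger assertions $P \land \sometimes b$ and $P \land \sometimes\neg b$, you expose the residual mass explicitly through the entailment $P \land \sometimes b \Rightarrow (P\land b)^{(\one)} \oplus \alwaystot(P\land\neg b)$ (in effect the projection identity of \autoref{def:projection}), discharge the inactive branch with \sruleref{True}, and neutralize it by the absorption $\sometimes Q \oplus \top = \sometimes Q$. Both routes are sound in the Boolean instantiation to which this section is restricted. The paper's route buys reuse---\autoref{lemma:lisbon-triple-equiv} also powers \textsc{Seq-Lisbon} and \textsc{QInv-Angel}---and concentrates the Boolean-specific reasoning inside that one lemma (whose proof, incidentally, the paper leaves empty); your route is more self-contained, needing only \textsc{If}, \textsc{True}, \textsc{Disj}, and \textsc{Consequence} plus two routine semantic entailments, and it makes explicit exactly where nondeterminism enters (mass $\one$ coincides with nonempty support, and supports of sums are unions, which is what validates the $\top$-absorption). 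Your opening diagnosis is also accurate: the naive mirror of \textsc{If-Hoare} fails precisely because $\sometimes(P\land b)$ does not entail the guard $b$, which is why the paper, like you, resorts to a disjunctive case split rather than a single application of \textsc{If}.
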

\begin{proof}
    Subderivation (1):
    \begin{mathpar}
        \inferrule*[right=\autoref{lemma:lisbon-triple-equiv}]
            {\inferrule*[Right=If]
                {\ov{P \land b} \vDash b \\
                \triple{\ov{P \land b}}{C_1}{\sometimes Q} \\
                \0 \odot \top \vDash \neg b \\
                \inferrule*[Right=Scale]
                    {\inferrule*[Right=True]{\:}{\triple{\top}{C_2}{\top}}}
                    {\triple{\0 \odot \top}{C_2}{\0 \odot \top}}
                }
                {\triple{\ov{P \land b}}{\iftf{b}{C_1}{C_2}}{\sometimes Q}}
            }
            {\triple{\sometimes(P \land b)}{\iftf{b}{C_1}{C_2}}{\sometimes Q}}
    \end{mathpar}
    Part (2) is symmetrical. 
    \begin{mathpar}
        \inferrule*[right=Consequence]
            {\inferrule*[Right=Disj, sep=-5em]
                {\inferrule*[right=\autoref{lemma:lisbon-triple-equiv}]
                    {(1)}
                    {\triple{\sometimes(P \land b)}{\iftf{b}{C_1}{C_2}}{\sometimes Q}} \\
                \inferrule*[Right=\autoref{lemma:lisbon-triple-equiv}, vdots=2em]
                    {(2)}
                    {\triple{\sometimes(P \land \neg b)}{\iftf{b}{C_1}{C_2}}{\sometimes Q}}
                }
                {\triple{\sometimes(P \land b) \lor \sometimes(P \land \neg b)}{\iftf{b}{C_1}{C_2}}{\sometimes Q \lor \sometimes Q}}
            }
            {\triple{\ov P}{\iftf{b}{C_1}{C_2}}{\sometimes Q}}
    \end{mathpar}
\end{proof}

\begin{lemma}
    The following rule is derivable:
    \begin{mathpar}
        \inferrule*[Right=Div*]
            {\zeta \vDash \div}
            {\triple{\zeta}{C}{\zeta}}
    \end{mathpar}
\end{lemma}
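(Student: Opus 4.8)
The plan is to unfold the semantic definition of the triple and reduce the claim to a one-line computation of the Kleisli extension on a purely-divergent weighting function. By definition, $\vDash \triple{\zeta}{C}{\zeta}$ means that for every $m \in \zeta$ we have $\dem{C}(m) \vDash \zeta$, i.e. $\dem{C}(m) \in \zeta$. Since $m$ is already in $\zeta$, it suffices to show the stronger fact that $C$ does not alter $m$ at all, namely $\dem{C}(m) = m$.

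First I would use the premise $\zeta \vDash \div$: by the definition of nonterminating assertions, every $m \in \zeta$ satisfies $\supp(m) \subseteq \{\nonterm\}$, so $m(\sigma) = \0$ for all $\sigma \in \Sigma$ and therefore $m = m(\nonterm) \cdot \eta(\nonterm)$ with $\supp(m) \cap \Sigma = \emptyset$. Then I would compute $\dem{C}(m) = \de{C}^{\dagger}(m)$ directly from the definition of $(-)^{\dagger}$: the finitary sum ranges over $\supp(m) \cap \Sigma = \emptyset$ and hence contributes $\0$, leaving only the nontermination term, so that $\dem{C}(m) = m(\nonterm) \cdot \eta(\nonterm) = m$. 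Equivalently, one can get this more slickly by combining parts (3) and (5) of \autoref{lemma:bind-effects}, which give $\de{C}^{\dagger}(m(\nonterm) \cdot \eta(\nonterm)) = m(\nonterm) \cdot \de{C}^{\dagger}(\eta(\nonterm)) = m(\nonterm) \cdot \eta(\nonterm) = m$. Either way, $\dem{C}(m) = m \in \zeta$, which closes the argument.

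This is essentially the same calculation used in the soundness case for the \sruleref{Div} rule, with the only generalization being that we no longer require $m$ to have the specific shape $u \cdot \top \cdot \eta(\nonterm)$ of $\Uparrow^{(u)}$, but merely $\supp(m) \subseteq \{\nonterm\}$. Conceptually, the entire content of the rule is the ``carried through'' property of the nontermination monad: a program cannot interfere with weight already allocated to $\nonterm$, because such weight never participates in the $\supp(m) \cap \Sigma$ summation that defines sequencing. Consequently there is no real obstacle here; the only point requiring care is the routine bookkeeping that the emptiness of $\supp(m) \cap \Sigma$ forces the state-dependent part of $\de{C}^{\dagger}(m)$ to vanish, which is immediate from the definition of the Kleisli extension.
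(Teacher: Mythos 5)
Your semantic computation is correct as far as it goes: for $m$ with $\supp(m) \subseteq \{\nonterm\}$ the sum in the Kleisli extension ranges over $\supp(m) \cap \Sigma = \emptyset$, so $\dem{C}(m) = m(\nonterm) \cdot \eta(\nonterm) = m$, and hence every instance of the rule is semantically valid. But this establishes the wrong kind of claim. The lemma asserts that the rule is \emph{derivable}, meaning each instance of the conclusion can be obtained from the proof system's existing inference rules; what you have proved is that the rule is \emph{sound}. Validity of a triple does not by itself produce a derivation. The only way to convert your argument into derivability is to invoke the relative completeness theorem, which your proposal never does---and which would be a much heavier route than intended for a section whose purpose is to exhibit direct syntactic derivations (the completeness proof is precisely where such semantic computations get packaged into derivations, so leaning on it here is circular in spirit).

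The paper's proof is syntactic and short: since $\zeta \vDash \div$, every $m \in \zeta$ is determined entirely by its weight on $\nonterm$, so $\zeta$ is equivalent to $\exists m : \zeta.\: \Uparrow^{(m(\nonterm))}$; one then applies the primitive \sruleref{Div} axiom to each $\Uparrow^{(m(\nonterm))}$, combines the instances with \sruleref{Exists}, and closes with \sruleref{Consequence} using the equivalence of $\zeta$ with the existential. To repair your proposal, either reproduce this three-rule derivation (your observation that $m = m(\nonterm)\cdot\eta(\nonterm)$ is exactly the fact justifying the rewriting of $\zeta$ as the existential, so your work is not wasted) or state explicitly that you are appealing to relative completeness to pass from validity of each instance to its derivability. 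As written, the proposal does not prove the statement.
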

\begin{proof}
    Since $\zeta \vDash \div$, we know that for all $m$ such that $m \vDash \zeta$, $\supp(m) \subseteq \{\nonterm\}$. Then, these are equivalent: $\zeta = \exists m : \zeta.\: \diverge{m(\nonterm)}$. The derivation is as follows:
    \begin{mathpar}
        \inferrule*[Right=Consequence]
            {\inferrule*[Right=Exists]
                {\forall m \in \zeta. \\
                \inferrule*[Right=Div]
                    {\:}
                    {\triple{\diverge{m(\nonterm)}}{C}{\diverge{m(\nonterm)}}}
                }
                {\triple{\exists m : \zeta.\: \diverge{m(\nonterm)}}{C}{\exists m : \zeta.\: \diverge{m(\nonterm)}}}
            }
            {\triple{\zeta}{C}{\zeta}}
    \end{mathpar}
\end{proof}

\begin{lemma}
    The following rule is derivable:
    \begin{mathpar}
        \inferrule*[Right=\textsc{While}]
            {(\psi_n)_{n \in \N} \rightsquigarrow \psi_\infty \\
            (\varphi_n \oplus \psi_n \oplus \zeta_n)_{n \in \N} \Uparrow \zeta_\infty \\
            \forall n \in \N. \\
                \varphi_n \vDash b \\ \psi_n \vDash \neg b \\ \zeta_n \vDash \div \\ 
                \triple{\varphi_n \oplus \zeta_n}{C}{\varphi_{n+1} \oplus \psi_{n+1} \oplus \zeta_{n+1}}
            }
            {\triple{\varphi_0 \oplus \psi_0 \oplus \zeta_0}{\whl{b}{C}}{\psi_\infty \oplus \zeta_\infty}}
    \end{mathpar}
\end{lemma}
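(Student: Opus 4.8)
The plan is to derive the \textsc{While} rule as a single instance of the \textsc{Iter} rule (\autoref{fig:inference-iter}), exploiting the definitional unfolding $\whl{b}{C} \triangleq \iter{C}{b}{\neg b}$. The key observation is that \textsc{Iter} maintains one family $\varphi_n$ of ``live'' configurations, which its built-in $\assume e$ and $\assume{e'}$ split into continuing and exiting parts, whereas \textsc{While} has already separated the live configurations into those satisfying the guard ($\varphi_n$) and those falsifying it ($\psi_n$). I would therefore instantiate \textsc{Iter} with $e \triangleq b$ and $e' \triangleq \neg b$, using $\varphi_n \oplus \psi_n$ in place of its continuing family, $\psi_n$ in place of its exit family, and leaving $\zeta_n$, $\psi_\infty$, and $\zeta_\infty$ unchanged.

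First I would discharge the entailment premises of \textsc{Iter}. Since $\varphi_n \vDash b$ and $\psi_n \vDash \neg b$ each imply sure termination, every $m \in \varphi_n \oplus \psi_n$ decomposes as $m_1 + m_2$ with $m_1(\nonterm) = m_2(\nonterm) = \0$, so $\varphi_n \oplus \psi_n \vDash \tru$; the premise $\zeta_n \vDash \div$ transfers directly. Next I would derive the exit triple $\triple{\varphi_n \oplus \psi_n}{\assume{\neg b}}{\psi_n}$: from $\varphi_n \vDash b$ we get $\varphi_n \vDash \neg b = \0$ and from $\psi_n \vDash \neg b$ we get $\psi_n \vDash \neg b = \1$, so applying \textsc{Assume} to each summand and combining with \textsc{Choice} yields $\triple{\varphi_n \oplus \psi_n}{\assume{\neg b}}{(\varphi_n \odot \0) \oplus \psi_n}$; since $m \cdot \0 = \0$ and $\0$ is the additive identity, \textsc{Consequence} collapses $(\varphi_n \odot \0) \oplus \psi_n$ to $\psi_n$. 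A symmetric derivation, applying \textsc{Div*} to the $\zeta_n$ component, produces $\triple{\varphi_n \oplus \psi_n \oplus \zeta_n}{\assume b}{\varphi_n \oplus \zeta_n}$.

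The loop-body premise of \textsc{Iter} then follows by \textsc{Seq}: composing the $\assume b$ triple above with the given body triple $\triple{\varphi_n \oplus \zeta_n}{C}{\varphi_{n+1} \oplus \psi_{n+1} \oplus \zeta_{n+1}}$ gives $\triple{\varphi_n \oplus \psi_n \oplus \zeta_n}{\assume b \seq C}{\varphi_{n+1} \oplus \psi_{n+1} \oplus \zeta_{n+1}}$, whose postcondition is exactly the reassociation $(\varphi_{n+1} \oplus \psi_{n+1}) \oplus \zeta_{n+1}$ of the shifted instantiated family. The two limiting side conditions match verbatim: $(\psi_n)_{n \in \N} \rightsquigarrow \psi_\infty$ is already in the required shape, and the \textsc{Iter} divergence premise on $(\varphi_n \oplus \psi_n) \oplus \zeta_n$ is precisely the \textsc{While} premise $(\varphi_n \oplus \psi_n \oplus \zeta_n)_{n \in \N} \Uparrow \zeta_\infty$ — which explains why \textsc{While} folds $\psi_n$ into its divergence condition. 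Applying \textsc{Iter} then delivers $\triple{(\varphi_0 \oplus \psi_0) \oplus \zeta_0}{\iter{C}{b}{\neg b}}{\psi_\infty \oplus \zeta_\infty}$, i.e.\ the conclusion of \textsc{While}.

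I expect the main obstacle to be bookkeeping rather than conceptual. The delicate points are justifying the $\assume b$ and $\assume{\neg b}$ derivations—in particular that scaling a branch by $\0$ and absorbing it through \textsc{Consequence} is sound (relying on $m \cdot \0 = \0$ and $\0$ being the unit of $+$ on weighting functions), together with the edge case in which one of the families is empty—and confirming that $\oplus$ is associative so that each reassociation such as $(\varphi_n \oplus \psi_n) \oplus \zeta_n = \varphi_n \oplus \psi_n \oplus \zeta_n$ is legitimate. With these verified, the instantiation of \textsc{Iter} completes the derivation.
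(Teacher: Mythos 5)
Your proposal is correct and takes essentially the same route as the paper: both unfold $\whl{b}{C}$ to $\iter{C}{b}{\neg b}$ and instantiate \textsc{Iter} with continuing family $\varphi_n \oplus \psi_n$, exit family $\psi_n$, and divergent family $\zeta_n$, deriving the auxiliary triples $\triple{\varphi_n \oplus \psi_n \oplus \zeta_n}{\assume b}{\varphi_n \oplus \zeta_n}$ and $\triple{\varphi_n \oplus \psi_n}{\assume{\neg b}}{\psi_n}$ via \textsc{Assume}, \textsc{Choice}, and \textsc{Div*} before composing with the body triple by \textsc{Seq}. The only difference is cosmetic: you absorb the $\0$-weighted branch with an explicit \textsc{Consequence} step, where the paper writes the postcondition $\bone_0$ directly in its \textsc{Assume} instances.
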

\begin{proof}
    We will use \ruleref{fig:inference-command}{Iter} to derive this rule. First, we give the following subderivation ($*$):
    \begin{mathpar}
        \mprset {sep=0.5em}
        \inferrule*[Right=Choice]
            {\inferrule*[Left=Assume]
                {\varphi_n \vDash b}
                {\triple{\varphi_n}{\assume b}{\varphi_n}} \\
            \inferrule*[Right=Assume]
                {\psi_n \vDash \neg b}
                {\triple{\psi_n}{\assume b}{\0 \cdot \top}} \\
            \inferrule*[Right=Div*, vdots=3em]
                {\zeta_n \vDash \div}
                {\triple{\zeta_n}{\assume b}{\zeta_n}}
            }
            {\triple{\varphi_n \oplus \psi_n \oplus \zeta_n}{\assume b}{\varphi_n \oplus \zeta_n}}
    \end{mathpar}
    This gives us (1):
    \begin{mathpar}
        \inferrule*[Right=Seq]
            {\inferrule*
                {(*)}
                {\triple{\varphi_n \oplus \psi_n \oplus \zeta_n}{\assume b}{\varphi_n \oplus \zeta_n}} \\
            \triple{\varphi_n \oplus \zeta_n}{C}{\varphi_{n+1} \oplus \psi_{n+1} \oplus \zeta_{n+1}}
            }
            {\triple{\varphi_n \oplus \psi_n \oplus \zeta_n}{\assume b \seq C}{\varphi_{n+1} \oplus \psi_{n+1} \oplus \zeta_{n+1}}}
    \end{mathpar}
    And we derive (2) similarly: 
    \begin{mathpar}
        \mprset{sep=6em}
        \inferrule*[Right=Choice]
            {\inferrule*[Right=Assume]
                {\varphi_n \vDash b}
                {\triple{\varphi_n}{\assume \neg b}{\0 \cdot \varphi_n}} \\
            \inferrule*[Right=Assume]
                {\psi_n \vDash \neg b}
                {\triple{\psi_n}{\assume \neg b}{\psi_n}}
            }
            {\triple{\varphi_n \oplus \psi_n}{\assume \neg b}{\psi_n}}
    \end{mathpar}
    Now, note that since $\varphi_n \vDash b$ and $\psi_n \vDash \neg b$, it must be that $\varphi_n \oplus \psi_n \vDash \tru$. Recall also that $\whl{b}{C}$ is sugar for $\iter{C}{b}{\neg b}$. The full derivation proceeds as follows:
    \begin{mathpar}
        \inferrule*[right=Iter]
            {\forall n \in \N. \quad
            \inferrule*
                {(1)}
                {\triple{\varphi_n \oplus \psi_n \oplus \zeta_n}{\assume b \seq C}{\varphi_{n+1} \oplus \psi_{n+1} \oplus \zeta_{n+1}}}
            \quad
            \inferrule*
                {(2)}
                {\triple{\varphi_n \oplus \psi_n}{\assume \neg b}{\psi_n}}
            }
            {\triple{\varphi_0 \oplus \psi_0 \oplus \zeta_0}{\whl{b}{C}}{\psi_\infty \oplus \zeta_\infty}}
    \end{mathpar}

\end{proof}

\begin{lemma}
    The following rule is derivable:
    \begin{mathpar}
        \inferrule*[Right=\textsc{Invariant}]
            {\triple{\ov{P \land b}}{C}{\always P}}
            {\triple{\ov P}{\whl{b}{C}}{\always(P \land \neg b)}}
    \end{mathpar}
\end{lemma}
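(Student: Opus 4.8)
The plan is to derive \textsc{Invariant} as an instance of the already-derived \textsc{While} rule, followed by a final \textsc{Consequence} (together with \textsc{Exists}) step. The three families required by \textsc{While} will track, after $n$ unrollings of the loop, the configurations that are still live and satisfy the guard ($\varphi_n$), those that have just exited ($\psi_n$), and any divergence inherited from nested loops inside $C$ ($\zeta_n$). Concretely, for a fixed $m \vDash P$ I would take $\varphi_n$ to be the singleton holding the restriction to states satisfying $b$ of the terminating part of $\de{(\assume b \seq C)^n}(m)$, take $\psi_n$ to be the analogous restriction to states satisfying $\neg b$, and take $\zeta_n$ to be the nontermination weight accumulated through the first $n$ iterations. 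The precondition $P$ then decomposes as $\varphi_0 \oplus \psi_0 \oplus \zeta_0$ with $\zeta_0$ empty, and $P \Rightarrow \exists m{:}P.\ \varphi_0 \oplus \psi_0 \oplus \zeta_0$ by splitting each $m \vDash P$ along $b$ and $\neg b$.

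The heart of the argument is to discharge the per-iteration premise $\triple{\varphi_n \oplus \zeta_n}{C}{\varphi_{n+1}\oplus\psi_{n+1}\oplus\zeta_{n+1}}$ along with the side conditions $\varphi_n \vDash b$, $\psi_n \vDash \neg b$, and $\zeta_n \vDash \div$. The entailments hold by construction, since $\varphi_n$ and $\psi_n$ are restrictions of terminating configurations to $b$ and $\neg b$ respectively. For the transition I would invoke the hypothesis $\triple{P \land b}{C}{\square P}$: because $\supp(\varphi_n) \subseteq P \cap b$, running $C$ produces a configuration whose terminating support lies in $P$, which I split into its $b$-part ($\varphi_{n+1}$), its $\neg b$-part ($\psi_{n+1}$), and its divergent part, the latter folded together with $\zeta_n$ into $\zeta_{n+1}$ and carried by \textsc{Div*}. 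That the supports stay inside $P$ across all $n$—invariant preservation—then follows by induction on $n$ from the same hypothesis.

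It remains to verify the two limit conditions: $(\psi_n)_{n \in \N} \rightsquigarrow \psi_\infty$, where $\psi_\infty$ collects all exit outcomes and hence has support in $P \cap \neg b$; and $(\varphi_n \oplus \psi_n \oplus \zeta_n)_{n \in \N} \Uparrow \zeta_\infty$, where $\zeta_\infty$ is the mass that never exits, i.e. the loop's own nontermination. Both descriptions coincide with the unrolling of $\de{\whl{b}{C}}$ given in \autoref{lemma:unrolling}. Applying \textsc{While} yields $\triple{\varphi_0\oplus\psi_0\oplus\zeta_0}{\whl{b}{C}}{\psi_\infty\oplus\zeta_\infty}$; since $\psi_\infty \Rightarrow \square(P \land \neg b)$ and $\zeta_\infty$ is purely divergent, we get $\psi_\infty \oplus \zeta_\infty \Rightarrow \square(P \land \neg b)$, and a \textsc{Consequence} step on both sides (strengthening the precondition to $P$ via the decomposition above, weakening the postcondition) closes the derivation.

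The main obstacle I anticipate is the tension between the termination-insensitive premise $\triple{P \land b}{C}{\square P}$, which fixes only the support of the body's output and is blind to how much weight diverges, and the \textsc{While} rule's demand that $\varphi_n$ and $\psi_n$ be \emph{surely terminating} ($\varphi_n \vDash b$, $\psi_n \vDash \neg b$). Resolving it requires defining the families from the loop's actual semantics, so that $\varphi_n,\psi_n$ are genuine finite-stage projections onto $\Sigma$ while every trace of divergence is routed into the $\zeta$ components and ultimately into $\zeta_\infty$; the per-iteration transition triples then become strongest-postcondition triples and are derivable by relative completeness. Checking the $\Uparrow$ condition in this weight-agnostic ($\square$) setting—confirming that nondeterministic nontermination of the loop itself surfaces as $\nonterm \in \supp(\zeta_\infty)$ rather than being silently discarded—is the step that most needs care.
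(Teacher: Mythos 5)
Your derivation is correct in outline, but it takes a genuinely different---and considerably heavier---route than the paper's. The paper's proof needs no semantic unrolling at all: it instantiates \ruleref{rule:while}{While} with \emph{constant} families $\varphi_n \triangleq \alwaystot(P \land b)$, $\psi_n = \psi_\infty \triangleq \alwaystot(P \land \neg b)$, and $\zeta_n = \zeta_\infty \triangleq \exists u : U.\: \diverge{u}$. The ``tension'' you flag at the end is dissolved by this choice: since $\square$ is termination-insensitive, one picks $\zeta_\infty$ to permit \emph{every} divergence weight, including $\0$, so the condition $(\varphi_n \oplus \psi_n \oplus \zeta_n)_{n \in \N} \Uparrow \zeta_\infty$ holds vacuously---$(\inf_{n \in \N} |m_n| \cdot \top) \cdot \eta(\nonterm)$ always has support contained in $\{\nonterm\}$---and no careful routing of nontermination mass is needed. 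Likewise the per-iteration premise is obtained schematically rather than semantically: $\square P$ decomposes as $\alwaystot(P \land b) \oplus \alwaystot(P \land \neg b) \oplus \exists u : U.\: \diverge{u}$, and \autoref{lemma:hoare-triple-equiv} lifts the hypothesis $\triple{P \land b}{C}{\square P}$ to $\triple{\square(P \land b)}{C}{\square P}$, so a single \sruleref{Consequence} yields the stage triple with no induction on $n$ and no mention of $\de{C}$. By contrast, your construction---strongest-postcondition singleton families per fixed $m \vDash P$, verified against \autoref{lemma:unrolling}, with stage triples discharged ``by relative completeness''---essentially replays the paper's completeness argument for \textsc{Iter} specialized to this rule. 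That route does go through, but note what it costs: because you use the premise triple only to establish semantic facts (support containment) and obtain the stage triples from the completeness oracle rather than from the premise itself, what you prove is \emph{admissibility} of \textsc{Invariant} (valid premise implies derivable conclusion) rather than a direct schematic derivation from the premise; given soundness and relative completeness the two coincide extensionally, but the paper's constant-invariant derivation is self-contained, uniform in the semiring, and makes visible precisely why termination-insensitive invariant reasoning requires no weight bookkeeping at all.
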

\begin{proof}
    We will derive this rule using \ruleref{rule:while}{While}. For all $n \in \N$, let
    \begin{align*}
        \varphi_n &\triangleq \alwaystot(P \land b) &
        \psi_n = \psi_\infty &\triangleq \alwaystot(P \land \neg b) &
        \zeta_n = \zeta_\infty &\triangleq \exists u.\: \diverge{u} 
    \end{align*}
    We show that the necessary premises hold: 
    \begin{itemize}
        \item Take any $(m_n)_{n \in \N}$ such that $m_n \vDash \psi_n$ for all $n$. That is, $m_n \vDash \alwaystot(P \land \neg b)$, so $\emptyset \subset \supp(m_n) \subseteq (P \land \neg b)$. Then,
        \[
            \emptyset \subset \supp\left(\sum_{n \in \N} m_n \right) 
            = \bigcup_{n \in \N} \supp(m_n) \subseteq (P \land \neg b)
        \]
        By definition, $\sum_{n \in \N} m_n \vDash \alwaystot(P \land \neg b) = \psi_{\infty}$. Thus, $(\psi_n)_{n \in \N} \rightsquigarrow \psi_\infty$. 
        \vspace{0.5em}

        \item Again, take any $(m_n)_{n \in \N}$ such that $m_n \vDash \varphi_n \oplus \psi_n \oplus \zeta_n$ for all $n$. Note that $\zeta_\infty = \exists u: U.\: \diverge{u} = \{m \mid \supp(m) \subseteq \{\nonterm\}\}$. Thus, it always holds that
        \[
            \left(\inf_{n \in N} |m_n| \cdot \top \right) \cdot \eta(\nonterm) \vDash \zeta_\infty
        \]
        We have $(\varphi_n \oplus \psi_n \oplus \zeta_n)_{n \in \N} \Uparrow \zeta_\infty$.
        \vspace{0.5em}

        \item Clearly, $\alwaystot(P \land b) \vDash b$, $\alwaystot(P \land \neg b) \vDash \neg b$, and $\exists u : U.\: \diverge{u} \vDash \div$ for all $n$.
    \end{itemize}
    The derivation is as follows:
    \begin{mathpar}
        \inferrule*[right=Consequence]
            {\inferrule*[Right=While]
                {\inferrule*[Right=Consequence]
                    {\inferrule*[Right=\autoref{lemma:hoare-triple-equiv} - Partial]
                        {\triple{\ov{P \land b}}{C}{\always P}}
                        {\triple{\always(P \land b)}{C}{\always P}}
                    }
                    {\triple{\alwaystot(P \land b) \oplus \exists u : U.\: \diverge{u}}{C}{\alwaystot(P \land b) \oplus \alwaystot(P \land \neg b) \oplus \exists u : U.\: \diverge{u}}}
                }
                {\triple{\alwaystot(P \land b) \oplus \alwaystot(P \land \neg b) \oplus \exists u : U.\: \diverge{u}}{\whl{b}{C}}{\alwaystot(P \land \neg b) \oplus \exists u : U.\: \diverge{u}}}
            }
            {\triple{\ov P}{\whl{b}{C}}{\always(P \land \neg b)}}
    \end{mathpar}
\end{proof}

\begin{lemma}
    The following rule is derivable:
    \begin{mathpar}
        \inferrule*[Right=Variant]
            {\forall n < N. \\
                \varphi_{n+1} \vDash b \\ \varphi_0 \vDash \neg b \\
                \triple{\varphi_{n+1}}{C}{\varphi_n}    
            }
            {\triple{\exists N : \N.\: \varphi_N}{\whl{b}{C}}{\varphi_0}}
    \end{mathpar}
\end{lemma}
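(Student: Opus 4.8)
The plan is to reduce to a fixed starting variant and then instantiate the already-derived \textsc{While} rule, reindexing the variant value as an iteration count. Since the precondition $\exists N : \N.\ \varphi_N$ is the union $\bigcup_{N} \varphi_N$, the \textsc{Exists} rule lets me discharge each $N \in \N$ independently: it suffices to derive $\triple{\varphi_N}{\whl{b}{C}}{\varphi_0}$ for every fixed $N$. Applying \textsc{Exists} with the \emph{constant} postcondition family $\phi'(N) = \varphi_0$ then yields $\triple{\exists N:\N.\ \varphi_N}{\whl{b}{C}}{\exists N:\N.\ \varphi_0}$, and a single \textsc{Consequence} step using $\exists N:\N.\ \varphi_0 = \varphi_0$ recovers the conclusion.

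For a fixed $N$, I would build the \textsc{While} families by turning ``variant equals $k$'' into ``iteration index equals $N-k$''. Concretely, set $\varphi'_n \triangleq \varphi_{N-n}$ for $0 \le n < N$ and $\varphi'_n \triangleq \bone_0$ for $n \ge N$; set $\psi'_N \triangleq \varphi_0$ and $\psi'_n \triangleq \bone_0$ otherwise; set $\zeta'_n \triangleq \bone_0$ for all $n$; and take limiting assertions $\psi'_\infty \triangleq \varphi_0$, $\zeta'_\infty \triangleq \diverge{0} = \bone_0$. Using $\varphi \oplus \bone_0 = \varphi$, the initial assertion $\varphi'_0 \oplus \psi'_0 \oplus \zeta'_0$ collapses to $\varphi_N$ and the limiting postcondition $\psi'_\infty \oplus \zeta'_\infty$ collapses to $\varphi_0$, so the \textsc{While} conclusion is exactly the target triple.

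The next step is to discharge the per-iteration premises of \textsc{While}. The entailments are immediate: $\varphi'_n \vDash b$ holds because $\varphi_{N-n}\vDash b$ for $n < N$ (as $N-n\ge 1$) and because $\bone_0$ entails any guard vacuously; $\psi'_N = \varphi_0 \vDash \neg b$ is a hypothesis while the remaining $\psi'_n = \bone_0$ entail $\neg b$ vacuously; and $\zeta'_n = \bone_0 \vDash \div$ since $\supp(0) = \emptyset \subseteq \{\nonterm\}$. The body obligation $\triple{\varphi'_n \oplus \zeta'_n}{C}{\varphi'_{n+1}\oplus\psi'_{n+1}\oplus\zeta'_{n+1}}$ reduces, for $n < N$, to the hypothesis $\triple{\varphi_{N-n}}{C}{\varphi_{N-n-1}}$, the boundary case $n+1 = N$ being where the output $\varphi_0$ migrates from the $\varphi'$ slot to the $\psi'$ slot; for $n \ge N$ it is $\triple{\bone_0}{C}{\bone_0}$, obtained from \textsc{Div*} since $\bone_0 \vDash \div$. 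The convergence premise $(\psi'_n)_{n\in\N} \rightsquigarrow \varphi_0$ holds because only the index $n = N$ contributes a nonzero summand.

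The genuinely load-bearing step, and the place I would be most careful, is the divergence premise $(\varphi'_n \oplus \psi'_n \oplus \zeta'_n)_{n\in\N} \Uparrow \bone_0$, which is exactly the claim that the loop provably terminates. Here I would use that for every $n > N$ the combined assertion is $\bone_0$, forcing any witness $m_n$ to satisfy $|m_n| = 0$; hence $\inf_{n} |m_n| = 0$ in the natural order (as $0$ is least), so $(\inf_n |m_n| \cdot \top)\cdot \eta(\nonterm) = 0 \vDash \bone_0$. Intuitively, the variant guarantees that all mass has left the loop after $N$ steps, leaving zero residual weight to accumulate on $\nonterm$; this is precisely what separates the total-correctness \textsc{Variant} rule from a merely partial one. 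The only fiddly points are the boundary bookkeeping at $n = N$ and the degenerate case $N = 0$, where $\varphi'_0 = \bone_0$ and $\psi'_0 = \varphi_0$ so that the loop exits immediately; both are absorbed by the uniform definitions above.
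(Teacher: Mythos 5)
Your proof is correct and follows essentially the same route as the paper's: for each fixed $N$ you reindex the variant as $\varphi'_n \triangleq \varphi_{N-n}$, pad the families with the zero assertion beyond $N$, verify the convergence premise (only index $N$ contributes) and the divergence premise (zero mass beyond $N$ forces the infimum to $\0$), apply \textsc{While} for each $N$, and close with \textsc{Exists} and \textsc{Consequence}. The only cosmetic differences are that you discharge the $n \ge N$ body obligations via \textsc{Div*} where the paper uses \textsc{Scale} applied to \textsc{True}, and your $\bone_0$ is the paper's $\0 \odot \top$ (the same singleton of the zero weighting function), so nothing of substance changes.
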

\begin{proof}
    Once again, we use \ruleref{rule:while}{While}. For all $N$ and $n$, define 
    \begin{align*}
        \varphi_n' &\triangleq \begin{cases}
            \varphi_{N-n} & \text{if } n < N \\
            \0 \odot \top & \text{otherwise}
        \end{cases} & 
        \psi_n &\triangleq \begin{cases}
            \varphi_0 & \text{if } n \in \{N, \infty\} \\
            \0 \odot \top & \text{otherwise}
        \end{cases} &
        \zeta_n = \zeta_\infty &\triangleq  1_0 
    \end{align*}
    We give the necessary premises for the rule:
    \begin{itemize}
        \item Take any $(m_n)_{n \in \N}$ such that $m_n \vDash \psi_n$ for all $n \in \N$. Then, $m_N \vDash \varphi_0$ while $m_n \vDash \0 \odot \top$ for $n \neq N$. So $\sum_{n\in\N} m_n = m_N \vDash \varphi_0$. This gives us $(\psi_n)_{n \in \N} \rightsquigarrow \psi_\infty$.

        \item Take any $(m_n)_{n \in \N}$ such that $m_n \vDash \varphi_n \oplus \psi_n \oplus \zeta_n$. We know $|m_n| = \0$ for all $n \geq N$, so 
        \[
            \left(\inf_{n \in \N} |m_n| \cdot \top\right) \cdot \eta(\nonterm) = \0 \cdot \eta(\nonterm) \vDash \0 \odot \top
        \]
        We have $(\varphi_n \oplus \psi_n \oplus \zeta_n) \vDash \zeta_\infty$. 
        \vspace{0.5em}

        \item We have as premise that for all $n < N$, $\varphi_n, (\0 \odot \top) \vDash b$, so $\varphi_n' \vDash b$. $\varphi_0, (\0 \odot \top) \vDash \neg b$, so $\psi_n \vDash \neg b$. And finally $\zeta_n = \0 \odot \top \vDash \div$. 
    \end{itemize}
    To derive $\triple{\varphi_n'}{C}{\varphi_{n+1}' \oplus \psi_{n+1}}$ for all $n$, we consider two cases: either $n < N$ or $n \geq N$. We take these subderivations as part (1):
    \begin{mathpar}
        \inferrule*
            {\inferrule*
                {\forall m < N.\: \triple{\varphi_{m+1}}{C}{\varphi_m}}
                {\forall n < N.\: \triple{\varphi_{N-n}}{C}{\varphi_{N-(n+1)}}}            
            }
            {\forall n < N.\: \triple{\varphi_n'}{C}{\varphi_{n+1}' \oplus \psi_{n+1}}}

        \inferrule*
            {\inferrule*[Right=Scale]
                {\inferrule*[Right=True]
                    {\:}
                    {\triple{\top}{C}{\top}}
                }
                {\triple{\0 \cdot T}{C}{\0 \odot \top}}
            }
            {\forall n \geq N.\: \triple{\varphi_n'}{C}{\varphi_{n+1}' \oplus \psi_{n+1}}}
    \end{mathpar}
    The derivation is completed as follows:
    \begin{mathpar}
        \inferrule*[Right=Exists]
            {\forall N \in \N.\: \inferrule*[Right=While]
                {\inferrule*
                    {(1)}
                    {\forall n \in N. \triple{\varphi_n'}{C}{\varphi_{n+1}' \oplus \psi_{n+1}}}
                }
                {\triple{\varphi_N}{\whl{b}{C}}{\varphi_0}}
            }
            {\triple{\exists N : \N.\: \varphi_N}{\whl{b}{C}}{\varphi_0}}
    \end{mathpar}
\end{proof}

\begin{lemma}
    The following rule is derivable:
    \begin{mathpar}
        \hspace{-1.7cm} \inferrule[Hoare-Variant]
            {(P \land b) \Rightarrow R > 0 \\
            \forall n \in \N.\:\: \triple{\ov{P \land b \land R = n}}{C}{\alwaystot (P \land R < n)}}
            {\triple{\ov{P \land R \leq N}}{\whl{b}{C}}{\alwaystot (P \land \neg b)}}
    \end{mathpar}
\end{lemma}
\begin{proof}
    We use \ruleref{rule:while}{While} and fix families of assertions $\varphi_n$, $\psi_n$ and $\zeta_n$ for the rule. 
    
    For all $n \in \N$, let $\zeta_n \triangleq \0 \odot \top$ and 
    \vspace{0.5em}
    \begin{align*}
        \varphi_n \triangleq \begin{cases}
            \alwaystot(P \land R \leq (N - n) \land b) & \text{if } n \leq N \\
            \0 \odot \top & \text{if } n > N
        \end{cases} & \qquad & 
        \psi_n \triangleq \begin{cases}
            \alwaystot(P \land R \leq(N - n) \land \neg b) & \text{if } n \leq N \\
            \0 \odot \top & \text{if } n > N
        \end{cases} 
    \end{align*}
    \vspace{0.5em}
    Let $\psi_\infty \triangleq \alwaystot(P \land \neg b)$ and $\zeta_\infty \triangleq \0 \odot \top$. 

    \noindent We clearly have that $\varphi_n \vDash b$ and $\psi_n \vDash \neg b$. Next, we show that the necessary convergence/divergence conditions hold: 
    \begin{itemize}
        \item Take any $(m_n)_{n \in \N}$ such that $m_n \vDash \psi_n$ for all $n$. Then $m_n \vDash \alwaystot(P \land \neg b)$, so $\sum_{n \in \N} m_n \vDash \alwaystot (P \land \neg b)$. This gives us $(\psi_n)_{n \in \N} \rightsquigarrow \psi_\infty$.
        \item Take $(m_n)_{n \in \N}$ such that $m_n \vDash \varphi_n \oplus \psi_n \oplus \zeta_n$ for all $n$. Then for $m > N$, $m_n \vDash \0 \cdot \top$. So, $(\inf_{n \in \N} |m_n| \cdot \top \cdot \eta(\nonterm) = 0 \cdot \eta(\nonterm))$. Thus, $(\varphi_n \oplus \psi_n \oplus \zeta_n) \Uparrow \zeta_\infty$. 
    \end{itemize}
    \vspace{1em}
    Observe that for $n > N$, $\varphi_n \oplus \psi_n \oplus \zeta_n = \ov{P \land R \leq (N - n)}$. In part (1) below, we derive the triple $\triple{\varphi_n \oplus \zeta_n}{C}{\varphi_{n+1} \oplus \psi_{n+1} \oplus \zeta_{n+1}}$:
    \begin{mathpar}
        \inferrule*[right=Exists]
            {\forall m \leq N - n. \\ 
                \inferrule*[Right=Consequence]
                    {\inferrule*[Right=\autoref{lemma:hoare-triple-equiv} - Total]
                        {\triple{\ov{P \land b \land R = m}}{C}{\alwaystot(P \land R < m)}}
                        {\triple{\alwaystot(P \land b \land R = m)}{C}{\alwaystot(P \land R < m)}}
                    }
                    {\triple{\alwaystot(P \land b \land R = m)}{C}{\alwaystot(P \land R \leq (N - (n + 1)))}}
            }
            {\triple{\alwaystot(P \land b \land R \leq (N - n))}{C}{\alwaystot(P \land R \leq (N - (n + 1)))}}
    \end{mathpar}
    We then complete the derivation with an application of \ruleref{rule:while}{While} and \sruleref{consequence}:
    \begin{mathpar}
        \inferrule*[Right=Consequence]
            {\inferrule*[Right=While]
                {\inferrule*
                    {(1)}
                    {\triple{\alwaystot(P \land b \land R \leq (N-n))}{C}{\alwaystot(P \land R \leq (N - (n + 1)))}}                
                }
                {\triple{\alwaystot(P \land R \leq N)}{\whl{b}{C}}{\alwaystot(P \land \neg b)}}
            }
            {\triple{\ov{P \land R \leq N}}{\whl{b}{C}}{\alwaystot(P \land \neg b)}}
    \end{mathpar}
\end{proof}

\begin{lemma}
    The following rules are derivable:
    \begin{mathpar} 
        \inferrule*[right={QInv-Demon}]
            {\ov P \vDash b \\ \triple{\ov P}{C}{\always P}}
            {\triple{\ov P}{\whl{b}{C}}{\always \fls}}
    
        \inferrule*[right={QInv-Angel}]
            {\ov P \vDash b \\ \triple{\ov P}{C}{\sometimes P}}
            {\triple{\ov P}{\whl{b}{C}}{\sometimespart\fls}}
    \end{mathpar}
\end{lemma}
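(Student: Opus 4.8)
The plan is to derive each rule from the loop rules already established in this section, treating the demonic and angelic cases separately since they hinge on the dual modalities $\always = \square$ and $\sometimes = \lozenge$. For \ruleref{rule:quasi-invariant}{QInv-Demon} I would reduce to the \ruleref{rule:invariant}{Invariant} rule: since $\always P$ is exactly $\square P$, the hypothesis $\triple{P}{C}{\always P}$ is precisely the invariant premise of \ruleref{rule:invariant}{Invariant}, and the side condition $P \vDash b$ makes $P$ and $P \land b$ interchangeable as outcome assertions (every state in the support of an $m \vDash P$ satisfies $b$). Instantiating \ruleref{rule:invariant}{Invariant} with invariant $P$ yields $\triple{P}{\whl{b}{C}}{\square(P \land \neg b)}$. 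The crucial observation is that $P \vDash b$ makes the exit predicate $P \land \neg b$ empty, so $\square(P \land \neg b) = \{m \mid \supp(m) \subseteq \{\nonterm\}\} = \always \Uparrow$; one application of \ruleref{fig:inference-structural}{Consequence} rewrites the postcondition and closes the case. Equivalently, one may run \ruleref{rule:while}{While} directly with $\varphi_n \triangleq \alwaystot P$, $\psi_n \triangleq \bone_0$, and $\zeta_n \triangleq \exists u : U.\, \diverge{u}$: because $P \vDash b$ lets no weight reach $\neg b$, the terminating limit $\psi_\infty$ is empty and the whole loop collapses onto divergence.

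For \ruleref{rule:quasi-invariant}{QInv-Angel} I would apply \ruleref{rule:while}{While} with the constant families $\varphi_n \triangleq \lozenge P \land \alwaystot b$, $\psi_n \triangleq \alwaystot{\neg b}$, and $\zeta_n \triangleq \exists u : U.\, \diverge{u}$, taking $\psi_\infty \triangleq \alwaystot{\neg b}$ and $\zeta_\infty \triangleq \Uparrow$. The entailments $\varphi_n \vDash b$, $\psi_n \vDash \neg b$, and $\zeta_n \vDash \div$ are immediate. For the per-iteration premise, \autoref{lemma:lisbon-triple-equiv} lifts $\triple{P}{C}{\lozenge P}$ to $\triple{\lozenge P}{C}{\lozenge P}$; since $\varphi_n \Rightarrow \lozenge P$ and any $m \vDash \lozenge P$ splits (by projection onto $b$, onto $\neg b$, and onto $\nonterm$) into a guard-true part still meeting $P$, a guard-false part, and a divergent part, \ruleref{fig:inference-structural}{Consequence} gives $\triple{\varphi_n}{C}{\varphi_{n+1} \oplus \psi_{n+1} \oplus \zeta_{n+1}}$, and \ruleref{rule:div*}{Div*} together with \ruleref{fig:inference-structural}{Choice} carries the incoming $\zeta_n$ through. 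It then remains to check the two limit side conditions and to discharge the endpoints with \ruleref{fig:inference-structural}{Consequence}, namely $P \Rightarrow \varphi_0 \oplus \psi_0 \oplus \zeta_0$ and $\psi_\infty \oplus \zeta_\infty = \alwaystot{\neg b} \oplus \Uparrow \Rightarrow \sometimes \Uparrow$.

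The main obstacle is the diverging side condition $(\varphi_n \oplus \psi_n \oplus \zeta_n)_{n} \Uparrow \zeta_\infty$ in the angelic case: one must show the limit actually carries nonzero weight on $\nonterm$ rather than collapsing to $\bone_0$. This is where the nondeterministic (Boolean) restriction is essential. Because each $\varphi_n$ contains a $P$-state, every $m_n \vDash \varphi_n \oplus \psi_n \oplus \zeta_n$ has nonempty support and hence mass exactly $\top$, so $\inf_n |m_n| \cdot \top = \top$ and $\zeta_\infty = \Uparrow$ is forced. In a quantitative semiring the surviving thread's weight could instead decay to $\0$ in the limit, and no divergence would be witnessed—exactly the phenomenon behind the almost-surely-terminating probabilistic examples—which is precisely why these two rules are stated only for nondeterminism.
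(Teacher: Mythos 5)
Your proposal is correct, but its two halves relate to the paper's own proof differently, so a comparison is worth recording. For \ruleref{rule:quasi-invariant}{QInv-Angel} you essentially reproduce the paper's derivation: the paper also instantiates \ruleref{rule:while}{While} with constant families, taking $\varphi_n \triangleq (\exists u : U\setminus\{\0\}.\: P^{(u)}) \oplus \alwaystot b$ (extensionally your $\lozenge P \land \alwaystot b$ in the Boolean setting), $\psi_n \triangleq \alwaystot(\neg b)$, $\zeta_n \triangleq \exists u : U.\: \diverge{u}$, lifts the premise to $\triple{\sometimes P}{C}{\sometimes P}$ via \autoref{lemma:lisbon-triple-equiv}, and discharges both endpoints with \ruleref{fig:inference-structural}{Consequence}; the only differences are cosmetic, namely that the paper uses the looser limits $\psi_\infty \triangleq \top$ and $\zeta_\infty \triangleq \exists u : U\setminus\{\0\}.\: \diverge{u}$ where you pin down $\alwaystot(\neg b)$ and the exact $\Uparrow$ --- which your Boolean mass argument (nonempty support forces $|m_n| = \1 = \top$, hence $\inf_n |m_n| \cdot \top = \top$) indeed licenses. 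For \ruleref{rule:quasi-invariant}{QInv-Demon} you take a genuinely different route: the paper simply re-runs the While-based argument (it says the demonic derivation is ``very similar'' to the angelic one), whereas you reduce to the already-derived \ruleref{rule:invariant}{Invariant} rule, observing that $P \vDash b$ makes its premise $\triple{P \land b}{C}{\square P}$ coincide with your hypothesis and renders the exit predicate vacuous, so that $\square(P \land \neg b) = \{m \mid \supp(m) \subseteq \{\nonterm\}\} = \always\Uparrow$ --- an identification the paper itself makes inside its Invariant proof when it equates $\exists u : U.\: \diverge{u}$ with $\{m \mid \supp(m) \subseteq \{\nonterm\}\}$ --- leaving a single \ruleref{fig:inference-structural}{Consequence} step. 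Your reduction is shorter and reuses more of the derived-rule infrastructure, at the cost of being tied to the specific shape of Invariant; the paper's uniform While instantiation instead keeps the demonic and angelic proofs symmetric. Your closing observation, that the diverging side condition holds only because nonzero Boolean mass is exactly $\top$ while quantitative weights may decay to $\0$, correctly isolates why these rules are confined to the nondeterministic instance and matches the paper's framing.
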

\begin{proof}
    We show the full derivation for \ruleref{rule:quasi-invariant}{QInv-Angel}, omitting that for the demonic counterpart.   
    
    We use the \ruleref{rule:while}{While} rule. For all $n$, take 
    \begin{align*}
        \varphi_n &\triangleq (\exists u : U\setminus\{\0\}.\: {\ov P}^{(u)}) \oplus \alwaystot b &
        \psi_n &\triangleq \alwaystot(\neg b) &
        \zeta_n &= \exists u : U.\: \diverge{u} \\
        && \psi_\infty &\triangleq \top &
        \zeta_\infty &\triangleq \exists u : U\setminus \{\0\}.\: \diverge{u}
    \end{align*}
    Take any family of weighting functions $(m_n)_{n \in \N}$. 
    \begin{itemize}
        \item If $m_n \vDash \psi_n$ for all $n$, it is trivial that $\sum_{n \in \N} m_n \vDash \top$. So $(\varphi_n)_{n \in \N} \rightsquigarrow \psi_\infty$.
        \item Suppose $m_n \vDash \varphi_n \oplus \psi_n \oplus \zeta_n$. Then, $m_n = p + q$ where $p \vDash \exists u : U\setminus\{\0\}.\: {\ov P}^{(u)} = \varphi_n$, so $|m_n| > |p| > 0$. This means that, for weight $v > 0$,
        \[
            \left(\inf_{n \in \N} |m_n \cdot \top|\right) \cdot \eta(\nonterm) \:\:= \:\: v \cdot \eta(\nonterm) \:\:\vDash\:\: \exists u : U\setminus\{\0\}. \diverge{u}
        \]
        Thus, $(\varphi_n \oplus \psi_n \oplus \zeta_n)_{n \in \N} \Uparrow \zeta_\infty$. 
    \end{itemize}
    Since $\ov P \vDash b$, we know $\varphi_n = \exists u : U\setminus\{\0\}.\: {\ov P}^{(u)} \oplus \alwaystot b \vDash b$. Clearly, $\psi_n \vDash \neg b$ and $\zeta_n \vDash \div$.

    Note that we can partition $\top = \alwaystot b \oplus \alwaystot(\neg b) \oplus \exists u : U.\: \diverge{u}$. So, we have
    \begin{align*}
        \varphi_n \oplus \psi_n \oplus \zeta_n 
        &= (\exists u : U\setminus\{\0\}.\: {\ov P}^{(u)} \oplus \alwaystot b) \oplus \alwaystot(\neg b) \oplus \exists u : U.\: \diverge{u} \\
        &= \exists u : U\setminus\{\0\}.\: {\ov P}^{(u)} \oplus \top \\
        &=\sometimes P \\
        \psi_\infty \oplus \zeta_\infty
        &= \exists u : U\setminus\{\0\}.\: \diverge{u} \oplus \top 
        \:\:=\:\: \sometimespart\fls
    \end{align*} 
    Moreover, $\sometimes P \land \always b = \varphi_n \oplus \zeta_n$. The derivation is as follows:
    \begin{mathpar}
        \inferrule*[right=Consequence]
            {\inferrule*[Right=While]
                {\inferrule*[Right=Consequence]
                    {\inferrule*[Right=\autoref{lemma:lisbon-triple-equiv}]
                        {\triple{\ov P}{C}{\sometimes P}}
                        {\triple{\sometimes P}{C}{\sometimes P}}
                    }
                    {\triple{\sometimes P \land \always b}{C}{\sometimes P}}
                }
                {\triple{\sometimes P}{\whl{b}{C}}{\sometimespart\fls}}
            }
            {\triple{\ov P}{\whl{b}{C}}{\sometimespart\fls}}
    \end{mathpar}
\end{proof}

\section{Additional Case Studies}

\subsection{Proving Nontermination}

\begin{figure}[t]

\begin{align*}
      \textsc{Nt} \triangleq \begin{cases}     
      & x := 1 \seq y := 2 \seq \\
    & \code{while $x + y > 1$ do} \\
    & \qquad x := 3 - x \seq \\
    & \qquad y := 3 - y \seq
       \end{cases}
&\qquad 
    \textsc{MallocDiv} \triangleq \begin{cases}
        & \code{byte}* p \seq \\
        & p :=\: (\code{byte}*)\: \code{malloc}(\code{size}+16) \seq \\
        & \code{while $p$ = NULL do} \\
        & \qquad p :=\: (\code{byte}*)\: \code{malloc}(\code{size}+16) \seq\\
    \end{cases}
\end{align*}
\caption{Two non-terminating programs.}\label{fig:ntprograms}
\end{figure}
\cite{raad2024nontermproving} advocated the need for formal methods to prove nontermination, as many industrial codebases have bugs that cause programs to diverge and are difficult to discover with testing. Here, we use some of their examples to demonstrate how \TOL can be used for nontermination proving.
Consider the \textsc{Nt} program in \autoref{fig:ntprograms}. To show that this program indeed always diverges, we make use of the \ruleref{rule:quasi-invariant}{QInv-Demon} rule to derive the triple  $\ob{\ov\tru} \textsc{Nt} \ob{ \always \fls}$.
\begin{align*}
    & \ob{\ov\tru} \\
    & x := 1 \seq y := 2 \seq \\
    & \ob{\ov{x = 1 \land y = 2}} \\
    & \implies \textcolor{blue}{\langle \ov{x + y = 3} \rangle} \\
    & \qquad x := 3 - x \seq \\
    & \qquad \ob{\ov{3 - x + y = 3}} \qquad // \textsc{Assign} \\
    & \qquad y := 3 - y \seq \\
    & \qquad \ob{\ov{(3 - x) + (3 - y) = 3}} \qquad // \textsc{Assign} \\
    & \qquad \implies \ob{\ov{x + y = 3}} \\
    & \qquad \implies \textcolor{blue}{\langle \always(x + y = 3) \rangle} \\
    & \textcolor{red}{\langle \always\fls \rangle}
\end{align*}
As a second example, we will use the rule \ruleref{rule:quasi-invariant}{QInv-Angel} to establish the \textit{possibility} of nontermination in the presence of branching. Consider a loop involving the \textsf{malloc} function in C as in the program   \textsc{MallocDiv} in \autoref{fig:ntprograms}.


  \textsc{MallocDiv} allocates memory in a loop, iterating until a successful call to \textsf{malloc} and $p$ is set to a non-null value. Crucially, \textsf{malloc} \textit{may} fail each time, giving rise to a divergent execution. We can thus see that the assertion $p = \textsf{NULL}$ is a (angelic) quasi-invariant for this loop; we have 
\begin{mathpar}
    \inferrule*[right=\ruleref{rule:quasi-invariant}{QInv-Angel}]
        {\triple{\ov{p = \code{NULL}}}{p :=\: (\code{byte}*)\: \code{malloc}(\code{size}+16)}{\sometimes(p = \textsf{NULL})}}
        {\triple{\ov{p = \code{NULL}}}{\whl{p = \code{NULL}}{\ldots}}{\sometimespart\fls}}
\end{mathpar}
And thus \textsc{MallocDiv} may diverge. Similar proofs of \textit{guaranteed nontermination} can be derived using \ruleref{rule:quasi-invariant}{QInv-Demon}.

\subsection{Full Proof of Quicksort Partition}
\label{app:partition}

\begin{figure}[!t]
\begin{align*}
    & \ob{\ov\tru} \\
    & i := 0 \seq j := n \seq \\
    & \ob{\ov{i = 0 \land j = n}}
    \implies \textcolor{orange}{\langle \ov{\alpha(i) \land \beta(j) \land j - i \geq 0} \rangle} \\
    & \whl{i \leq j}{} \\
    & \qquad \ob{\ov{\alpha(i) \land \beta(j) \land i \leq j \land j - i = m}} \implies \textcolor{blue}{\langle \ov{\alpha(i) \land \beta(j) \land j - i = m} \rangle} \\
    & \qquad \code{if $A[i] \leq p$ then} \\
    & \qquad\qquad \ob{\ov{\alpha(i) \land \beta(j) \land A[i] \leq p \land j - i = m}} 
    \implies \ob{\ov{\alpha(i + 1) \land \beta(j) \land j - i = m}} \\
    & \qquad\qquad i := i + 1 \seq \\
    & \qquad\qquad \ob{\ov{\alpha(i) \land \beta(j) \land j - i = m - 1} }
    \qquad //\:\: \cruleref{Assign} \\
    & \qquad \code{else if $A[j] \geq p$ then} \\
    & \qquad\qquad \ob{\ov{\alpha(i) \land \beta(j) \land A[j] \geq p \land j - i = m}}
    \implies \ob{\ov{\alpha(i) \land \beta(j-1) \land j - i = m}} \\
    & \qquad\qquad j := j - 1 \seq \\
    & \qquad\qquad \ob{\ov{\alpha(i) \land \beta(j) \land j - i = m - 1} }
    \qquad //\:\: \cruleref{Assign} \\
    & \qquad \code{else} \\
    & \qquad\qquad \ob{\ov{\alpha(i) \land \beta(j) \land A[i] > p \land A[j] < p \land j - i = m}} \\
    & \qquad\qquad \code{swap}(A, i, j) \seq \\
    & \qquad\qquad \ob{\ov{\alpha(i) \land \beta(j) \land A[j] > p \land A[i] < p \land j - i = m}} \\
    & \qquad\qquad \implies \ob{\ov{\alpha(i+1) \land \beta(j-1) \land j - i = m}} \\
    & \qquad\qquad i := i + 1 \seq j := j - 1 \\
    & \qquad\qquad \ob{\ov{\alpha(i) \land \beta(j) \land j - i = m - 2} }
    \qquad //\:\: \cruleref{Assign} \\
    & \qquad \ob{\ov{\alpha(i) \land \beta(j) \land j - i < m}} \qquad //\:\: \ruleref{rule:if}{If} \\
    & \qquad \implies \textcolor{blue}{\langle \alwaystot(\alpha(i) \land \beta(j) \land j - i < m) \rangle} \\
    & \textcolor{orange}{\langle \alwaystot(\alpha(i) \land \beta(j) \land i > j) \rangle}
    \qquad //\:\: \textsc{Hoare-Variant}
\end{align*}
\caption{Decorated proof of the \textsc{Partition} program.}
\label{fig:partition}
\end{figure}
\fi
    
\end{document}